\newcommand{\N}{\mathbb{N}}
\newcommand{\R}{\mathbb{R}}
\newcommand{\Z}{\mathbb{Z}}
\newcommand{\V}{\mathbb{V}}
\newcommand{\av}{$\alpha$-vector\xspace}
\newcommand{\dv}{dist-value\xspace}
\newcommand{\dotp}[2]{\langle#1,#2\rangle}
\newcommand{\ignore}[1]{}
\newcommand{\conv}{\mathrm{conv}}
\newcommand{\poly}{\mathrm{poly}}
\newcommand{\Caratheodory}{Carath\'eodory\xspace}
\newcommand{\Barany}{B\'ar\'any\xspace}
\newcommand{\Jeronimo}{Jer\'onimo\xspace}
\newcommand{\Matousek}{Matou\v{s}ek\xspace}
\DeclarePairedDelimiter{\ceil}{\lceil}{\rceil}
\DeclarePairedDelimiter{\floor}{\lfloor}{\rfloor}
\def\cc#1{\mathsf{#1}}
\def\cUEOPL{\ensuremath{\cc{UEOPL}}\xspace}
\def\TFNP{\ensuremath{\cc{TFNP}}\xspace}
\def\PPA{\ensuremath{\cc{PPA}}\xspace}
\def\PPAD{\ensuremath{\cc{PPAD}}\xspace}
\def\PLS{\ensuremath{\cc{PLS}}\xspace}
\def\CLS{\ensuremath{\cc{CLS}}\xspace}
\def\NP{\ensuremath{\cc{NP}}\xspace}
\def\coNP{\ensuremath{\cc{coNP}}\xspace}
\def\problem#1{{\scshape #1}}
\def\UEOPL{\problem{UniqueEOPL}\xspace}
\def\GHS{\problem{Alpha-HS}\xspace}
\def\HS{\problem{Ham-Sandwich}\xspace}
\renewcommand{\S}{\mathbb{S}}
\renewcommand{\P}{\mathbb{P}}
\renewcommand{\V}{\mathbb{V}}
\crefname{Step}{Step}{Steps}
\newtheorem{theorem}{Theorem}
\newtheorem{definition}[theorem]{Definition}
\newtheorem{lemma}[theorem]{Lemma}
\theoremstyle{plain}
\newtheorem{claim}[theorem]{Claim}
\newtheorem{remark}[theorem]{Remark}
\title{Computational Complexity of the \texorpdfstring{$\alpha$}{a}-Ham-Sandwich Problem\thanks{Supported in
		part by ERC StG 757609.}}
\author{
	Man-Kwun Chiu\thanks{Institut f\"ur Informatik, 
		Freie Universit\"at Berlin. \texttt{[chiumk,arunich,mulzer]@inf.fu-berlin.de}.
	}
	\and 
	Aruni Choudhary\footnotemark[2]
	\and
	Wolfgang Mulzer\footnotemark[2]
}
\begin{document}
	
\maketitle
\begin{abstract}
The classic Ham-Sandwich theorem states
that for any $d$ measurable sets in $\R^d$, there 
is a hyperplane that bisects them simultaneously. 
An extension by \Barany, Hubard,
and \Jeronimo\,[DCG\,2008] states that if 
the sets are convex and \emph{well-separated},
then for any given 
$\alpha_1, \dots, \alpha_d \in [0, 1]$,
there is a unique oriented hyperplane that cuts off 
a respective fraction $\alpha_1, \dots, \alpha_d$ 
from each set. Steiger and Zhao\,[DCG\,2010] proved 
a discrete analogue of this theorem, which we call
the \emph{$\alpha$-Ham-Sandwich theorem}.
They gave an algorithm to find the hyperplane
in time $O(n (\log n)^{d-3})$, 
where $n$ is the total number of input points.
The computational complexity of this search problem 
in high dimensions is open, quite unlike 
the complexity of the Ham-Sandwich problem, which 
is now known
to be \PPA-complete (Filos-Ratsikas and Goldberg [STOC 2019]).

Recently, Fearley, Gordon, Mehta, and Savani\,[ICALP\,2019]
introduced a new sub-class of \CLS (Continuous Local Search) called
\emph{Unique End-of-Potential Line} (\cUEOPL). This class captures problems
in \CLS that have unique solutions.
We show that for the $\alpha$-Ham-Sandwich theorem, 
the search problem of finding the dividing hyperplane lies in \cUEOPL.
This gives the first non-trivial containment of the 
problem in a complexity class and places it in the 
company of classic search problems
such as finding the fixed point of a contraction map, 
the unique sink orientation problem
and the $P$-matrix linear complementarity problem.
	
\end{abstract}

\section{Introduction}
\label{section:introduction}

\subparagraph*{Motivation and related work.}
The Ham-Sandwich Theorem~\cite{st-sandwich} 
is a classic result about partitioning 
sets in high dimensions: for any $d$ measurable sets 
$S_1, \dots, S_d \subset \R^d$ in $d$ dimensions,
there is an oriented hyperplane $H$ that  
simultaneously \emph{bisects} $S_1,\dots,S_d$.
More precisely, if $H^{+},H^{-}$ are the
closed half-spaces bounded by $H$, then for $i = 1, \dots, d$,
the measure of 
$S_i \cap H^+$ equals the measure of $S_i \cap H^-$. 
The traditional proof goes through the Borsuk-Ulam 
Theorem~\cite{mt-bubook}.
The Ham-Sandwich Theorem is a
cornerstone of geometry and topology, 
and it has found applications in other
areas of mathematics,
e.g., for the study of majority rule voting and the analysis 
of the stability of bicameral legislatures in 
social choice theory~\cite{cm-social}.

Let $[n] = \{1,\dots,n\}$. The \emph{discrete}
Ham-Sandwich 
Theorem~\cite{mt-bubook,lms-hs} states that for any 
$d$ finite point sets 
$P_1,\dots,P_d \subset \R^d$ in $d$ dimensions, 
there is an oriented hyperplane $H$ such that 
$H$ bisects each $P_i$, i.e., for 
$i \in[d]$, we have 
$\min \{|P_i \cap H^+|, |P_i \cap H^-|\} \ge  \left\lceil |P_i|/2 \right\rceil $. 
We denote the associated search problem as \HS.
Lo, \Matousek, and 
Steiger~\cite{lms-hs} gave an $n^{O(d)}$-time algorithm for \HS.
They also provided a linear-time algorithm 
for points in $\R^3$, under additional constraints.

There are many alternative and more 
general variants of both the continuous and
the discrete Ham-Sandwich Theorem.
For example, \Barany and \Matousek~\cite{bm-fans} derived 
a version where measures in the plane can be divided 
into any (possibly different) ratios by \emph{fans} instead of 
hyperplanes (lines). A discrete variant of this 
result was given by Bereg~\cite{bereg-fans}.
Schnider~\cite{schnider-wedge} studied a generalization 
in higher dimensions.  Recently Barba, Pilz, and 
Schnider~\cite{bps-pizza} showed that four measures 
in the plane can be bisected with two lines.
Zivaljevi\'{c} and Vre\'{c}ica~\cite{zv-extension} 
proved a result that interpolates between the Ham-Sandwich Theorem
and the Centerpoint Theorem~\cite{rado-centerpoint}, of which
there is also a no-dimensional version~\cite{cm-tverberg}.
Schnider~\cite{schnider-hsct} presented a generalization 
based on this result among others.

Here, we focus on a version that allows for
dividing the sets into arbitrary given ratios instead 
of simply bisecting them. The sets $S_1,\dots,S_d \subset \R^d$ 
are \emph{well-separated} if every selection 
of them can be strictly separated from the others by a 
hyperplane. \Barany, Hubard, and \Jeronimo~\cite{bhj-ghs}
showed that if $S_1, \dots, S_d$ are well-separated and convex, 
then for any given reals $\alpha_1,\dots,\alpha_d \in [0,1]$, 
there is a unique hyperplane that divides 
$S_1,\dots,S_d$ in the ratios $\alpha_1,\dots,\alpha_d$, 
respectively. Their proof goes through Brouwer's Fixed 
Point Theorem. Steiger and Zhao~\cite{sz-ghs} 
formulated a discrete version.
In this setup, $S_1, \dots, S_d$ are finite point sets.
Again, we need that the (convex hulls of
the) $S_i$ are well-separated.
Additionally, we require that the $S_i$ 
follow a weak version of general position.
Let $\alpha_1, \dots, \alpha_d \in \N$ be $d$ integers with
$1 \leq \alpha_i\leq |S_i|$, for $i \in [d]$.
Then, there is a unique oriented hyperplane $H$
that passes through one point from each $S_i$ and has 
$|H^{+}\cap S_i| =\alpha_i$, for $i \in [d]$~\cite{sz-ghs}.
In other words, $H$ simultaneously cuts off $\alpha_i$ points
from $S_i$, for $i \in [d]$.
This statement does not necessarily hold if the sets are not well-separated, 
see Figure~\ref{figure:not-ws} for an example.
\begin{figure}
\centering
\includegraphics[width=0.5\textwidth,page=1]{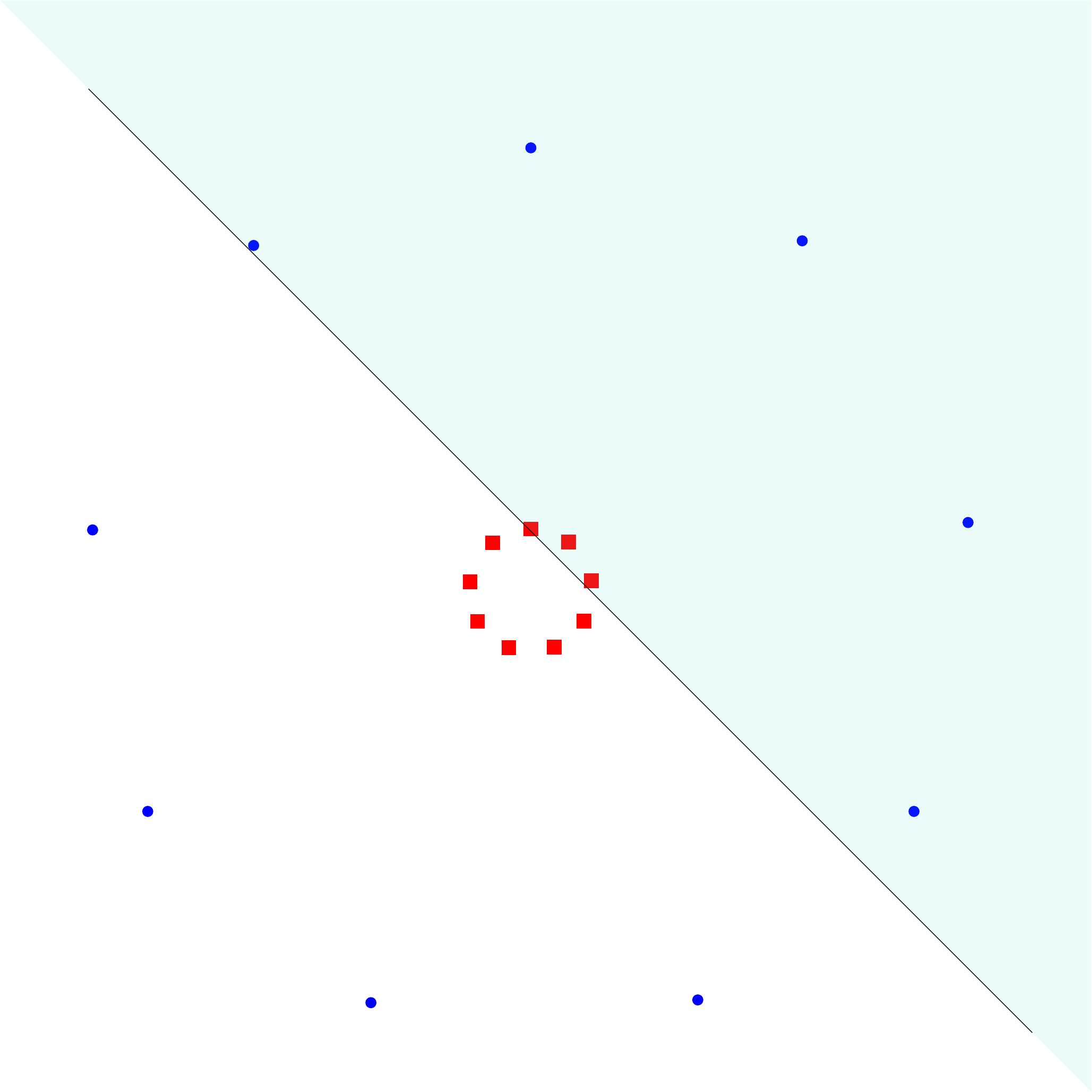}
\caption{The red (square) and the blue (round) point
sets are not well-separated. 
Every halfplane that contains three red points must contain at least five blue points.
Thus, there is no halfplane that contains exactly three red and 
three blue points.}
\label{figure:not-ws}
\end{figure}

Steiger and Zhao called their result the 
\emph{Generalized Ham-Sandwich Theorem},
yet it is not a strict generalization of the classic Ham-Sandwich Theorem.
Their result requires that the point sets obey well-separation and weak 
general position,
while the classic theorem always holds without these assumptions.
Therefore, we call this result the \emph{$\alpha$-Ham-Sandwich theorem}, for 
a clearer distinction.
Set $n = \sum_{i \in [d]} |S_i|$.
Steiger and Zhao gave an algorithm that computes the dividing hyperplane 
in $O\left(n(\log n)^{d-3}\right)$ time, which is exponential in $d$.
Later, Bereg~\cite{bereg-algo} improved this 
algorithm to achieve a running time of $n2^{O(d)}$, which is linear in $n$ but
still exponential in $d$.
We denote the associated computational search problem of finding 
the dividing hyperplane as \GHS.  

No polynomial algorithms are known for \HS and for \GHS if 
the dimension is not fixed, and the notion of approximation 
is also not well-explored. 
Despite their superficial similarity, it is not immediately
apparent whether the two problems are comparable in terms of their complexity.
Due to the additional requirements on an input for \GHS, an instance of
\HS may not be reducible to \GHS in general.

Since a dividing hyperplane for \GHS is guaranteed to exist 
if the sets satisfy the conditions of well-separation
and (weak) general position, \GHS is a total search problem.
In general, such problems are modelled by the complexity class 
\TFNP (Total Function Nondeterministic Polynomial)
of \NP-search problems that always admit a solution.
Two popular subclasses of \TFNP, originally defined by 
Papadimitriou~\cite{papadi-main},
are \PPA (Polynomial Parity Argument) its sub-class \PPAD. 
These classes contain total search problems where the
existence of a solution is based on a parity argument in 
an undirected or in a directed graph, respectively.
Another sub-class of \TFNP is \PLS (polynomial local search). 
It models total search problems where the solutions can be obtained as
minima in a local search process, while the number of 
steps in the local search may be exponential in the input size.
The class \PLS was introduced by Johnson, Papadimitriou, and 
Yannakakis~\cite{jpy-pls}.
A noteworthy sub-class of $\PPAD \cap \PLS$ is \CLS (continuous local 
search)~\cite{dp-cls}. It models similar local search problems
over a continuous domain using a continuous potential function.

Up to very recently, these complexity classes have mostly
been studied in the context of algorithmic game theory.
However, there have been increasing efforts towards mapping
the complexity landscape of existence theorems in high-dimensional
discrete geometry. Computing an approximate 
solution for the search problem associated with 
the Borsuk-Ulam Theorem is in \PPA.
In fact, this problem is complete for this class.
The discrete analogue of the Borsuk-Ulam Theorem, Tucker's Lemma~\cite{tucker}, 
is also  \PPA-complete~\cite{abb-tucker}.
Therefore, since the traditional proof of the Ham-Sandwich Theorem
goes through the Borsuk-Ulam Theorem, it follows that
\HS lies in \PPA.
In fact, Filos-Ratsikas and Goldberg~\cite{fg-hamsandwich} recently showed that 
\HS is complete for \PPA.
The (presumably smaller) class \PPAD is associated with 
fixed-point type problems: computing an approximate 
Brouwer fixed point is a prototypical complete problem for \PPAD.
The discrete analogue of Brouwer's Fixed Point Theorem, 
Sperner's Lemma, is also complete for \PPAD.
In a celebrated result, the relevance of \PPAD for
algorithmic game theory was made clear
when it turned out that computing a Nash-equilibrium in a
two player game is \PPAD-complete~\cite{cdt-nash}.
In discrete geometry, finding a solution to the Colorful 
\Caratheodory problem~\cite{barany-cc}
was shown to lie in the intersection 
$\PPAD \cap \PLS$~\cite{mmss-cc,MulzerSt18}.
This further implies that finding a \emph{Tverberg} partition (and
computing a centerpoint) also lies in the 
intersection~\cite{tverberg-original,sarkaria,dgmm-survey}.
The problem of computing the (unique) fixed point of a contraction 
map is known to lie in \CLS\cite{dp-cls}.

Recently, at ICALP~2019, Fearley, Gordon, Mehta, and Savani defined a sub-class 
of \CLS that represents a family of total search problems with 
unique solutions~\cite{fgms-ueopl}.
They named the class \emph{Unique End of Potential Line} (\cUEOPL) and 
defined it through the canonical complete problem \UEOPL.
This problem is modelled as a directed graph.
There are polynomially-sized Boolean circuits that compute
the successor and predecessor of each node, 
and a potential value that always increases on a directed path.
There is supposed to be only a single vertex with no predecessor (\emph{start of line}).
Under these conditions, there is a unique path in the graph that
ends on a vertex (called \emph{end of line}) with the highest potential along the path.
This vertex is the solution to \UEOPL.
Since the uniqueness of the solution is guaranteed only under certain assumptions,
such a formulation is called a \emph{promise} problem.
Since there seems to be no efficient way to verify the assumptions,
the authors allow two possible outcomes of the
search algorithm: 
either report a correct solution, or provide any solution that 
was found to be in violation of 
the assumptions. This formulation turns \UEOPL into a \emph{non-promise} problem and places it in \TFNP, since a correct solution is bound
to exist when there are no violations, and otherwise a violation can 
be reported as a solution.
Fearley et al.~\cite{fgms-ueopl} also introduced the concept of a \emph{promise-preserving} reduction between two problems $A$ and $B$, 
such that if an instance of $A$ has no violations, then the reduced instance of $B$ is also free of violations.
This notion is particularly meaningful for non-promise problems.

\subparagraph*{Contributions.}
We provide the first non-trivial containment in 
a complexity class for the
$\alpha$-Ham-Sandwich problem by locating it in \cUEOPL.
More precisely, we formulate \GHS as a non-promise problem
in which we allow for both valid solutions representing
the correct dividing hyperplane, as well as violations 
accounting for the lack of well-separation 
and/or (weak) general position of the input point sets.
A precise formulation of the problem is given in Definition~\ref{def:GHS} in Section~\ref{section:prelim}.
We then show a promise-preserving reduction from \GHS 
to \UEOPL.
This implies that \GHS lies in \cUEOPL, and hence in $\CLS \subseteq \PPAD \cap \PLS$.
See Figure~\ref{figure:classes} for a pictorial description.

\begin{figure}
\centering
\includegraphics[width=0.6\textwidth]{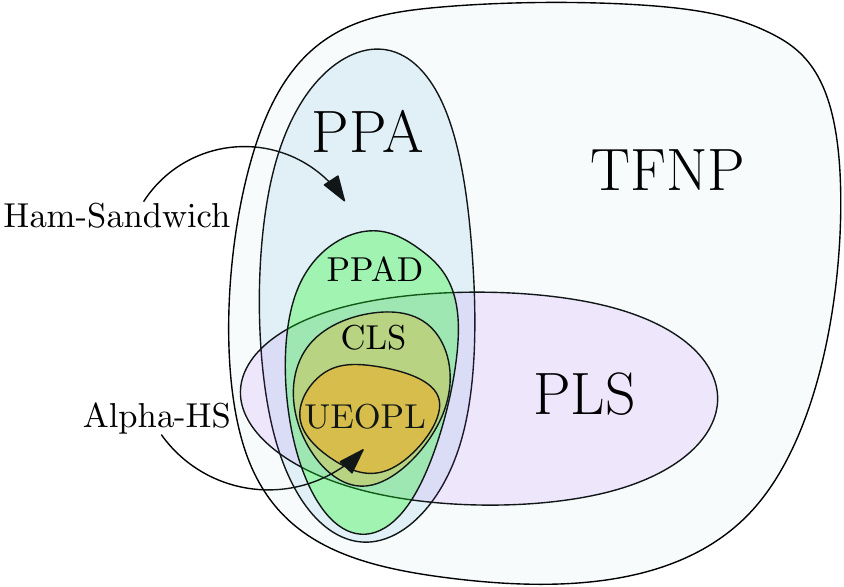}
\caption{The hierarchy of complexity classes.}
\label{figure:classes}
\end{figure}
It is not surprising to discover that \GHS lies in 
\PPAD, since the proof of the continuous version in~\cite{bhj-ghs}
was based on Brouwer's Fixed Point Theorem.
The observation that it also lies in \PLS is new and noteworthy, 
putting \GHS into the reach of local search algorithms.
In contrast, given our current understanding of total
search problems, it is unlikely that the
problem \HS would be in \PLS.

Since \GHS lies in $\PPAD \subseteq \PPA$, it is 
computationally easier than \HS, which is \PPA-complete.
This implies the existence of a polynomial-time reduction from \GHS to \HS.
A reduction in the other direction is unlikely.
It thus turns out that
well-separation brings down the complexity
of the problem by a significant amount.

Often, problems in \TFNP come in the guise of a 
polynomial-size Boolean circuit with some property. 
In contrast, \GHS is a purely geometric problem
that has no circuit in its problem definition.
This is the second problem in \cUEOPL apart from the $P$-Matrix Linear 
complementarity problem and one of the few in \CLS that does not have a description
in terms of circuits.

Our local-search formulation is based 
on the intuition of rotating a hyperplane 
until we reach the desired solution.
We essentially start with a hyperplane that is tangent 
to the convex hull of each input set, and we deterministically
rotate the hyperplane until it hits a new point.
This rotation can be continued whenever the hyperplane hits a new point,
until we reach the correct dividing hyperplane.
In other words, we can follow a local-search argument to find the solution.
We show that this sequence of rotations can be modelled 
as a canonical path in a grid graph,
and we give a potential function that guides the rotation and 
always increases along this path.
Every violation of well-separation and (weak) general position 
can destroy this path.
Furthermore, no efficient methods to verify these two
assumptions are known.
This poses a major challenge in handling the violations.
One of our main technical contributions is to handle the violation
solutions concisely.

An alternative approach would have been to look at the dual space of points
where we get an arrangement of hyperplanes.
The dividing hyperplane could then be found by looking 
at the correct level sets of the arrangement.
However, this approach has the problem that the orientations 
of the hyperplanes in the original 
space and the dual space are not consistent.
This complicates the arguments on the level sets, so
we found it more convenient to use our notion of rotating hyperplanes.
We show that we can maintain a consistent orientation 
throughout the rotation, and
an inconsistent rotation is detected as a violation
of the promise.

\subparagraph*{Outline of the paper.}
We discuss the background about the
$\alpha$-Ham-sandwich Theorem and \UEOPL in Section~\ref{section:prelim}.
In Section~\ref{section:ghs-main}, we describe our instance of \GHS
and give an overview of the reduction and violation-handling.
The technical details of the reduction are presented in Section~\ref{section:double-wedge} and Section~\ref{section:technical}.
We conclude in Section~\ref{section:conclusion}.

\section{Preliminaries}
\label{section:prelim}

\subsection{The \texorpdfstring{$\bm{\alpha}$-}{a-}Ham-Sandwich problem}
\label{subsection:ghs-prelim}

For conciseness, we describe the discrete version of $\alpha$-Ham-Sandwich Theorem~\cite{sz-ghs} here.
The continuous version~\cite{bhj-ghs} follows a similar formulation.

Let $P_1,\dots,P_d\subset \R^d$ be a collection of $d$ finite point sets.
Let $n_1,\dots,n_d$ denote the sizes of $P_1,\dots,P_d$, respectively.
For each $i\in [d]$ we say that the point set $P_i$ represents a unique color and let $P:=P_1\cup\dots\cup P_d$ denote the union
of all the points.
A set of points $\{p_1,\dots,p_m\}$ is said to be \emph{colorful} if there are no two
points $p_i,p_j$ both from the same color.
Indeed a colorful point set can have size at most $d$.

\subparagraph*{Weak general position.} 
We say that $P$ has \emph{very weak general position}~\cite{sz-ghs}, if 
for every choice of points $x_1 \in P_1, \dots, x_d\in P_d$, the affine hull of the set $\{x_1,\dots,x_d\}$ is a $(d-1)$-flat
and does not contain any other point of $P$.
This definition is sufficient for the result of Steiger and Zhao, where they simply call it as weak general position.
Of course, this definition of weak general position has no restriction on sets $\{x_1,\dots,x_d\}$
that contain multiple points from the same color.
To simplify our proofs we need a slightly stronger form of general position.
We say that $P$ has \emph{weak general position} if the above restriction also applies to
sets having exactly $d-1$ colors.
That means, each color may contribute at most one point to the set, except perhaps one color which is allowed to contribute two points.
A certificate for checking violations of weak general position is a set of $d+1$ points whose affine hull has dimension at most $d-1$, 
with at least $d-1$ colors in the set.
Testing whether a planar point set is in general position can be shown to be \NP-Hard,
using the result in~\cite{fknn-generalpos}. 
It is easy to see that when $d=2$, weak general position is equivalent to general position.

\subparagraph*{Well-separation.} 
The point set $P$ is said to be \emph{well-separated}~\cite{sz-ghs,bhj-ghs}, if 
for every choice of points $y_1\in \conv(P_{i_1}),\dots,y_k\in \conv(P_{i_k})$, where $i_1,\dots,i_k$ are distinct indices and $1\le k\le d+1$,
the affine hull of $\{ y_1,\dots,y_k \}$ is a $(k-1)$-flat.
An equivalent definition is as follows:
$P$ is well-separated if and only if for every disjoint pair of index sets $I,J\subset [d]$, there is hyperplane
that separates the set $\{ \cup_{ i\in I} P_i \} $ from the set $\{ \cup_{j\in J} P_j \}$ strictly.
Formally:

\begin{lemma}
\label{lemma:ws-formats}
Let $y_1,\dots,y_d$ be a colorful set of points in the corresponding $\conv(P_i)$.
The affine hull of $y_1,\dots,y_d$ has dimension $d-2$ or less if and only if there is a partition of $[d]$ into 
index sets $I,J$
such that $\conv\left(\{ \cup_{ i\in I} P_i \}\right) \cap \conv\left(\{ \cup_{ j\in J} P_j \} \right) \neq \emptyset$.

Given such a colorful set, the partition of $[d]$ can be computed in $\poly(n,d)$ time.
Vice-versa, given such a partition, the colorful set can be computed in $\poly(n,d)$ time.
\end{lemma}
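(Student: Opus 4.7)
My plan is to derive both implications from the same linear-algebraic fact: a colorful set $\{y_1,\dots,y_d\}$ has affine hull of dimension at most $d-2$ exactly when the $y_i$ admit a nontrivial affine dependence, and such a dependence, when split according to the signs of its coefficients, is the same piece of data as a point lying simultaneously in two convex hulls built from disjoint color classes. For the $(\Rightarrow)$ direction, I would take an affine dependence $\sum_i \lambda_i y_i = \orig$ with $\sum_i \lambda_i = 0$ and not all $\lambda_i$ zero, and define $I^{+}=\{i:\lambda_i>0\}$ and $I^{-}=\{i:\lambda_i<0\}$. Both are nonempty, so normalising by $S=\sum_{I^{+}}\lambda_i=\sum_{I^{-}}|\lambda_i|$ produces a point common to $\conv(\{y_i\}_{i\in I^{+}})$ and $\conv(\{y_i\}_{i\in I^{-}})$; since each $y_i\in\conv(P_i)$, this already witnesses that $\conv(\cup_{i\in I^{+}}P_i)\cap\conv(\cup_{i\in I^{-}}P_i)\neq\emptyset$. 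To upgrade $(I^{+},I^{-})$ to a partition of $[d]$, I would absorb the colors with $\lambda_i=0$ arbitrarily into, say, $I^{+}$: enlarging one side only enlarges its convex hull, so the intersection persists.

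For the $(\Leftarrow)$ direction, I would start from a shared point $q\in\conv(\cup_{i\in I}P_i)\cap\conv(\cup_{j\in J}P_j)$ and pull apart the two convex-combination representations of $q$ by color. For each $i\in I$, let $\mu_i$ be the total weight assigned to $P_i$ and, if $\mu_i>0$, let $q_i\in\conv(P_i)$ be the corresponding weighted average; if $\mu_i=0$, pick any $q_i\in\conv(P_i)$. Repeating this for $J$ with weights $\nu_j$ and centers $q_j$ yields $\sum_I \mu_i q_i - \sum_J \nu_j q_j = \orig$ together with $\sum_I \mu_i - \sum_J \nu_j = 0$. Since $\sum_I \mu_i = 1$, this is a nontrivial affine dependence on the colorful set $\{q_1,\dots,q_d\}$, and hence its affine hull has dimension at most $d-2$.

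The algorithmic claims are then routine. The coefficients $\lambda_i$ come from Gaussian elimination on a $(d+1)\times d$ homogeneous system, while the point $q$ and the weights $\mu_i,\nu_j$ arise from a single linear program whose variables are convex-combination coefficients on $\cup_I P_i$ and $\cup_J P_j$ and whose constraints force the two weighted sums to agree; any feasible solution can be regrouped color-by-color to produce the desired $q_i,q_j$. The main point of care I anticipate is the bookkeeping around colors that contribute zero weight: the affine dependence or the convex combination may legitimately ignore some $P_i$, yet a partition of $[d]$ must still name every color and the resulting set must still contain exactly one point per color. The uniform fix is to absorb such inactive colors into either side of the partition, and to assign them an arbitrary point of $\conv(P_i)$ on the other side of the equivalence; neither move destroys the intersection or the dependence, but verifying this carefully is the one place where a detail-level argument is genuinely needed.
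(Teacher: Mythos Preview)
Your proof is correct, and in the $(\Rightarrow)$ direction it coincides with the paper's: the paper simply cites Radon's theorem to obtain the bipartition, and what you wrote is exactly the standard proof of Radon's theorem via the sign pattern of an affine dependence.

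In the $(\Leftarrow)$ direction your argument is genuinely different from, and more elementary than, the paper's. The paper first finds a common point $z$ by linear programming, then invokes \Caratheodory's theorem to express $z$ as a convex combination of $d+1$ points of $\cup_{i\in I}P_i$, and finally runs an iterative edge-contraction procedure that merges same-color vertices of the resulting simplex while keeping $z$ inside, eventually reaching a low-dimensional simplex with one vertex per color; it repeats this for $J$ and pads with arbitrary points for absent colors. You bypass all of this: once the LP returns any convex combination for $z$ over $\cup_{i\in I}P_i$, you regroup the coefficients color by color to get $q=\sum_{i\in I}\mu_i q_i$ with each $q_i\in\conv(P_i)$ in one step, and similarly for $J$, yielding the affine dependence directly. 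Your approach is shorter, avoids \Caratheodory and the shrinking loop entirely, and makes the polynomial-time claim more transparent. The paper's route has the mild advantage that the intermediate objects are simplices with explicit vertex sets, which fits the geometric flavor of the rest of the paper; but for this lemma alone, your linear-algebraic shortcut is cleaner. Your handling of the zero-weight colors is also correct: padding the partition only enlarges one convex hull, and assigning an arbitrary $q_i$ with coefficient zero does not disturb the affine dependence, so the full set $\{q_1,\dots,q_d\}$ remains affinely dependent.
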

\begin{proof}

First we prove the reverse implication: we are given $y_1,\dots,y_d$, and the affine hull has dimension at most $d-2$.
By Radon's theorem~\cite{radon} there is a partition of $y_1,\dots,y_d$ into two sets $\{ y_{i_1},\dots,y_{i_{m}} \}$ 
and $\{ y_{j_1},\dots,y_{j_{d-m}} \}$ such that their convex hulls intersect in some point $z\in \R^d$.
Then for the sets $I=\{i_1,\dots,i_{m}\},J=\{j_1,\dots,j_{d-m}\}$, we have that 
$z\in \conv(\{ \cup_{ i\in I} P_i \}) \cap \conv(\{ \cup_{ j\in J} P_j \} )$.
Furthermore, the Radon partition (and hence the partition of $[d]$) can be computed in $O(d^{3})$ time by 
solving a system of linear equations~\cite{mt-lectures}.
This proves the first part of the claim.

For the other direction, we first use linear programming to find a point $z$ in the intersection of 
$\conv(\{ \cup_{ i\in I} P_i \})$ and $\conv(\{ \cup_{ j\in J} P_j \} )$. 
Using \Caratheodory's Theorem~\cite{caratheodory}, there exists a subset $Q=\{p_1,\ldots,p_{d+1}\}\subset \{ \cup_{ i\in I} P_i \}$
of $d+1$ points such that $z\in \conv(Q)$.
Let $c_1$ be the number of colors in $Q$. 
We shrink $Q$ so that it contains one point from the convex hull of each color.
Since $\conv(Q)$ is a $d$-simplex, $\conv(Q)$ contains an edge for every pair of points $(p_i, p_j)$. 
For $(p_i,p_j)$ with the same color $t$, we shrink the edge $(p_i,p_j)$ to a point $x_t \in \conv(P_t)$ such that 
$z$ stays inside $\conv\left(Q\cup\{x_t\}\setminus\{p_i,p_j\}\right)$.
In this process $Q:=Q\cup\{x_t\}\setminus\{p_i,p_j\}$ shrinks to a $(d-1)$-simplex. 
We repeat this process until all points of $Q$ with color $t$ are shrunk to a single point.
We continue this process for the remaining colors, ending at a simplex $S_1$ of dimension $c_1-1$.
We apply the same procedure for $z \in \conv(\{ \cup_{ j\in J} P_j \})$ to obtain another simplex $S_2$ of dimension $c_2-1$. 
Since $z \in S_1 \cap S_2$, the lowest dimension of a flat containing $S_1$ and $S_2$ is at most $c_1 + c_2 -2\le d-2$. 
For each color not in $S_1$ and $S_2$, we select an arbitrary point for each, 
then $S_1$, $S_2$ and the chosen points span a $(d-2)$-flat.
$Q$ can be computed in polynomial time~\cite{dgmm-survey,mt-lectures} along with the other steps in the construction.
Therefore, the colorful set can be computed in polynomial time, proving our claim.
\end{proof}

A certificate for checking violations of well-separation is a colorful set $\{ x_1,\dots,x_d \}$ whose affine hull
has dimension at most $d-2$.
Another certificate is a partition $I,J\subset[d]$ such that the convex hulls of the indexed sets are 
not separable.
Due to Lemma~\ref{lemma:ws-formats}, both certificates are equivalent and either can be converted to the other
in polynomial time.
To the best of our knowledge, the complexity of testing well-separation
is unknown.

Given any set of positive integers $\{ \alpha_1,\dots,\alpha_d \}$ satisfying $1\le \alpha_i\le n_i$, $i\in[d]$,
an \emph{$(\alpha_1,\dots,\alpha_d)$-cut} is an oriented hyperplane $H$ that contains one point from each color
and satisfies $|H^{+}\cap P_i|=\alpha_i$ for $i\in[d]$, where $H^{+}$ is the closed positive half-space defined by $H$.

\begin{theorem}[$\alpha$-Ham-Sandwich Theorem~\cite{sz-ghs}]
\label{thm:ghs}
Let $P_1, \ldots, P_d$ be finite, well-separated point sets in $\R^d$. 
Let $\alpha=(\alpha_1, \dots, \alpha_d)$ be a vector, where $\alpha_i \in [n_i]$ for $i\in [d]$.
\begin{enumerate}
\item If an $\alpha$-cut exists, then it is unique.

\item If $P$ has weak general position, then a cut exists
for each choice of $\alpha$, $\alpha_i \in [n_i]$.
\end{enumerate}
\end{theorem}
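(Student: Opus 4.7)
The plan is to treat the two parts of the theorem independently, using well-separation (through Lemma~\ref{lemma:ws-formats}) to drive uniqueness and an inductive rotation/sweep argument, coupled with weak general position, to establish existence.

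For the uniqueness claim I would argue by contradiction: suppose two distinct $\alpha$-cuts $H_1$ and $H_2$ both exist, passing through colorful sets $C_1, C_2 \subset P$ respectively. Since $H_1\ne H_2$, their intersection $L=H_1\cap H_2$ is a $(d-2)$-flat that partitions $\R^d$ into four open wedges. For each color $i$ the equality $|P_i\cap H_1^+|=|P_i\cap H_2^+|=\alpha_i$ forces a balance between two opposite wedges (modulo points on $H_1\cup H_2$). First I would use this balance, together with the fact that $C_1\cup C_2$ is colorful with multiplicity at most two per color and lies on $H_1\cup H_2$, to group the vertices of $C_1$ and $C_2$ into two families sitting on opposite sides of $L$. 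This yields a partition $I,J\subset[d]$ whose associated convex hulls intersect, which by Lemma~\ref{lemma:ws-formats} contradicts well-separation.

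For existence I would proceed by induction on $d$. The base case $d=1$ is immediate: any prescribed rank in $P_1$ is achieved by an appropriate threshold. In the inductive step I would single out the last color $P_d$, project $P_1,\ldots,P_{d-1}$ onto a generic transversal hyperplane, and apply the $(d-1)$-dimensional theorem to obtain a $(d-2)$-flat $F$ that simultaneously realizes the counts $\alpha_1,\ldots,\alpha_{d-1}$ and passes through one point from each of $P_1,\ldots,P_{d-1}$. Rotating an oriented hyperplane about $F$ then sweeps through the points of $P_d$; a discrete intermediate-value argument picks the rotation angle at which the count of $P_d$ on the positive side jumps through $\alpha_d$. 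Weak general position ensures that at every pivot event exactly one point of $P_d$ crosses the hyperplane, and well-separation ensures that $F$ itself does not meet $\conv(P_d)$, so the sweep is well-defined throughout.

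I expect the main obstacle to be the bookkeeping of orientations along the rotation. When the hyperplane pivots through a point of $P_d$, one has to argue simultaneously that no point of a color $i<d$ is spuriously lost from the hyperplane and that the global orientation of $H$ is preserved, so that $|H^+\cap P_i|$ changes only by the intended amount. The stronger form of weak general position stated in the preliminaries, which allows a single color to contribute two points to a coincidence, is precisely what rules out the bad simultaneous events. On the uniqueness side, the delicate step is converting the wedge-balance identities into the Radon-type certificate demanded by Lemma~\ref{lemma:ws-formats}; this requires a careful case analysis according to whether $C_1\cap C_2$ is empty or not, and whether the shared points lie on $L$ or off it.
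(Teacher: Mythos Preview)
The paper does not give its own proof of Theorem~\ref{thm:ghs}; it is quoted from Steiger and Zhao~\cite{sz-ghs}. That said, the paper's later machinery does amount to a proof: uniqueness is the content of Lemma~\ref{lem:not_separated_2} (two colorful hyperplanes with the same $\alpha$-vector yield a colorful $(d-2)$-flat, hence a violation of well-separation), and existence is the content of Lemma~\ref{lemma:grid-structure} (from any colorful hyperplane one can raise a single coordinate of the $\alpha$-vector by the $NextRotate$ procedure, so starting at the $(1,\dots,1)$-cut one reaches every target). So the relevant comparison is with those lemmas.

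Your uniqueness outline points in the right direction, and indeed the paper's argument also starts from the four wedges cut out by $H_1$ and $H_2$ and the balance $|P_i\cap(H_1^+\cap H_2^-)|=|P_i\cap(H_1^-\cap H_2^+)|$. But the step ``group $C_1\cup C_2$ into two families on opposite sides of $L$ and read off a Radon partition'' is not as light as you suggest. Look at Lemma~\ref{lem:not_separated_2}: already the base case $d=2$ splits into three genuinely different configurations (Figures~\ref{figure:ns2-1}--\ref{figure:ns2-3}), and the inductive step does \emph{not} proceed by a direct Radon argument but by projecting $P_2,\dots,P_d$ from a point of $\conv(P_1)\cap H_1\cap H_2$ onto a separating hyperplane and recursing. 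Your plan is missing this reduction, and I do not see how the wedge balance alone produces the partition $I,J$ without it.

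Your existence argument has a more serious gap. After you obtain the $(d-2)$-flat $F$ through one point of each $P_1,\dots,P_{d-1}$ and rotate a hyperplane about $F$, the counts $|H^+\cap P_i|$ for $i<d$ are \emph{not} preserved: points of $P_1,\dots,P_{d-1}$ off $F$ will cross $H$ during the rotation just as points of $P_d$ do. The projection-then-lift step does not help here; a $(d-2)$-flat in the projected $(d-1)$-space corresponds to a full hyperplane in $\R^d$, not to a $(d-2)$-flat about which one can rotate while freezing the first $d-1$ counts. The device that actually works, both in~\cite{sz-ghs} and in this paper's Lemma~\ref{lemma:grid-structure}, is to let the anchor \emph{change} during the sweep: whenever the rotating hyperplane meets a point of a color already present in the anchor, one swaps the two same-colored points and continues rotating, so that $\alpha_i$ for $i<d$ is restored at every step (see the $NextRotate$ procedure and the proof of Lemma~\ref{lemma:grid-structure}). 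Well-separation is then used, via Lemma~\ref{lem:inconsistent-orientation-0}, to guarantee that the orientation stays consistent across these swaps, which is exactly the ``bookkeeping of orientations'' you flag as delicate. Your fixed-anchor induction does not reach the target $\alpha$ and would need to be replaced by this dynamic-anchor rotation.
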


That means, every colorful $d$-tuple of $P$ corresponds to exactly one \av.
Steiger and Zhao~\cite{sz-ghs} also presented an algorithm to compute the cut in $O(n(\log n)^{d-3})$ time, where $n=\sum_{i=1}^{d} n_i$.
The algorithm proceeds inductively in dimension and employs a prune-and-search technique.
Bereg~\cite{bereg-algo} improved the pruning step to improve the runtime to $n2^{O(d)}$.

\subsection{Unique End of Potential Line}
\label{subsection:ueopl-prelim}

We briefly explain the \emph{Unique end of potential line} problem that was introduced in~\cite{fgms-ueopl}.
More details about the problem and the associated class can be found in the above reference.
\begin{definition}[from\cite{fgms-ueopl}]
\label{def:UEOPL}
Let $n,m$ be positive integers.
The input consists of
\begin{itemize}
\item a pair of Boolean circuits $\S, \P : \{0,1\}^n \rightarrow \{0,1\}^n$ such that $\P(0^n) = 0^n \neq \S(0^n)$, and 

\item a Boolean circuit $\V : \{0,1\}^n \rightarrow \{0,1,\ldots,2^m-1\}$ such that $\V(0^n)=0$,
\end{itemize}
each circuit having $\poly(n,m)$ size.
The \UEOPL problem is to report one of the following:
\renewcommand{\theenumi}{\bf(U\arabic{enumi})}
\begin{enumerate}
\setlength{\itemindent}{0.9cm}
\item \label{sol:U1}  A point $v \in \{0,1\}^n$ such that $\P(\S(v)) \neq v$. 
\end{enumerate}	
\vspace{-0.125in}
\renewcommand{\theenumi}{\bf(UV\arabic{enumi})}
\begin{enumerate}
\setlength{\itemindent}{0.9cm}
\item \label{sol:UV1} A point $v \in \{0,1\}^n$ such that $\S(v) \neq v$, $\P(\S(v)) = v$, and $\V(\S(v))-\V(v) \leq 0$.
\item \label{sol:UV2} A point $v \in \{0,1\}^n$ such that $\S(\P(v)) \neq v \neq 0^n$. 	
\item \label{sol:UV3} Two points $v,u \in \{0,1\}^n$ such that $v\neq u$, $\S(v) \neq v$, $\S(u) \neq u$, 
and either $\V(v) = \V(u)$ or $\V(v) < \V(u) < \V(\S(v))$.
\end{enumerate}
\end{definition}
The problem defines a graph $G$ with up to $2^{n}$ vertices.
Informally, $\S(\cdot),\P(\cdot),\V(\cdot)$ represent the \emph{successor}, \emph{predecessor} and \emph{potential} functions that act on each vertex in $G$.
The in-degree and out-degree of each vertex is at most one.
There is an edge from vertex $u$ to vertex $v$ if and only if $\S(u)=v$, $\P(v)=u$ and $\V(u)<\V(v)$.
Thus, $G$ is a directed acyclic path graph (line) along which the potential strictly increases.
The condition $\S(\P(x))\neq x$ means that $x$ is the start of the line, 
$\P(\S(x))\neq x$ means that $x$ is the end of the line, and $\P(\S(x))= x$ occurs when $x$ is neither.
The vertex $0^{n}$ is a given start of the line in $G$.

\ref{sol:U1} is a solution representing the end of a line.
\ref{sol:UV1}, \ref{sol:UV2} and \ref{sol:UV3} are violations.
\ref{sol:UV1} gives a vertex $v$ that is not the end of line, and the potential of $\S(v)$ is not strictly larger than that of $v$, which is a violation
of our assumption that the potential increases strictly along the line.
\ref{sol:UV2} gives a vertex that is the start of a line, but is not $0^{n}$.
\ref{sol:UV3} shows that $G$ has more than one line, which is witnessed by the fact that $v$ and $u$ cannot lie on the same line if they
have the same potential, or if the potential of $u$ is sandwiched between that of $v$ and the successor of $v$.
Under the promise that there are no violations, $G$ is a single line starting at $0^{n}$ and ending at a vertex that is the unique solution.
\UEOPL is formulated in the non-promise setting, placing it in the class \TFNP.

The complexity class \cUEOPL represents the class of problems that can be reduced in polynomial time to \UEOPL.
This has been shown to lie in \CLS in~\cite{fgms-ueopl} and contains three classical problems: finding the fixed point of a contraction map,
solving the P-Matrix Linear complementarity problem, and finding the unique sink of a directed graph (with arbitrary edge orientations) on the 
1-skeleton of a hypercube.

A notion of \emph{promise-preserving} reductions is also defined in~\cite{fgms-ueopl}.
Let $X$ and $Y$ be two problems both having a formulation that allows for valid and violation solutions.
A reduction from $X$ to $Y$ is said to be promise-preserving, if whenever it is promised that $X$ has no violations, then the reduced instance of $Y$ also
has no violations.
Thus a promise-preserving reduction to \UEOPL would mean that whenever the original problem is free of violations, then the reduced instance
always has a single line that ends at a valid solution.

\subsection{Formulating the search problem}
\label{subsection:ghs-formal}

We formalize the search problem for $\alpha$-Ham-Sandwich in a non-promise setting:
\begin{definition}[Alpha-HS]
\label{def:GHS}
Given $d$ finite sets of points $P = P_1 \cup \ldots \cup P_d$ in $\R^d$ and a vector $(\alpha_1, \ldots, \alpha_d)$ of positive integers such that 
$\alpha_i \leq |P_i|$ for all $i\in[d]$, the \GHS problem is to find one of the following:
\renewcommand{\theenumi}{\bf(G\arabic{enumi})}
\begin{enumerate}
	\setlength{\itemindent}{0.9cm}
\item \label{sol:G1} An $(\alpha_1,\ldots,\alpha_d)$-cut.
\end{enumerate}
\vspace{-0.125in}
\renewcommand{\theenumi}{\bf(GV\arabic{enumi})}
\begin{enumerate}
	\setlength{\itemindent}{0.9cm}
\item \label{sol:GV1} A subset of $P$ of size $d+1$ and at least $d-1$ colors that lies on a hyperplane.
\item \label{sol:GV2} A disjoint pair of sets $I,J \subset [d]$ such that $\conv(\{ \cup_{ i\in I} P_i \}) \cap \conv(\{ \cup_{ j\in J} P_j \} ) \neq \emptyset$.
\end{enumerate}
\end{definition}
Here a solution of type~\ref{sol:G1} corresponds to a solution representing a valid cut,
while solutions of type \ref{sol:GV1} and \ref{sol:GV2} refer to violations of weak general position and well-separation, respectively.
From Theorem~\ref{thm:ghs} we see that a valid solution is guaranteed if no violations are presented, which shows that \GHS is a total search problem.

\section{Alpha-HS is in UEOPL}
\label{section:ghs-main}

In this section we describe our instance of \GHS in more detail and briefly outline a reduction to \UEOPL.

\subparagraph*{Setup.}
The input consists of $d$ finite point sets $P_1,\dots,P_d\subset \R^d$ each representing a unique color, 
of sizes $n_1,\dots,n_d$, respectively, 
and a vector of integers $\alpha=(\alpha_1,\dots,\alpha_d)$ such that $\alpha_i \in[n_i]$ for each $i\in[d]$.
Let $k$ denote the number of coordinates of $\alpha$ that are not equal to one.
Without loss of generality, we assume that $\{ \alpha_1,\dots,\alpha_k \}$ are the non-unit entries in $\alpha$.
Let $P$ denote the union $P_1\cup\dots\cup P_d$.
For each $i\in[d]$ we define an arbitrary order $\prec_i$ on $P_i$.
Concatenating the orders $\prec_1,\prec_2,\dots,\prec_{d}$ in sequence
gives a global order $\prec$ on $P$.
That means, $p\prec q$ if $p\in P_i, q\in P_j$ and $i<j$ or 
$p,q\in P_j$ and $p\prec_j q$.

We follow the notation of~\cite{sz-ghs} to define the orientation of a 
hyperplane in $\R^d$ that has a non-empty intersection with each convex hull of $P_i$.
For any hyperplane $H$ passing via $\{ x_1\in \conv(P_1),\dots, x_d\in  \conv(P_d)\}$,
the normal is the unit vector $\hat{n}\in \R^d$ that satisfies
$\dotp{x_i}{\hat{n}}=t$ for some fixed $t\in\R$ and each $i\in[d]$, and
\[
\mathrm{det}
\begin{vmatrix}
x_1 & x_2 & \dots & x_d & \hat{n} \\
1 & 1 & \dots & 1 & 0
\end{vmatrix}
>0.
\]
The positive and negative half-spaces of $H$ are defined accordingly.
In~\cite[Proposition\,2]{bhj-ghs}, the authors show that the choice of $\hat{n}$ does not depend on the choice of $x_i\in \conv(P_i)$ for any $i$,
if the colors are well-separated. 
Notice that if the colors are not well-separated, then the dimension of the affine hull of $\{x_1,\ldots,x_d\}$ may be less than $d-1$.
This makes the value of the determinant above to be zero, so the orientation is not well-defined.

We call a hyperplane \emph{colorful} if it passes through exactly $d$ colorful points $p_1,\dots,p_d\subset P$.
Otherwise, we call the hyperplane \emph{non-colorful}.
There is a natural orientation for colorful hyperplanes using the definition above.
In order to define an orientation for non-colorful hyperplanes, one needs additional points from the convex hulls of unused colors 
on the hyperplane.
Let $H'$ denote a hyperplane that passes through points of $(d-1)$ colors.
Let $P_j$ denote the missing color in $H'$.
To define an orientation for $H'$, we choose a point from $\conv(P_j)$ that lies on $H'$ as follows.
We collect the points of $P_j$ on each side of $H'$, and choose the highest ranked points under the order $\prec_j$.
Let these points on opposite sides of $H'$ be denoted by $x$ and $y$.
Let $z$ denote the intersection of the line segment $xy$ with $H'$.
By convexity, $z$ is a point in $\conv(P_j)$, so we choose
$z$ to define the orientation of $H'$.
The intersection point $z$ does not change if $x$ and $y$ are interchanged,
giving a valid definition of orientation for $H'$.
We can also extend this construction to define orientations for
hyperplanes containing points from less than $d-1$ colors, 
but for our purpose this definition suffices.
The \emph{\av} of any oriented hyperplane $H$ is a $d$-tuple $(\alpha_1,\dots,\alpha_d)$ of integers
where $\alpha_i$ is the number of points of $P_i$ in the closed halfspace $H^{+}$ for $i\in [d]$.

\subsection{An overview of the reduction}
\label{subsection:reduction-overview}
We give a short overview of the ideas used in the reduction from \GHS to \UEOPL.
The details are technical and we defer them to Section~\ref{section:technical}.
We encourage the interested reader to go through the details of our reduction.

Our intuition is based on rotating a colorful hyperplane $H$ to another colorful hyperplane $H'$ through a sequence of local changes 
of the points on the hyperplanes such that the \av of $H'$ increases in some coordinate by one from that of $H$. 
We next define the rotation operation in a little more detail.
An \emph{anchor} is a colorful $(d-1)$-tuple of $P$ which spans a $(d-2)$-flat.
The following procedure takes as input an anchor $R$ and some point $p\in P\setminus R$
and determines the next hyperplane obtained by a rotation. 
The output is $(R',p')$, where $R'$ is an anchor and $p'\in P\setminus R'$ is some point.

\vspace{0.15in}
\noindent{\bf{Procedure}} $(R',p') = NextRotate(R,p)$
\begin{enumerate}
	\item Let $H$ denote the hyperplane defined by $R \cup \{p\}$ and $t_1$ be the missing color in $R$.
	\item If the orientation of $H$ is not well-defined, report a violation of weak general position and well-separation.
	\item Let $P^+_{t_1}$ be the subset of $P_{t_1}$ that lies in the closed halfspace $H^+$ and $P^-_{t_1}$ be the subset of $P_{t_1}$ that lies in the open halfspace $H^-$. Let $x \in P^+_{t_1}$ be the highest ranked point according to the order $\prec_{t_1}$ and $y \in P^-_{t_1}$ be the highest ranked point according to $\prec_{t_1}$. 
	\item If $p$ has color $t_1$ and $|P^+_{t_1}| = n_{t_1}$, report out of range.
	\item We rotate $H$ around the anchor $R$ in a direction such that the hyperplane is moving away from $x$ along the segment $xy$ until it hits some point $q \in P$.
	\item If the hyperplane hits multiple points at the same time, report a violation of weak general position.
	\item If $p'$ is not color $t_1$, set $R' := R \cup \{q\} \setminus \{r\}$ and $p'=r$, where $r$ is a point in $R$ with the same color as $p'$. Otherwise, set $R' = R$ and $p'=q$.
	\item \textsf{Return} $(R',p')$.
\end{enumerate}
Figure~\ref{figure:why-double} shows an application of this procedure, rotating $H_0$ to $H_4$ through $H_1,H_2,H_3$.
\begin{figure}
	\centering
	\hspace{0.5cm}
	\includegraphics[width=0.85\textwidth]{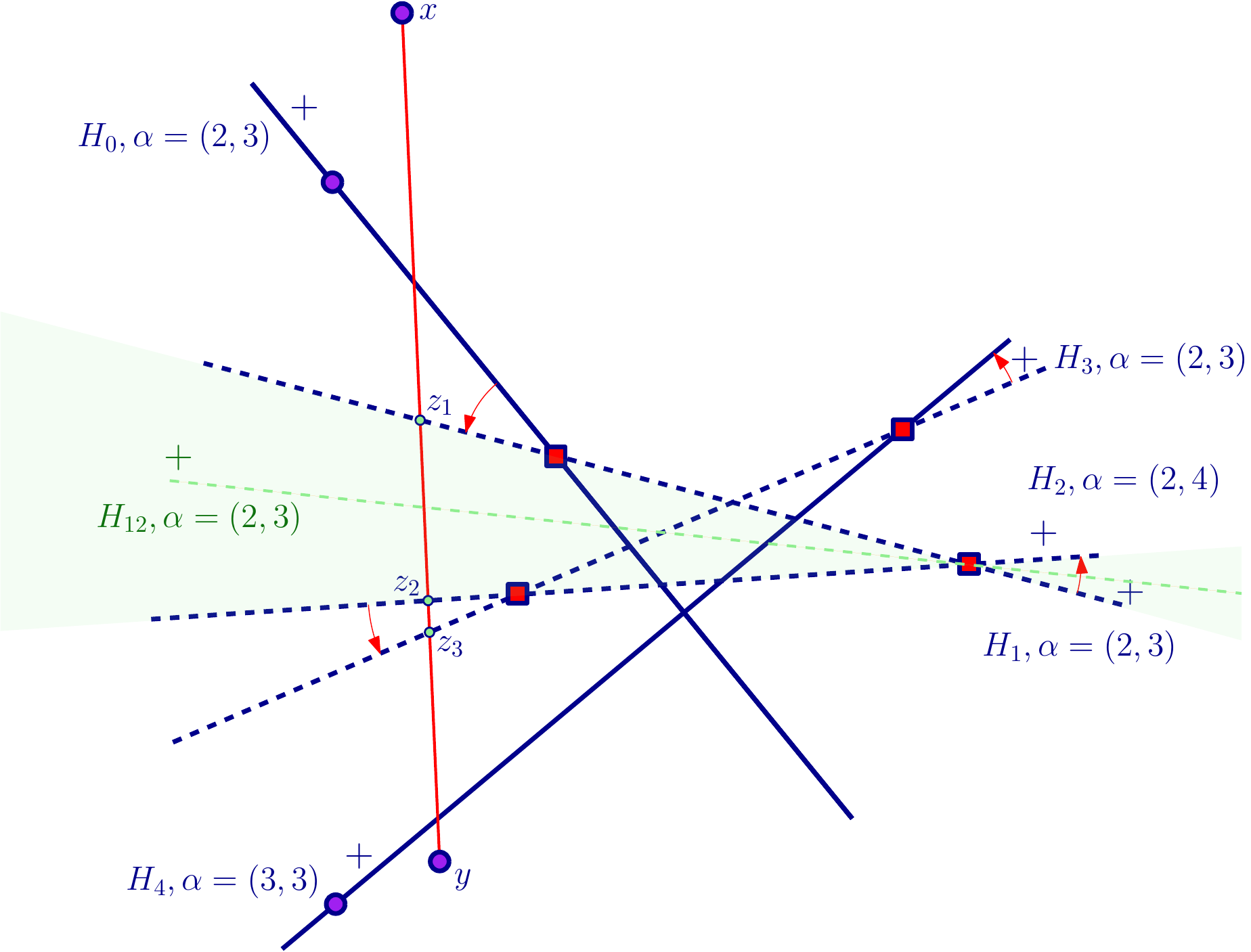}
	\caption{An example showing a sequence of rotations from $H_0$ to $H_4$ through $H_1,H_2,H_3$. Purple (disk) is the first color and red (square) is the second color. 
	This sequence represents a path between two vertices in the \UEOPL graph that is generated in the reduction.
	The double-wedge is shaded and its angular bisector $H_{12}$ has the desired \av.}
	\label{figure:why-double}
\end{figure}

This rotation function can be interpreted as a function that assigns each hyperplane to the next hyperplane.
The set of colorful hyperplanes can be interpreted as vertices in a graph with the rotation function determining the connectivity of the graph.

\subparagraph*{Canonical path.} 
Each colorful hyperplane $H$ is incident to a colorful set of $d$ points.
This set of points defines $d$ possible anchors, and each anchor can be used to rotate $H$ in a different fashion.
To define a unique sequence of rotations, we pick a specific order as follows:
first, we assume that the colorful hyperplane $H$ whose \av is $(1,\ldots,1)$ is given (we show later how this assumption can be removed). We start at $H$ and pick the anchor that excludes the first color, then apply a sequence of rotations until we hit another colorful hyperplane with \av $(2,1,\ldots, 1)$. Similarly, we move to a colorful hyperplane with \av $(3,1,\ldots,1)$ and so on until we reach $(\alpha_1,1,\ldots,1)$. Then, we repeat this for the other colors in order to reach $(\alpha_1,\alpha_2,1,\ldots,1)$ and so on until we reach the target \av. 
This pattern of $\alpha$-vectors helps in defining a potential function that strictly increases along the path.
We can encode this sequence of rotations as a unique path in the \UEOPL instance, and we call it \emph{canonical path}. 

A natural way to define the \UEOPL graph would be to consider hyperplanes as the vertices in the graph.
However, this leads to complications. 
Figure~\ref{figure:why-double} shows a rotation from $H_0$ to $H_4$, with {\av}s $(2,3)$ and $(3,3)$ respectively.
During the rotation, we encounter a hyperplane $H_2$ for which its \av is $(2,4)$, which differs from our desired sequence of $(2,3),\dots,(2,3),(3,3)$.
This makes it difficult to define a potential function in the graph that strictly increases along the path $v_{H_0},\dots,v_{H_4}$ where $v_{H_i}$
is the vertex representing hyperplane $H_i$.
One way to alleviate this problem is to not use $H_i$ as a vertex directly, but the \emph{double-wedge} that is traced out by the rotation from $H_{i}$
to $H_{i+1}$.
If the \av is now measured using the hyperplane that bisects the double-wedge, then we get the desired sequence of $(2,3),\dots,(2,3),(3,3)$.
See Figure~\ref{figure:why-double} for an example.

With additional overhead, the rotation function can be extended to double-wedges. 
This in turn also leads to a neighborhood graph where the vertices are the double-wedges and the rotations can be used to define the edges.
The graph is connected and has a grid-like structure that may be of independent interest. 
To simplify the exposition, we postpone the description of double-wedges and the associated graph to Section~\ref{section:double-wedge}.

\subparagraph*{Distance parameter and potential function.} 
The \av is not sufficient to define the potential function, since the sequence of rotations between two colorful hyperplanes may have the same \av. 
For instance, the bisectors of the rotations in $H_0,\dots,H_3$ in Figure~\ref{figure:why-double} all have the same \av.
Hence, we need an additional measurement in order to determine the direction of rotation that increases the \av. 

Similar to how we define the orientation for a non-colorful hyperplane, let $H$ denote a hyperplane that passes through points of $(d-1)$ colors. 
Let $P_j$ denote the missing color in $H$. 
Let $x,y\in P_j$ be the highest ranked points under $\prec_j$ in $H^+$ and $H^-$ respectively. Let $z$ denote the intersection of $xy$ and $H$. We define a distance parameter called \emph{\dv} of $H$ to be the distance $\|x-z\|$. 
In Figure~\ref{figure:why-double}, we can see that rotating from $H_0$ to $H_4$ sweeps the segment $xy$ in one direction, with the \dv of the hyperplanes increasing strictly. 
This is sufficient to break ties and hence determine the correct direction of rotation.
The precise statement is given in Lemma~\ref{lemma:d-value-strictly}. 
We can extend this definition to the domain of double-wedges.
We define a potential value for each vertex on the canonical path in \UEOPL using the sum of weighed components of \av and \dv for the tie-breaker. 

\subparagraph*{Correctness.} We show that if there are no violations, we can always apply {\bf{Procedure}} $NextRotate$ to increment the \av until we find the desired solution, which implies that the canonical path exists. If the input satisfies weak general position, we can see that the rotating hyperplane always hits a unique point in Step $5$, which may be swapped to form a new anchor in Step $7$.

The well-separation condition guarantees that the potential function always increases along the rotation.
Let $H_1,H_2$ denote a pair of hyperplanes that are the input and output of {\bf{Procedure}} $NextRotate$ respectively. Let $H$ denote any intermediate hyperplane during the rotation from $H_1$ to $H_2$ through the common anchor. Let $P_j$ be the color missing from the anchor and $x$ be the highest ranked point under $\prec_j$ in $H_1^+$.
We say that the orientation of $H_2$ (resp. $H$) is \emph{consistent} with that of $H_1$ if $x \in H_2^+$ (resp. $x \in H^+$). Lemma~\ref{lemma:consistent-orientation-0} shows that the orientations are always consistent when $H_1$ and $H_2$ are non-colorful hyperplanes even without the assumption of well-separation.

\begin{lemma}[consistency of orientation]
	\label{lemma:consistent-orientation-0}
	Assume that weak general position holds.
	Let $H_1,H_2$ be the input and output of {\bf{Procedure}} $NextRotate$ respectively. Let $H$ denote any intermediate hyperplane within the rotation. 
	The orientations of $H_1$ (resp. $H_2$) and $H$ are consistent when $H_1$ (resp. $H_2$) is a non-colorful hyperplane.
\end{lemma}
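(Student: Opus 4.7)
The plan is to establish consistency via a continuity argument on the oriented unit normal $\hat{n}(t)$ of $H(t)$ as the hyperplane rotates around the anchor $R$ from $H_1 = H(0)$ to $H_2 = H(1)$. I focus on the case where $H_1$ is non-colorful; the case where $H_2$ is non-colorful is entirely symmetric. Once $\hat{n}(t)$ is shown to depend continuously on $t$, the side of $H(t)$ on which the fixed point $x$ lies cannot flip, which is precisely the content of consistency.

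The first step is to show that the data used to orient $H(t)$ stabilizes during the rotation. Since $NextRotate$ rotates until hitting the first new point of $P \setminus R$, no such point lies on $H(t)$ for $t \in (0,1)$, and since $H_1$ is non-colorful no point of the missing color $P_{t_1}$ lies on $H_1$ either. Therefore the bipartition of $P_{t_1}$ by $H(t)$ is constant on $[0,1)$, so the highest-ranked points $x \in P_{t_1} \cap H_1^+$ and $y \in P_{t_1} \cap H_1^-$ remain the highest-ranked in $P_{t_1} \cap H(t)^{\pm}$ throughout, and the auxiliary point $z(t) := \overline{xy} \cap H(t)$ used to orient the non-colorful hyperplane $H(t)$ is well-defined, lies in the relative interior of $\overline{xy}$, and depends continuously on $t$.

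The second step is to verify that $R \cup \{z(t)\}$ is affinely independent for every $t$, which is where weak general position enters crucially. Suppose for contradiction that $z(t) \in \mathrm{aff}(R)$. Because the colorful $d$-tuples $R \cup \{x\}$ and $R \cup \{y\}$ span $(d-1)$-flats by very weak general position, $x, y \notin \mathrm{aff}(R)$, so $z(t)$ is a strict interior point of $\overline{xy}$; writing $y$ as an affine combination of $z(t)$ and $x$ then places $y$ inside the $(d-1)$-flat $\mathrm{aff}(R \cup \{x\})$, so the $(d+1)$-set $R \cup \{x, y\}$ -- which has $d$ colors -- would be contained in a hyperplane, contradicting weak general position. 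With affine independence secured, the oriented unit normal $\hat{n}(t)$ prescribed by the positive-determinant condition on $R \cup \{z(t)\}$ depends continuously on $t$: the hyperplane $H(t)$ varies continuously with its spanning points, the unit normals to a continuously-varying hyperplane admit a continuous selection in $t$, and this selection is globally forced onto the positive-determinant side because the determinant is a continuous and nowhere-vanishing function on affinely independent tuples.

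Given continuity of $\hat{n}(t)$, the signed distance $\dotp{x - r_1}{\hat{n}(t)}$ (for any fixed $r_1 \in R$) is continuous on $t \in [0,1)$ and is nowhere zero since $x \notin H(t)$; hence it keeps the positive sign it has at $t = 0$, where $x \in H_1^+$ by definition of $x$. This yields $x \in H(t)^+$ for every intermediate $H = H(t)$, establishing consistency of $H_1$ with $H$. I expect the main obstacle to be the affine-independence step, since this is the single place where the hypothesis of weak general position is indispensable and where a violating configuration would genuinely destroy the continuity of the oriented normal. The case where $H_2$ is non-colorful is handled by running the same continuity argument on the interval $(0,1]$: the no-crossing argument again identifies the highest-ranked $P_{t_1}$-points in $H_2^{\pm}$ with $x$ and $y$, and the continuity of $\hat{n}(t)$ extends up to $t = 1$.
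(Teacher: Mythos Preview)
Your proposal is correct and follows essentially the same route as the paper's proof: both argue that the bipartition of the missing color $P_{t_1}$ is stable during the rotation, invoke weak general position to guarantee that the segment $\overline{xy}$ misses the $(d-2)$-flat $\mathrm{aff}(R)$ (so the orienting determinant stays nonzero), and then conclude by continuity that the normal cannot flip. Your treatment of the affine-independence step is more explicit than the paper's---you spell out that $z(t)\in\mathrm{aff}(R)$ would force the $(d+1)$-point set $R\cup\{x,y\}$ onto a hyperplane, which is exactly a weak general position violation---whereas the paper simply asserts that ``the segment $xy$ cannot pass through the anchor of the rotation''; this is a harmless elaboration rather than a different idea.
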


\begin{proof}
	Since $H_1$ is a non-colorful hyperplane, let $P_j$ denote the color missing from $H_1$. $H_1$ and $H$ give the same partition of $P_j$ into two sets because the continuous rotation from $H_1$ to $H$ does not hit any point in $P_j$. Let $x$ and $y$ be the highest ranked points under $\prec_j$ in each set. 
	Since we have weak general position, the segment $xy$ cannot pass through the anchor of the rotation so that the orientations of $H_1$ and $H$ are well-defined by the $(d-1)$ colored points in the anchor and the intersections of the hyperplanes with the segment $xy$. Thus, the determinant defining the normal of the rotating hyperplane from $H_1$ to $H$ for the orientation is always non-zero. Since the intersection of the rotating hyperplane from $H_1$ to $H$ and the segment $xy$ moves continuously along $xy$, by a continuity argument, the normal of the hyperplane does not flip during the rotation. Without loss of generality, assume that $x\in H_1^{+}$. This implies that $x$ is always in the positive half-space of $H$ and hence $H$ has a consistent orientation as $H_1$. The same proof holds for $H_2$.
\end{proof}

Next, we show that the \dv is strictly increasing for all the intermediate hyperplanes in the sequence of rotations from one colorful hyperplane to another colorful hyperplane.

\begin{lemma}
	\label{lemma:d-value-strictly}
	Assume that weak general position holds.
	Let $H_0$ be a colorful hyperplane and $H_k$ be the first colorful hyperplane obtained by a sequence of rotations by {\bf{Procedure}} $NextRotate$. We denote $H_1,\ldots,H_{k-1}$ be the non-colorful hyperplanes obtained from the above sequence of rotations.
	The {\dv}s of $H_1,\ldots,H_{k-1}$ is strictly increasing.
\end{lemma}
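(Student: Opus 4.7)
The plan is to track, across the sequence of intermediate non-colorful hyperplanes $H_1, \ldots, H_{k-1}$, the pair of $P_{t_1}$-points $(x, y)$ that enter the definition of the dist-value, and to show that this pair is the same for every $H_i$ while the intersection of $H_i$ with the segment $xy$ moves monotonically away from $x$. To begin, I would observe that the missing color of the anchor stays fixed throughout the rotation sequence; call it $t_1$. In {\bf Procedure} $NextRotate$, if the hit point $q$ has color different from $t_1$, Step~7 swaps it with a point of $R$ of the same color, keeping $t_1$ missing; if $q$ has color $t_1$, the produced hyperplane $R\cup\{q\}$ is colorful and the sequence terminates. Hence each non-colorful $H_i$ for $1\le i\le k-1$ is missing exactly color $t_1$, and its dist-value is defined via two points $x_i,y_i\in P_{t_1}$ lying on opposite sides of $H_i$.

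Next, I would argue that for each $1\le i\le k-2$ the rotation from $H_i$ to $H_{i+1}$ crosses no point of $P_{t_1}$. If it did, the hit point $q$ would have color $t_1$, making $H_{i+1}$ colorful and contradicting $i+1\le k-1$. Thus $H_i$ and $H_{i+1}$ induce the same partition of $P_{t_1}$. Combined with Lemma~\ref{lemma:consistent-orientation-0}, which preserves the $+/-$ labels across the rotation, I conclude that $x_i=x_{i+1}$ and $y_i=y_{i+1}$; write $x$ and $y$ for their common values.

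For the geometric step, let $F_i$ denote the $(d-2)$-flat spanned by the anchor $R_i$. The set $R_i\cup\{x,y\}$ consists of $d+1$ points with exactly $d$ distinct colors (the $d-1$ colors in $R_i$ together with $t_1$), so weak general position forces these points to be affinely independent. In particular, $F_i$ does not meet the segment $xy$, so every hyperplane through $F_i$ intersects $xy$ in at most one point, varying continuously with the rotation angle. By construction (Step~5 of {\bf Procedure} $NextRotate$), the rotation direction is chosen so that this intersection $z(\theta)$ starts at $z_i$, ends at $z_{i+1}$, and moves monotonically away from $x$ along $xy$. Therefore $\|x - z_i\| < \|x - z_{i+1}\|$, which is precisely the inequality between the dist-values of $H_i$ and $H_{i+1}$; chaining over $i=1,\ldots,k-2$ yields the required strict monotonicity.

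The step I expect to be trickiest is justifying the monotone motion of $z(\theta)$ globally along $xy$ rather than only locally. The rotation is a one-parameter family of hyperplanes around $F_i$, and without the disjointness $F_i\cap xy=\emptyset$ (obtained from weak general position) the intersection $z(\theta)$ could be undefined or even jump to infinity during the sweep. The orientation-consistency lemma then guarantees that $x$ remains in the positive halfspace for the entire duration of the rotation, so that ``away from $x$'' keeps a single meaning, and the local choice of direction in Step~5 propagates all the way to $H_{i+1}$.
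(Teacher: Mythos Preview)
Your argument is correct and follows essentially the same route as the paper: fix the missing color, observe that the partition of that color class (and hence the pair $x,y$) is preserved because no $P_{t_1}$-point is hit before $H_k$, invoke Lemma~\ref{lemma:consistent-orientation-0} to keep $x$ on the positive side throughout, and conclude monotonicity of $\|x-z_i\|$ from Step~5 of \textbf{Procedure} $NextRotate$. Your explicit justification that $F_i\cap xy=\emptyset$ via weak general position is a welcome addition; the paper relegates this point to the proof of Lemma~\ref{lemma:consistent-orientation-0}.
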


\begin{proof}
	Let $P_j$ denote the color missing from $H_1$. Then, $H_2,\ldots,H_{k-1}$ all miss the color $P_j$, otherwise $H_k$ is not the first colorful hyperplane obtained by the rotations. Therefore, each $H_i$ gives the same partition of $P_j$ into two sets for $i=1,\ldots,k-1$ because the continuous rotations from $H_1$ to $H_{k-1}$ does not hit any point in $P_j$. Let $x$ and $y$ be the highest ranked points under $\prec_j$ in each set. Without loss of generality, assume that $x\in H_1^{+}$. Since $H_1,\ldots,H_{k-1}$ are non-colorful hyperplanes, by Lemma~\ref{lemma:consistent-orientation-0}, the consistent of the orientation can carry from $H_1$ to $H_2$ and so on. Then we have $x\in H_1^{+},\dots, x\in H_{k-1}^{+}$ and $y\in H_1^{-},\dots, y\in H_{k-1}^{-}$. Let $z_1=xy\cap H_1, \dots, z_{k-1} = xy\cap H_{k-1}$. According to Step $5$ of {\bf{Procedure}} $NextRotate$, each rotation is performed by moving away from $x$ along the segment $xy$. Hence we have $\|x-z_1\| < \|x-z_2\| < \dots < \|x-z_{k-1}\|$.
\end{proof}

The last step for proving that the potential function always increases along the canonical path is to show that the \av increases in some coordinate from one colorful hyperplane to another colorful hyperplane through {\bf{Procedure}} $NextRotate$. This requires the assumption of well-separation.
Lemma~\ref{lem:inconsistent-orientation-0} shows that if the orientations of $H_1,H_2$ and $H$ are inconsistent, then well-separation is violated. By the contrapositive, if well-separation is satisfied, then all hyperplanes in the rotation always give consistent orientations. Then, it implies that rotating from a colorful hyperplane $H_0$ to another colorful hyperplane $H_k$ through a sequence of non-colorful hyperplanes that miss color $P_j$, we have $H_0^+ \cap P_j \subset H_k^+ \cap P_j$ and $H_k$ contains one additional point in $P_j$ that is hit by the last rotation. Therefore, $\alpha_j$ is increased by $1$ and other $\alpha_i$s keep the same value because of the way we swap the point of repeated color with the one in the anchor and the direction of rotation.

\begin{lemma}
	\label{lem:inconsistent-orientation-0}
	Assume that weak general position holds.
	Let $H_1,H_2$ be the input and output of {\bf{Procedure}} $NextRotate$ respectively. Let $R$ denote the anchor of the rotation from $H_1$ to $H_2$, and $P_j$ denote the color missing from $R$. Let $H$ denote any intermediate hyperplane within the rotation.
	If the orientations of $H_1$ (resp. $H_2$) and $H$ are inconsistent, then $H_1$ (resp. $H_2$) is a colorful hyperplane and 
	we can find a colorful set $R \cup \{x'\}$ lying in a $(d-2)$-flat where $x' \in \conv(P_j)$, in $O(d^3)$ arithmetic operations.
	The set $R \cup \{x'\}$ witnesses the violation of well-separation.
\end{lemma}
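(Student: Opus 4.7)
The plan is to dispatch the colorfulness claim via the contrapositive of Lemma~\ref{lemma:consistent-orientation-0}, and then to extract a witness for the violation of well-separation from a mismatch, at the start of the rotation, between the colorful and non-colorful orientation conventions.

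First, since Lemma~\ref{lemma:consistent-orientation-0} asserts that a non-colorful $H_1$ is always consistent with every intermediate $H$, the inconsistency hypothesis forces $H_1$ to be colorful; the $H_2$ case follows by reading the rotation in reverse. Assume $H_1$ passes through a unique $q_j\in P_j$ and parameterize the rotation by $\theta\in[0,\theta^*]$ about the $(d-2)$-flat $F:=\mathrm{aff}(R)$, with $H_0=H_1$ and $H_{\theta^*}=H_2$; let $x,y\in P_j$ be the highest-ranked points (under $\prec_j$) in $P_j\cap H_1^+$ (closed) and $P_j\cap H_1^-$ (open). Because no point of $P$ lies on $H_\theta$ for $\theta\in(0,\theta^*)$, the pair $(x,y)$ stays fixed on this interval, and the orientation of the non-colorful $H_\theta$ is governed by the sign of the determinant $\det(R,z_\theta,\hat n_\theta)$ with $z_\theta:=\overline{xy}\cap H_\theta$. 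Under weak general position the $d+1$ points $R\cup\{x,y\}$ must span a $d$-flat, so the line through $x,y$ is disjoint from $F$; hence $z_\theta\notin F$ throughout, the determinant cannot vanish, and no sign flip is possible strictly inside $(0,\theta^*)$.

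The only way inconsistency can still arise is at the boundary $\theta=0$, where the colorful convention for $H_1$ is defined via $\det(R,q_j,\hat n)$ but the non-colorful convention at $\theta=0^+$ is defined via $\det(R,z_0,\hat n)$. This determinant, viewed as a linear function of its point argument, vanishes exactly on $F$, so within the hyperplane $H_1$ its sign is constant on each of the two halves of $H_1\setminus F$. The two conventions therefore agree iff $q_j$ and $z_0$ lie on the same side of $F$ inside $H_1$, and disagree --- producing the assumed inconsistency --- iff they lie on opposite sides. When they do, the segment $\overline{q_jz_0}\subset H_1$ crosses $F$ at a single point $x'$; writing $z_0=(1-s)x+sy$ for some $s\in[0,1]$ expresses $x'$ as a convex combination of $\{q_j,x,y\}$, giving $x'\in\conv(P_j)$, and $x'\in F$ then makes $R\cup\{x'\}$ a colorful $d$-tuple inside the $(d-2)$-flat $F$ --- a certificate of violation of well-separation by Lemma~\ref{lemma:ws-formats}. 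Computationally, $x'$ is recovered by solving, via Gaussian elimination, the $(d+1)\times(d+1)$ linear system equating a convex combination of $\{q_j,x,y\}$ with an affine combination of the points of $R$, at a cost of $O(d^3)$ arithmetic operations.

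The main obstacle I anticipate is the rigorous boundary analysis at $\theta=0$: one must carefully identify the sign of the orientation determinant with the side of $F$ within $H_1$ on which the point argument lies, argue that under weak general position no sign flip can happen inside the open rotation interval, and handle the degenerate subcase $x=q_j$ (which forces $z_0=q_j$ and hence no mismatch, consistent with no witness being needed there). A secondary subtlety is confirming that the convex-combination coefficients returned by the linear system are genuinely nonnegative in the inconsistent case, which follows from the geometric observation that $F$ actually separates $q_j$ from $z_0$ inside $H_1$ precisely when the two conventions disagree.
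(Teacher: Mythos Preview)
Your proof is correct and follows essentially the same approach as the paper's: both invoke Lemma~\ref{lemma:consistent-orientation-0} to force $H_1$ colorful, both identify the inconsistency as a sign mismatch between the colorful convention (via $q_j$) and the non-colorful convention (via $z_0=\overline{xy}\cap H_1$), conclude these lie on opposite sides of the $(d-2)$-flat $F=\mathrm{aff}(R)$ inside $H_1$, and intersect the segment $\overline{q_jz_0}$ with $F$ to obtain $x'\in\conv(P_j)$. The paper phrases the opposite-side step using the half-subspaces $H^+_{1,R},H^-_{1,R}$ rather than your continuity/determinant argument, but the geometric content and the $O(d^3)$ linear-algebra recovery of $x'$ are the same.
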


\begin{proof}
	Since the orientations of $H_1$ and $H$ are inconsistent, $H_1$ must be a colorful hyperplane by Lemma~\ref{lemma:consistent-orientation-0}.
	Therefore, the point in $H_1$ that is not in the anchor is in $P_j$, denoted by $p$.
	
	Let $x$ and $y$ be the points defined in Lemma~\ref{lemma:consistent-orientation-0} such that $x,y \in P_j$, and $x$ and $y$ are on different sides of $H_1$ and $H$. 
	The $(d-2)$-flat containing $R$ separates $H_1$ and $H$ into two $(d-1)$-dimensional half-subspaces each. Let $H^+_{1,R}$ and $H^+_{R}$ be the half-subspaces intersecting with $xy$ on $H_1$ and $H$ respectively, and let us denote the intersection points by $z_p$ and $z$, respectively.
	The opposite half-subspaces are denoted by $H^-_{1,R}$ and $H^-_{R}$, respectively.
	By definition of the orientation for non-colorful hyperplanes, the orientation of $H$ is defined by $R \cup \{z\}$. Although the orientation of $H_1$ is defined by $R \cup \{p\}$, if we consider the determinant defining the orientation using $R \cup \{z_p\}$, it gives an orientation consistent with that of $H$. Therefore, it must be that $p \in H^-_{1,R}$. Then, we can see that the line segment $pz_p$ intersects the $(d-2)$-flat of $R$. We can compute $z_p$ and also the intersection point $x'$ of $p z_p$ and the $(d-2)$-flat of $R$ by solving systems of linear equations with $d$ equations and $d$ variables in $O(d^3)$ arithmetic operations. Since $x' \in \conv(P_j)$, $R \cup \{x'\}$ is a colorful set contained in the $(d-2)$-flat of $R$.
\end{proof}

In order to guarantee that there is no other path in \UEOPL apart from the canonical path, we introduce self-loops
for vertices that are not on the canonical path. 
The detailed proof is given in Lemma~\ref{lem:no_violations} that if there are no violations, 
then the reduced instance of \UEOPL only gives a~\ref{sol:U1} solution, 
which readily translates to a \ref{sol:G1} solution, so our reduction is promise-preserving, and this can be done in polynomial time.

Since we do not know the hyperplane with \av $(1,\dots,1)$ in advance,
we split the problem into two sub-problems: in the first we start with any colorful hyperplane.
We reverse the direction of the canonical path determined by the potential and construct an \GHS instance for which the vertex with \av $(1,\dots,1)$ is the solution.
In the second, we use this vertex as the input to the main \GHS instance.
If the input is free of violations, then both sub-problems give valid solutions and together they answer the original question.

\subparagraph*{Handling violations.}
The reduction maps violations of \GHS to those of the \UEOPL instance, and certificates for the violations can be recovered
from additional processing.
When a violation of weak general position is witnessed on a vertex that lies on the canonical path, 
a hyperplane incident to $d$ colors may contain additional points. 
This in turn implies that some $\alpha$-cut is missing, so that the correct solution for the target may not exist. 
In addition, the (highest-ranked) points $x,y$ from the missing color that we choose to define 
the orientation of a non-colorful hyperplane may form a segment $xy$ that passes through the $(d-2)$-flat spanned by the anchor. 
In that case the orientation of the hyperplane is not well-defined. 
In the reduction, these problematic vertices are removed from the canonical path, 
thereby creating some additional starting points and end points in the reduced instance. 
These violations can be captured by \ref{sol:U1} with a wrong \av or \ref{sol:UV2}. 
Furthermore, the hyperplanes that contains the degenerate point sets could be represented by different choices of anchors and a additional point on the plane.
Each such pair represents a vertex in the reduced instance. 
We join these vertices in the form of a cycle in the \UEOPL instance with all vertices having the same potential value, 
so that the violations can also be captured by \ref{sol:UV1} and \ref{sol:UV3}.

When a violation of well-separation is witnessed on a vertex on the canonical path, the orientations of 
the two hyperplanes paired by {\bf{Procedure}} $NextRotate$ may be inconsistent, 
which may not guarantee that the \av is incremented in one component by one (See Figure~\ref{figure:not-separated}). 
Hence, the canonical path is split into two paths that can be captured by \ref{sol:UV2}. 
Furthermore, a violation of well-separation also creates multiple colorful hyperplanes with the same \av (See Figure~\ref{figure:not-separated}, left). 
Two vertices in the \UEOPL graph with the same potential value, which could correspond to some colorful or non-colorful hyperplanes, can be reported by \ref{sol:UV3}. We show that this gives a certificate of violation of well-separation in the following lemmata, where $m_0$ is the number of bits
used to represent each coordinate of points of $P$. 

\begin{figure}
	\centering
	\hspace{0.5cm}
	\includegraphics[width=0.45\textwidth,page=1]{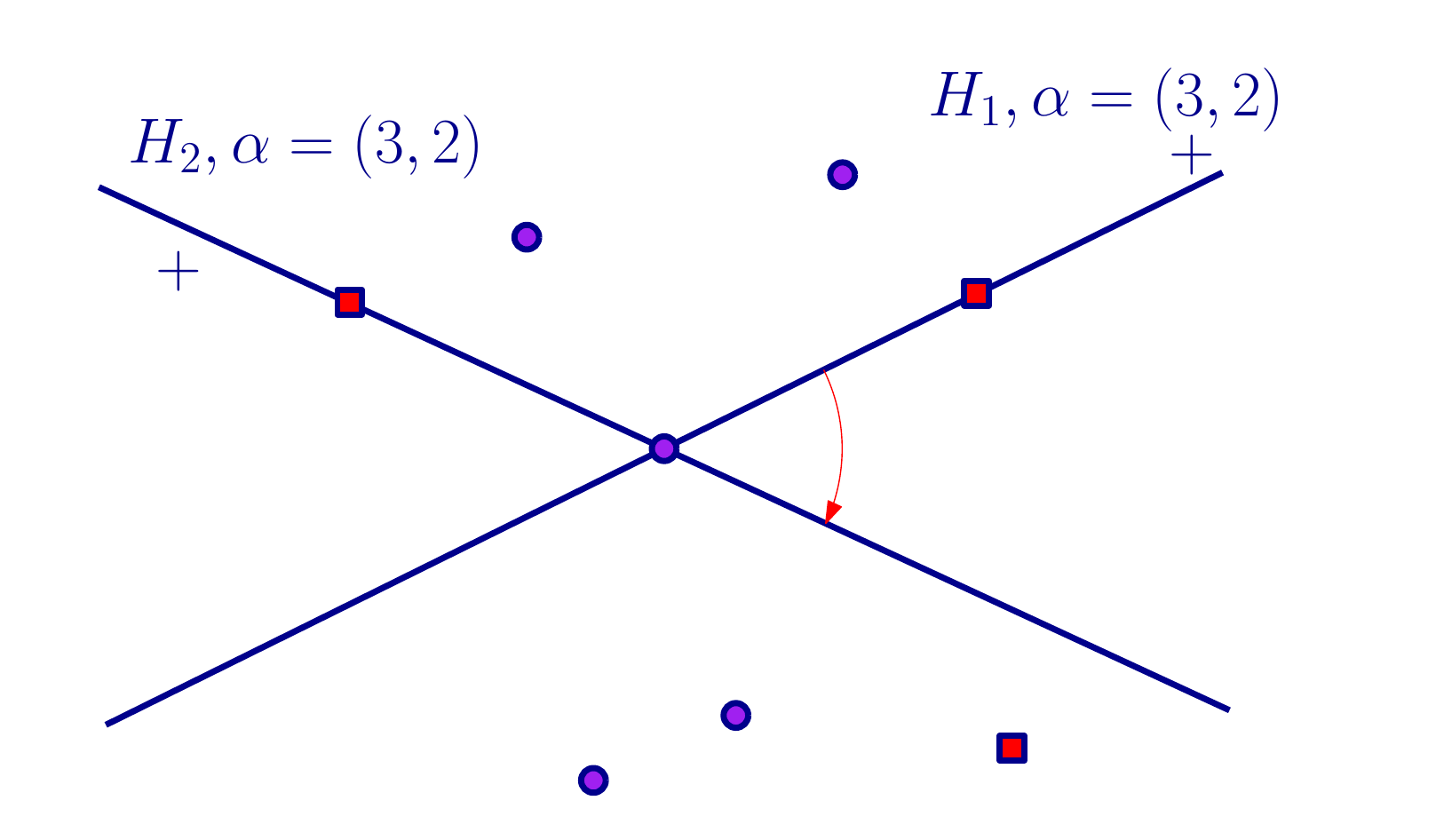}%
	\includegraphics[width=0.45\textwidth,page=2]{non-well-separated-2}
	\caption{The examples show two sets of points that are not well-separated. Purple (circle) represents the first color and red (square) represents the second color.
In both examples the rotation procedure does not increase the \av. 
Both examples show that the orientation of the hyperplane may be flipped after the rotation, so the resulting \av can go wrong.}
	\label{figure:not-separated}
\end{figure}

\begin{lemma}
\label{lem:not_separated_2}
	Given two colorful hyperplanes $H_p$, $H_q$ with the same \av, we can find a colorful set $\{x_1 \in \conv(P_1), \ldots, x_d\in \conv(P_d)\}$ that lies on a $(d-2)$-flat in $\poly(n,d,m_0)$ time.	
\end{lemma}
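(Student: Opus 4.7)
The strategy is to apply Lemma~\ref{lemma:ws-formats}: if I can produce a disjoint pair of index sets $I, J \subseteq [d]$ with $\conv(\{\cup_{i \in I} P_i\}) \cap \conv(\{\cup_{j \in J} P_j\}) \neq \emptyset$, then the second direction of that lemma converts the partition into the desired colorful set on a $(d-2)$-flat in polynomial time. So the whole task reduces to exhibiting such a partition from $H_p$ and $H_q$.

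First I would note that the partition must exist. By Theorem~\ref{thm:ghs}, if the point sets were well-separated, the $\alpha$-cut for the given $\alpha$-vector would be unique. Since $H_p \neq H_q$ are two distinct $\alpha$-cuts with the same $\alpha$-vector, well-separation must fail, and by the definition of well-separation (equivalently, by the ``intersecting-hulls'' form in Lemma~\ref{lemma:ws-formats}) a witnessing partition exists.

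To find one algorithmically, I would use the geometry of the two hyperplanes. Let $f$ be the affine functional with $H_p = \{x : f(x) = 0\}$ and $H_p^+ = \{x : f(x) \geq 0\}$, oriented via the colorful tuple on $H_p$. Evaluate $f$ at each $q_i$, the colorful point of color $i$ on $H_q$. In the non-parallel case both strictly positive and strictly negative values must appear among $\{f(q_1), \ldots, f(q_d)\}$; otherwise the $(d-1)$-flat $H_q = \mathrm{aff}\{q_1, \ldots, q_d\}$ would sit in a closed halfspace of $H_p$, forcing $H_q \parallel H_p$ or $H_q = H_p$. Set $I := \{i : f(q_i) \geq 0\}$ and $J := [d] \setminus I$; both are then non-empty. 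In the parallel case a symmetric argument using the two normal directions and the equal-$\alpha$-vector condition produces an analogous non-trivial partition.

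The main obstacle is proving that this $(I, J)$ actually satisfies $\conv(\{\cup_{i \in I} P_i\}) \cap \conv(\{\cup_{j \in J} P_j\}) \neq \emptyset$. The leverage comes from the equal-$\alpha$-vector identity: counting the four quadrants $H_p^{\pm} \cap H_q^{\pm}$ color by color yields $|P_i \cap H_p^+ \cap H_q^-| = |P_i \cap H_p^- \cap H_q^+|$ for every $i$. Combined with the fact that $L = H_p \cap H_q$ meets $\conv(q_1, \ldots, q_d)$ (because $f$ takes both signs there), this symmetry can be parlayed, via a convex-combination/LP argument on the two sides of $H_p$, into an explicit witness $z \in \conv(\{\cup_{i \in I} P_i\}) \cap \conv(\{\cup_{j \in J} P_j\})$ computable in $\poly(n, d, m_0)$ time. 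Applying Lemma~\ref{lemma:ws-formats} to this partition then yields the required colorful set within the same time bound.
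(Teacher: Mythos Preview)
Your approach has a genuine gap. The claim that both signs must appear among $f(q_1),\dots,f(q_d)$ is false: the $q_i$ span $H_q$ only \emph{affinely}, and an affine hull can extend into both halfspaces of $H_p$ even when all spanning points sit in one closed halfspace. Concretely, already in $d=2$ one can have both $q_1$ and $q_2$ strictly in $H_p^-$ while $H_q$ still crosses $H_p$ (this is exactly the situation handled in the paper's second base case, where both open segments $p_1q_1$ and $p_2q_2$ lie in $H_p^-\cap H_q^+$). In that scenario your set $I=\{i:f(q_i)\ge 0\}$ is empty and the partition collapses.

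Even when your $(I,J)$ happens to be non-trivial, the crucial step---showing $\conv(\cup_{i\in I}P_i)\cap\conv(\cup_{j\in J}P_j)\neq\emptyset$---is asserted but not argued. The quadrant-count identity $|P_i\cap H_p^+\cap H_q^-|=|P_i\cap H_p^-\cap H_q^+|$ is correct, but it does not by itself force the two colour-grouped hulls to meet; you would need a concrete construction of the witness point $z$, and ``parlayed via a convex-combination/LP argument'' is not one. Note also that simply knowing well-separation fails (via Theorem~\ref{thm:ghs}) tells you \emph{some} partition works, not that \emph{your} partition does, and trying all $2^d$ partitions is not polynomial in $d$. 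The paper instead proceeds by induction on $d$: it settles $d=2$ by an explicit case analysis on the four cells, and for $d>2$ either invokes Lemma~\ref{lem:not_separated_1} (when $H_p\cap H_q$ misses every $\conv(P_i)$) or projects from a point of $H_p\cap H_q\cap\conv(P_1)$ onto a separating hyperplane to reduce the dimension by one.
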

\begin{proof}
Let $p_1 \in P_1, \ldots, p_d \in P_d$ denote the colorful points on $H_p$ and $q_1 \in P_1, \ldots, q_d \in P_d$ denote the colorful points on $H_q$. Throughout this proof, we consider $H^+$ to be a closed halfspace while $H^-$ is an open halfspace.
We prove the claim by induction on the dimension $d$. 

For the base case $d=2$, we have three different cases to consider depending on which cells out of $H_p^- \cap H_q^+, H_p^+ \cap H_q^+, H_p^+ \cap H_q^-, H_p^- \cap H_q^-$ contain the segments $p_1q_1$ and $p_2q_2$.
If the open segments $p_1q_1$ and $p_2q_2$ lie in either $H_p^+ \cap H_q^+$ or $H_p^- \cap H_q^-$, then we can apply the same argument as in Lemma~\ref{lem:not_separated_1} to find a point $y$ that lies in $\conv(P_1) \cap \conv(P_2)$. In particular, $y$ could be the intersection point of $p_1q_1$ and $p_2q_2$ (see Figure~\ref{figure:ns2-1}).

Without loss of generality, suppose that the open segment $p_1q_1 \in H_p^- \cap H_q^+$. Since $p_1 \in H_p^+ \cap H_q^+$ and $q_1 \in H_p^-\cap H_q^+$, there exists at least one point $r_1 \in P_1 \cap (H_p^+ \cap H_q^-)$ in order for $|P_1 \cap H_p^+| = |P_1 \cap H_q^+|$ to hold. If the open segment $p_2q_2$ also lies in $H_p^- \cap H_q^+$ (resp. $H_p^+ \cap H_q^-$), then there exists at least one point $r_2$ in $P_2 \cap (H_p^+ \cap H_q^-)$ (resp. $P_2 \cap (H_p^- \cap H_q^+)$). We can see that the intersection point $x$ of $H_p$ and $H_q$ lies inside the triangles $\triangle p_1q_1r_1$ and $\triangle p_2q_2r_2$ (see Figure~\ref{figure:ns2-2}).

Suppose that the open segment $p_2q_2$ lies in $H_p^+ \cap H_q^+$ (resp. $H_p^- \cap H_q^-$). In order to assign correct orientations to $H_p$ and $H_q$,
the order in which points of $P_1$ and $P_2$ appears on the hyperplanes along any direction must be the same for both.
This is only feasible when $p_2$ lies between $p_1$ (resp. $q_1$) and the intersection point $x=H_p \cap H_q$.
Hence, $p_2$ (resp. $q_2$) lies inside the triangles $\triangle p_1q_1r_1$ (see Figure~\ref{figure:ns2-3}).

For $d > 2$, if $H_p \cap H_q$ does not intersect $\conv(P_i)$ for all $i \in [d]$, then the claim follows from Lemma~\ref{lem:not_separated_1}. Without loss of generality, suppose that $H_p \cap H_q$ intersects $\conv(P_1)$. 
We can use linear programming to check whether there is a hyperplane $H$ that separates $P_1$ and $\cup_{ i \in[2..d]}P_i$.
If $H$ does not exist, by Lemma~\ref{lemma:ws-formats}, we can find a desired colorful set and we are done. Otherwise, we use linear programming to find a point $x_1 \in H_p \cap H_q \cap \conv(P_1)$. Then, we project $\cup_{ i \in[2..d]}P_i$ towards $x_1$ onto $H$. Let $P_i'$ be the projected point set for $i\in[2..d]$. Let $H_p'$ and $H_q'$ be $H_p \cap H$ and $H_q \cap H$, respectively.
From the way we do the projection, $H_p'$ and $H_q'$ keep the same \av with respect to $P_i'$ on the hyperplane $H$. By induction, we can find a colorful set $\{x_2' \in \conv(P_2'),\ldots,x_d'\in \conv(P_d')\}$ that lies on a $(d-3)$-flat. Then, we shoot a ray from $x_1$ towards $x_i'$ until it hits some point $x_i \in \conv(P_i)$ and we can see that $\{x_1, x_2,\ldots,x_d\}$ spans a $(d-2)$-flat.
\end{proof}

\begin{lemma}
\label{lem:not_separated_3}
	Given two non-colorful hyperplanes that each contains $d-1$ points and have the same missing color, \av and \dv, 
	we can find a colorful set of points $\{x_1 \in \conv(P_1), \ldots, x_d\in \conv(P_d)\}$ that lies on a $(d-2)$-flat in $\poly(n,d,m_0)$ time.
\end{lemma}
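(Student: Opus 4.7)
The plan is to reduce the statement to Lemma~\ref{lem:not_separated_2} by augmenting the missing color class so that $H_p$ and $H_q$ become colorful hyperplanes on a slightly enlarged instance. Let $P_j$ denote the common missing color. Write $x_p, y_p$ for the highest-ranked points of $P_j$ under $\prec_j$ in $H_p^+$ and $H_p^-$ respectively, and set $z_p := x_p y_p \cap H_p \in \conv(P_j)$; define $x_q, y_q, z_q$ analogously for $H_q$. Since the \av{}s of $H_p$ and $H_q$ coincide, we have $|H_p^+ \cap P_j| = |H_q^+ \cap P_j| = \alpha_j$.

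The first step I would carry out is to argue, using the equal \dv $\|x_p - z_p\| = \|x_q - z_q\|$ together with these equal counts, that the two hyperplanes in fact induce the same partition of $P_j$. Once this is established, $x_p = x_q$ and $y_p = y_q$ follow immediately, and the equal \dv pins down $z_p = z_q =: z$ as the unique point at that distance from $x_p$ along segment $x_p y_p$.

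With a common orientation point $z \in \conv(P_j)$ on both hyperplanes, I would augment the instance by adding $z$ to $P_j$. Since $z \in \conv(P_j)$, the convex hull $\conv(P_j \cup \{z\}) = \conv(P_j)$ is unchanged. In the augmented instance each of $H_p$ and $H_q$ passes through exactly $d$ colorful points---the original $d-1$ points of distinct colors different from $j$, together with $z$ of color $j$---so both are colorful hyperplanes. Their \av{}s in the augmented instance remain equal: coordinates for $i \neq j$ are unchanged, and coordinate $j$ of both increases by one because $z$ lies on each hyperplane and is counted in each closed half-space $H^+$. Invoking Lemma~\ref{lem:not_separated_2} on the augmented instance yields in $\poly(n,d,m_0)$ time a colorful set $\{x_1, \ldots, x_d\}$ lying on a $(d-2)$-flat; since the convex hulls of the original color classes are unchanged, the same set is colorful in the original instance as well, completing the argument.

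The main obstacle is the first step: rigorously establishing that the partitions of $P_j$ induced by $H_p$ and $H_q$ must coincide. If they differ, then $x_p \neq x_q$ is possible and equal \dv{}s no longer force $z_p = z_q$. I expect one then has to either adapt the proof of Lemma~\ref{lem:not_separated_2} to handle two distinct auxiliary points $z_p, z_q$, or directly distill a well-separation witness from the mismatched partitions---the existence of points $a, b \in P_j$ with $a \in H_p^+ \cap H_q^-$ and $b \in H_p^- \cap H_q^+$ strongly constrains the geometry and should yield a colorful set of the required kind by a projection argument analogous to the one in Lemma~\ref{lem:not_separated_2}. I anticipate the bulk of the technical work to lie in this case analysis; the rest follows cleanly from the augmentation trick.
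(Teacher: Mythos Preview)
Your reduction to Lemma~\ref{lem:not_separated_2} via an augmentation of the missing color class is the right high-level idea, and your handling of the case $z_p=z_q$ is essentially fine. The gap is precisely where you flag it: the claim that $H_p$ and $H_q$ induce the same partition of $P_j$ is \emph{false in general}, and this is not a side case but the heart of the lemma. Equal $\alpha$-vectors only force $|H_p^+\cap P_j|=|H_q^+\cap P_j|$, not equality of the sets; it is perfectly possible to have, say, $x_p\in H_p^+\cap H_q^-$ and $x_q\in H_p^-\cap H_q^+$, in which case $x_p\neq x_q$ and the equal \dv{} tells you nothing about $z_p$ versus $z_q$. So you cannot hope to prove the first step as stated.

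The paper does not attempt to prove equal partitions; instead it carries out a case analysis on which of the four cells $H_p^\pm\cap H_q^\pm$ contain each of $x_p,x_q,y_p,y_q$. Only in the single case $x_p,x_q\in H_p^+\cap H_q^+$ and $y_p,y_q\in H_p^-\cap H_q^-$ does one get $x_p=x_q$, $y_p=y_q$, and hence $z_p=z_q$; there the argument proceeds by projection as in the inductive step of Lemma~\ref{lem:not_separated_2}. In every other case the paper \emph{moves} specific points of $P_j$ along segments inside $\conv(P_j)$ until they land on $H_p$ and $H_q$ respectively, so that both hyperplanes become colorful with equal $\alpha$-vectors in a modified instance $P'$ satisfying $\conv(P'_j)\subseteq\conv(P_j)$, and then invokes Lemma~\ref{lem:not_separated_2}. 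The delicate part is choosing, in each configuration, which two points to slide and along which segments so that the $\alpha$-vectors change identically (sometimes an auxiliary point $r\in P_j$ in the ``opposite'' cell is needed to balance the counts). Your fallback sketch gestures at this, but the actual work is this constructive case analysis; without it the proof is incomplete.
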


\begin{proof}
The idea is to transform $P$ to a point set $P'$, in which we can find two points from the missing color that can each be moved onto one of the non-colorful hyperplanes. Then, the two non-colorful hyperplanes become colorful hyperplanes in $P'$ with the same \av so that we are in the setup of Lemma~\ref{lem:not_separated_2} and the claim follows.

Without loss of generality, we assume that the missing color of the two non-colorful hyperplanes is color 1. Let $H_p$ denote one of the non-colorful hyperplanes that passes through some $p_2 \in P_2, \ldots, p_d \in P_d$ and $H_q$ denote another non-colorful hyperplane that passes through some $q_2 \in P_2, \ldots, q_d \in P_d$. 
Recall that they have the same \av and \dv. 
Let $x_p,y_p$ (resp. $x_q,y_q$) be the highest ranked points of $P_1$ under $\prec_1$ on either side of $H_p$ (resp. $H_q$) and let $z_p$ (resp. $z_q$) be the intersection of segment $x_py_p$ (resp. $x_qy_q$) with $H_p$ (resp. $H_q$). By the definition and assumption of \dv, the \dv of $H_p$ and $H_q$ is $||x_p-z_p||=||x_q-z_q||$. 
	
Throughout this proof, we consider $H^+$ to be a closed halfspace while $H^-$ is an open halfspace.
Our definition of $P'$ changes depending on the locations of $x_p,x_q,y_p,y_q$ in the cells $H_p^- \cap H_q^+, H_p^+ \cap H_q^+, H_p^+ \cap H_q^-, H_p^- \cap H_q^-$. When $x_p,x_q \in H_p^+ \cap H_q^+$ and $y_p,y_q \in H_p^- \cap H_q^-$, we have $x_p = y_q$ and $y_p = y_q$. Since $||x_p-z_p|| = ||x_q-z_q||$, we also have $z_p = z_q$, which implies that $H_p \cap H_q$ intersects $\conv(P_1)$. Similar to the proof of Lemma~\ref{lem:not_separated_2}, we find a separating hyperplane $H$ between $P_1$ and $\cup_{ i \in[2..d]}P_i$ if it exists. Then, we project $\cup_{ i \in[2..d]} P_i$ towards $z_p$ onto $H$, in which we have two colorful hyperplanes with the same \av in $\R^{d-1}$, so
we can apply Lemma~\ref{lem:not_separated_2} to the sub-problem in $\R^{d-1}$ and recover a desired colorful set as in Lemma~\ref{lem:not_separated_2}. If $H$ does not exist, by Lemma~\ref{lemma:ws-formats} we can also find a desired colorful set and we are done. 
	
In the following, we consider the case of $x_p,x_q \in H_p^+ \cap H_q^+$ with three sub-cases:
\begin{itemize}
	\item $[y_p \in H_p^- \cap H_q^+ \mbox{ and } y_q \in H_p^- \cap H_q^-]$: there must exist a point $r \in P_1 (\cap H_p^+ \cap H_q^-)$, otherwise $H_p$ and $H_q$ cannot have the same \av. Then, we move $y_p$ towards $x_p$ along segment $x_py_p$ until it hits $H_p$ at $z_p$. Similarly, we move $r$ towards $x_p$ along segment $x_pr$ until it hits $H_q$. We define the resulting point set to be $P'$. We can see that both of the first coordinates of the \av{}s of $H_p$ and $H_q$ (with respect to $P'$) are increased by 1, so they still have the same \av, and now $H_p$ and $H_q$ are colorful hyperplanes in $P'$. By Lemma~\ref{lem:not_separated_2}, we can find a colorful set $\{x_1 \in \conv(P_1'), \ldots, x_d\in \conv(P_d')\}$ that lies on a $(d-2)$-flat. Since $y_p$ and $r$ only moved inside $\conv(P_1)$, $\conv(P'_1) \subseteq \conv(P_1)$ and $P'_i = P_i$ for $i=[2..d]$. 	Since the colorful set is also a certificate for $P$, we are done. 
	
	\item $[y_p \in H_p^- \cap H_q^- \mbox{ and } y_q \in H_p^+ \cap H_q^-]$: the argument is symmetrical to the case above. 
	
	\item $[y_p \in H_p^- \cap H_q^+ \mbox{ and } y_q \in H_p^+ \cap H_q^-]$: we move $y_p$ towards $x_p$ along segment $x_py_p$ until it hits $H_p$ and move $y_q$ towards $x_q$ along segment $x_qy_q$ until it hits $H_q$.
\end{itemize}

Next, we consider $x_p \in H_p^+ \cap H_q^+$ and $x_q \in H_p^- \cap H_q^+$. 
\begin{itemize}
	\item $[y_p \in H_p^- \cap H_q^+ \mbox{ and } y_q \in H_p^- \cap H_q^-]$: we have $x_q = y_p$. We move $x_p$ towards $x_q$ along segment $x_px_q$ until it hits $H_p$ and move $x_q$ towards $y_q$ along segment $x_qy_q$ until it hits $H_q$.
	
	\item $[y_p \in H_p^- \cap H_p^- \mbox{ and } y_q \in H_p^+ \cap H_q^-]$: we move $x_p$ towards $x_q$ along segment $x_px_q$ until it hits $H_p$ and move $x_q$ towards $y_p$ along segment $x_qy_p$ until it hits $H_q$.
	
	\item $[y_p \in H_p^- \cap H_q^+ \mbox{ and } y_q \in H_p^+ \cap H_q^-]$: we have $x_q = y_p$. We move $y_p$ towards $x_p$ along segment $x_py_p$ until it hits $H_p$ and move $y_q$ towards $x_p$ along segment $x_py_q$ until it hits $H_q$.
	
	\item $[y_p \in H_p^- \cap H_q^- \mbox{ and } y_q \in H_p^- \cap H_q^-]$: we have $y_p = y_q$. We move $x_p$ towards $x_q$ along segment $x_px_q$ until it hits $H_p$ and move $x_q$ towards $y_q$ along segment $x_qy_q$ until it hits $H_q$.
\end{itemize}
The case for $x_p \in H_p^+ \cap H_q^-$ and $x_q \in H_p^+ \cap H_q^+$ are symmetrical to those above. 

The last case is $x_p \in H_p^+ \cap H_q^-$ and $x_q \in H_p^- \cap H_q^+$. Basically, the sub-cases are the same as those above except one, which happens for $y_p \in H_p^- \cap H_p^+$ and $y_q \in H_p^+ \cap H_q^-$. In this case, we have $x_p=y_q$ and $x_q = y_p$. We move $x_p$ and $x_q$ towards each other along segment $x_px_q$ and until they hit $H_p$ and $H_q$. Note that the segment $x_p x_q$ may intersect the $(d-2)$-flat $H_p\cap H_q$, but this case is 
also handled by Lemma~\ref{lem:not_separated_2}.
\end{proof}

For the second output ($\V(v) < \V(u) < \V(\S(v))$) of \ref{sol:UV3}, there are two cases to consider. In the first case, if both $v$ and $\S(v)$ correspond to the same \av, then $u$ also has the same \av and its \dv is between that of $v$ and $\S(v)$. Since rotating the hyperplane from $v$ to $\S(v)$ does not pass through $u$, we can find a different hyperplane that is interpolated by $v$ and $\S(v)$ and has the same \dv as $u$. Hence, we apply Lemma~\ref{lem:not_separated_3} again to find a witness of the violation. For the second case that the \av of $\S(v)$ increases in one coordinate by one from that of $v$, since the role of \dv is dominated by the role of {\av} in the potential function, the \dv of $u$ can be arbitrarily large. Therefore, we may not be able to apply the interpolation technique from the first again. We argue that we can transform $P$ to a point set $P'$ satisfying $\conv(P'_i) \subseteq \conv(P_i)$ for all $i\in [d]$, such that the hyperplanes of $v$ and $u$ become colorful. Then, we apply Lemma~\ref{lem:not_separated_2} to show that $P'$ is not well-separated, which also implies that $P$ is not well-separated. The precise statement and proof are given in Lemma~\ref{lem:not_separated_4}.

In Lemma~\ref{lem:UV1} we show how to compute a \ref{sol:GV1} solution from a \ref{sol:UV1} solution.
In \Cref{lem:UV2,lem:UV3} we show how we can compute a \ref{sol:GV1} or \ref{sol:GV2} solution, given a \ref{sol:UV2} or \ref{sol:UV3} solution.
A \ref{sol:GV1} or \ref{sol:GV2} solution can also occur with a \ref{sol:U1} solution that has the incorrect \av, and we show how to do this in Lemma~\ref{lem:U1_violation}.
We show that converting these solutions always takes $\poly(n,d)$ time.
The violations may be detected in either the first sub-problem or the second sub-problem.
Our constructions thus culminate in the promised result:

\begin{theorem}
\label{theorem:main}
	\GHS $\in$ \cUEOPL $\subseteq$ \CLS.
\end{theorem}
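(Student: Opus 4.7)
The plan is to establish Theorem~\ref{theorem:main} by constructing a polynomial-time, promise-preserving reduction from \GHS to \UEOPL, so that \GHS inherits membership in \cUEOPL, and therefore in \CLS by the result of Fearley et al.~\cite{fgms-ueopl}. The backbone of the reduction is the rotation procedure \textbf{Procedure} $NextRotate$ described in Section~\ref{subsection:reduction-overview}: starting from the (unique, when the promise holds) colorful hyperplane with \av $(1,\ldots,1)$ and rotating in a prescribed coordinate order, one reaches the target \av $(\alpha_1,\ldots,\alpha_d)$ through a deterministic, strictly potential-increasing sequence of local moves, which will be encoded as the unique directed line of a \UEOPL instance.

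First, I would fix the vertex set of the \UEOPL graph. To avoid the non-monotone \av{}s exhibited in Figure~\ref{figure:why-double}, I would use double-wedges (developed in Section~\ref{section:double-wedge}) rather than raw colorful hyperplanes as vertices, each encoded as a pair consisting of an anchor $R$ and an additional point $p$, padded out to a bitstring of length $n = \poly(|P|,d,m_0)$. The successor circuit $\S$ implements the canonical-path version of \textbf{Procedure} $NextRotate$, namely: if the current \av is $(\alpha_1,\ldots,\alpha_{j-1},\beta,1,\ldots,1)$ with $\beta < \alpha_j$, rotate around the anchor missing color $j$; upon reaching the target, self-loop. All vertices that are not visited by the canonical path are assigned self-loops, so they produce neither starts nor ends of lines. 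The predecessor circuit $\P$ implements the inverse rotation. The potential $\V$ is a weighted sum whose dominant term encodes the lexicographically scaled \av and whose low-order term encodes the \dv; Lemma~\ref{lemma:d-value-strictly} and Lemma~\ref{lem:inconsistent-orientation-0} (together with Lemma~\ref{lemma:consistent-orientation-0}) then guarantee that, under the promise, $\V$ strictly increases along every edge of the canonical path.

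Second, I would resolve the bootstrapping issue: we do not know the colorful hyperplane of \av $(1,\ldots,1)$ a priori, so the reduction produces two \UEOPL instances. The first instance reverses the canonical orientation starting from any easily computable colorful hyperplane and has the $(1,\ldots,1)$-hyperplane as its unique end of line; the second instance uses that hyperplane as $0^n$ and targets $(\alpha_1,\ldots,\alpha_d)$. A \ref{sol:G1} solution to \GHS is then produced when both sub-instances return \ref{sol:U1} solutions whose \av{}s match $(1,\ldots,1)$ and $(\alpha_1,\ldots,\alpha_d)$, respectively; this formalizes the promise-preserving claim sketched in Lemma~\ref{lem:no_violations}.

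The main obstacle, and where most of the technical work concentrates, is the violation handling. If the input fails weak general position or well-separation, the canonical path may split, merge, acquire duplicate-potential vertices, or encounter hyperplanes with ill-defined orientation; each of these destroys a different part of the \UEOPL structure. The strategy is to enumerate the four \UEOPL outcomes \ref{sol:U1}, \ref{sol:UV1}, \ref{sol:UV2}, \ref{sol:UV3} and exhibit a polynomial-time conversion into a \ref{sol:GV1} or \ref{sol:GV2} certificate whenever the returned \UEOPL solution does not correspond to a genuine $\alpha$-cut. The ingredients are already at hand: Lemma~\ref{lem:inconsistent-orientation-0} turns an inconsistent intermediate orientation into a colorful $(d{-}2)$-flat; Lemma~\ref{lem:not_separated_2} handles two colorful hyperplanes with equal \av; Lemma~\ref{lem:not_separated_3} handles two non-colorful hyperplanes with equal \av and equal \dv; and the remaining cases of \ref{sol:UV3}, where the interpolating \dv falls between two different \av{}s, are reduced to the previous ones via a perturbation of $P$ inside $\conv(P_1),\ldots,\conv(P_d)$, to be carried out in Lemma~\ref{lem:not_separated_4}. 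The promise-preservation then follows by the contrapositive: if no \GHS violation can be extracted, none of the pathologies above occurred, the graph is a single increasing line, and \UEOPL necessarily returns a \ref{sol:U1} solution whose \av matches the target, which is exactly a \ref{sol:G1} solution. The hardest bookkeeping is ensuring that orientations, anchors, and the choice of highest-ranked points $x,y$ defining \dv are internally consistent across the whole graph; once that is pinned down, each violation conversion is a direct application of the lemmata above.
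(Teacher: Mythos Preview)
Your proposal is correct and mirrors the paper's own proof in Section~\ref{section:technical}: the double-wedge vertices, the canonical-path successor/predecessor circuits, the potential combining a weighted \av with the \dv tie-breaker, the two-phase bootstrap for locating the $(1,\ldots,1)$-hyperplane, and the violation conversions via Lemmas~\ref{lem:inconsistent-orientation-0}, \ref{lem:not_separated_2}, \ref{lem:not_separated_3}, and \ref{lem:not_separated_4} are exactly what the paper carries out. One minor slip: a double-wedge vertex is encoded as a triple $(R,p,q)$ rather than a pair $(R,p)$, since both bounding hyperplanes are needed to define the representative hyperplane and hence the \av and \dv.
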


\section{Double-wedges and the neighborhood graph}
\label{section:double-wedge}

In this section we formally define the notion of double-wedges and the underlying graph
that is defined using rotations.

\subsection{Double-wedges}
\label{subsection:double-wedge}

An \emph{anchor} is a colorful $(d-1)$-tuple of $P$ which spans a $(d-2)$-flat.
Let $P_i$ denote the missing color in the anchor. 
Then the tuple for the anchor $R=(p_1,\dots,p_{d-1})$ is ordered as
$R=\{p_1\in P_1,\dots,p_{i-1}\in P_{i-1},p_{i}\in P_{i+1},\dots,p_{d-1}\in P_{d}\}$.
An anchor $R$ along with a pair of points $p,q\in P$
such that $p,q\not\in R$ is called a \emph{double-wedge} if all of the following hold:
\begin{itemize}
\item the hyperplane $H_p$ through $R \cup \{p\}$ does not contain $q$.
This implies that the hyperplane $H_q$ through $R \cup \{q\}$ does not contain $p$.

\item if $x,y$ are the highest ordered points of $P_i$ under $\prec_i$ on either sides of $H_p,H_q$,
then $R\cup \{x,y\}$ does not lie on a hyperplane.

\item the hyperplanes $H_p$ and $H_q$ both intersect each of the convex hulls 
$\conv(P_1),\dots,\conv(P_d)$. 
If a hyperplane is colorful, the orientation is defined naturally.
Suppose $H_q$ is non-colorful, then we have $(d-1)$ colors in $R\cup \{q\}$,
so we use $R$ and a point in the convex hull of the missing color to define the orientation as described previously.

\item the intersection of the open halfspaces $H_q^{+}\cap H_p^{-}$ is empty,
that is, it does not contain any point of $P$.
Similarly $H_p^{+}\cap H_q^{-}$ must also be empty.
\end{itemize}
We visualize the anchor as a $(d-2)$-ridge through which $H_p,H_q$ pass through.
A rotation around the anchor changes one hyperplane to the other without passing through any other point of $P$.
Intuitively the double-wedge refers to the space $(H_q^{+}\cap H_p^{-})\cup (H_p^{+}\cap H_q^{-})$ and we use this interpretation several times.
See Figure~\ref{figure:double-wedge-1} for a simple example.

\begin{figure}[ht]
\centering
\includegraphics[width=0.7\textwidth,page=2]{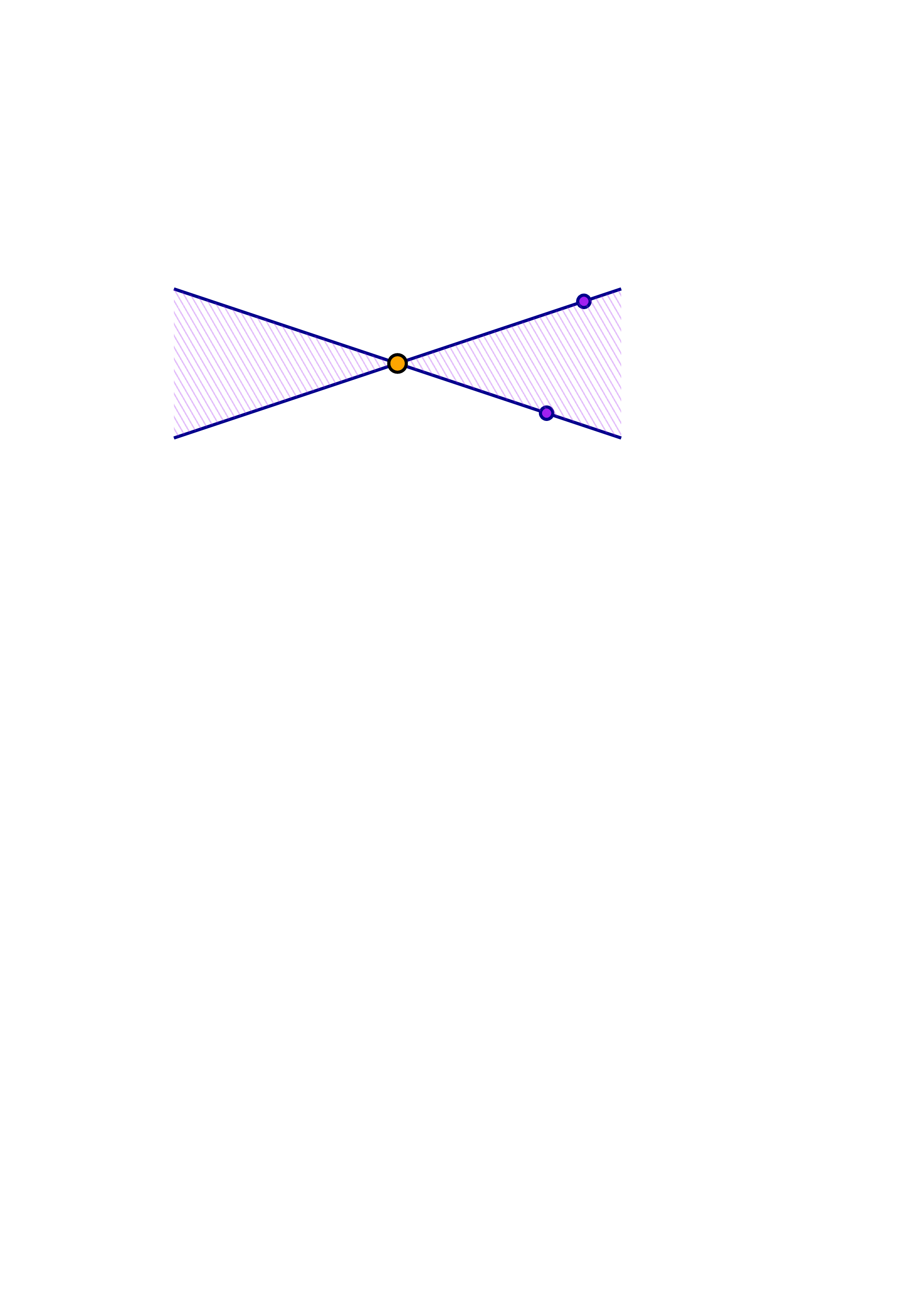}
\caption{A double-wedge $(R,p,q)$ with indicated orientations. The shaded region is empty.}
\label{figure:double-wedge-1}
\end{figure}

For a double-wedge $w:=(R,p,q)$, we define a \emph{representative hyperplane} $H_w$ 
as the hyperplane that is the angular bisector of the double-wedge.
Since a double-wedge is empty, $H_w$ does not contain any point of $P$
apart from $R$.
We define an orientation for $H_w$ based on $R$ and the color missing from $R$.
Let $x,y$ be points from the missing color as defined before.
Without loss of generality, let $x\in H_w^{+}$, and $y\in H_w^{-}$.
We call the first hyperplane among $H_p,H_q$ that intersects the directed segment $xy$ as the \emph{upper hyperplane} of $w$ and the other hyperplane
as the \emph{lower hyperplane} of $w$.
A simple example can be found in Figure~\ref{figure:double-wedge-2}.

\begin{figure}[ht]
\centering
\includegraphics[width=0.7\textwidth,page=5]{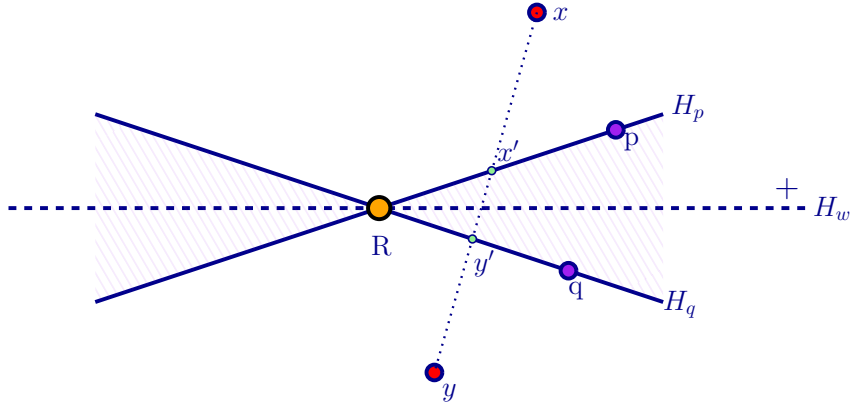}
\caption{The points $x,y$ are from the missing color.}
\label{figure:double-wedge-2}
\end{figure}

The \emph{\av} of any oriented hyperplane $H$ is a $d$-tuple $(a_1,\dots,a_d)$ of integers
where $a_i$ is the number of points of $P_i$ in the closed halfspace $H^{+}$ for $i\in [d]$.
The \av of a double-wedge $w=(R,p,q)$ is defined as the \av of its representative hyperplane.
We say that a double-wedge $w=(R,p,q)$ is 
non-colorful, if both $R\cup \{p\}$ and $R\cup \{q\}$ are non-colorful, and 
colorful, if exactly one of $R\cup \{p\}$ and $R\cup \{q\}$ is colorful, and
very colorful, if both $R\cup \{p\}$ and $R\cup \{q\}$ are colorful.

Under the assumption of weak general position, we additionally have that
if $w$ is non-colorful, then $H_p,H_q$ are non-colorful, and 
if $w$ is colorful, exactly one of $H_p,H_q$ is colorful, and
if $w$ is very colorful, both $H_p,H_q$ are colorful.

\begin{remark} 
The definition of \dv for hyperplanes can be extended to double-wedges by setting the \dv of a double-wedge as that of its representative hyperplane.
Consequently, the results of Lemmas~\ref{lemma:consistent-orientation-0}, \ref{lemma:d-value-strictly} and \ref{lem:inconsistent-orientation-0}
extend to double-wedges with simple modifications. 
\end{remark}

\subsection{Defining a neighborhood graph}
\label{subsection:nbr-graph}

We define a concept of neighborhood between double-wedges, and then we use this to define a graph whose vertices correspond to the double-wedges.
We first describe the graph under the assumptions that the colors are well-separated and $P$ is in weak general position.
Later we show how to handle the cases when these assumptions fail.

We call two double-wedges $(R,p,q),(R',p',q')$ \emph{neighboring} if both share a common hyperplane, that is, 
$\{H_p,H_q\}\cap \{H_{p'},H_{q'}\}\neq\varnothing$, with an exception that we elaborate below.
A double-wedge $w=(R,p,q)$ has different number of neighbors depending on how colorful its hyperplanes are.
The anchor can be written in the form $R=\{x_1,\dots,x_{d-1}\}$. 
\begin{enumerate}
\item \label{NG:case1} Let $w$ be non-colorful. 
Then $p,q$ both share their colors with those of $R$.
Suppose $p$ has the same color as $x_i$.
Then there are at most three neighboring double-wedges that share $H_p$.
One of them use the same anchor $R$, and as an exception we do not count this as a neighboring double-wedge.
For the two remaining neighbors, the anchor is $R'=(x_1,\dots,p,\dots,x_{d-1})$ where $p$ has replaced $x_i$.
The two rotational directions determine the two double-wedges.
Only one of them has the same \av as $w$, since the representative hyperplanes contain $x_i$ on opposite sides.
With a similar argument, there are at most two neighboring double-wedges that share $H_q$ and at most one of them has the same \av as $w$.

\item \label{NG:case2} Let $w$ be colorful, where $H_p$ is colorful and $H_q$ is non-colorful, without loss of generality.
By replacing some $x_i$ by $p$ we get an anchor that is contained in $H_p$ and which may define a double-wedge for each of the two rotational directions.
Since there are $(d-1)$ possible anchors formed by replacement, there are at most $2(d-1)$ double-wedges that share $H_p$. 
Additionally, keeping the anchor $R$ fixed, there is at most one neighboring double-wedge.
So there are at most $2d-1$ neighboring double-wedges of $w$ that share $H_p$.
The case for $H_q$ is similar to case (\ref{NG:case1}).

\item Let $w$ be very colorful. 
Similar to case (\ref{NG:case2}), there are at most $2d-1$ double-wedges sharing $H_p$.
The case for $H_q$ is similar.
\end{enumerate}

\subparagraph*{The neighborhood graph.}
We build a graph $G$ where each vertex corresponds to a double-wedge.
Let $w=(R,p,q)$ be any double-wedge.
For simplicity, we denote the vertex in $G$ corresponding to $w$ also by $w$.
If $H_p$ is colorful, we add an edge in $G$ between $w$ and the vertex of each neighboring double-wedge that shares $H_p$.
If $H_p$ is non-colorful, we add an edge only with the vertex of the double-wedge that shares its \av with that of $w$.
Thus, non-colorful double-wedges have degree two in $G$, while colorful and very colorful double-wedges have degrees at most $2d$
and $4d-2$, respectively.

We transfer each attribute of a double-wedge to its vertex in $G$.
For instance, we call vertices of $G$ as non-colorful, colorful or very colorful corresponding to the color of the double-wedge representing the vertex. 
Similarly, each vertex has an \av that corresponds to the \av of its double-wedge, and so on.
See Figure~\ref{figure:graph} for an elementary example.

\begin{figure}
\centering
\includegraphics[width=0.75\textwidth]{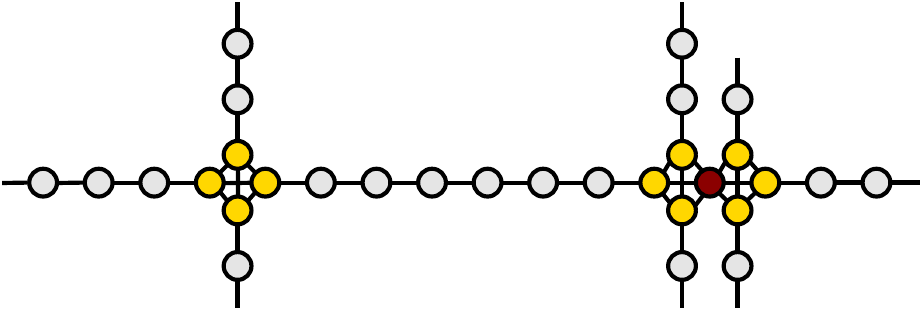}
\caption{A subgraph of $G$: the gray vertices (lightly shaded) are non-colorful, the golden vertices (shaded) are colorful,
and the red vertices (heavily shaded) are very colorful.}
\label{figure:graph}
\end{figure}

Let $v\in G$ be any vertex and let $(\alpha_1,\dots,\alpha_d)$ denote the \av of $v$.
The largest \av for any hyperplane is $(n_1,\dots,n_d)$ that occurs on a unique tangent hyperplane whose half-space contains $P$.
With our definition of the \av of double-wedges using the representative hyperplanes, 
for any double-wedge $w$, the \av of $w$ is smaller in at least one coordinate from the maximum.

\begin{lemma}[Grid-like structure]
\label{lemma:grid-structure}
	Let $H$ be a colorful hyperplane with \av $(\alpha_1,\dots,\alpha_d)$.
	For each $j \in [d]$, if $\alpha_j \leq n_j-1$ (resp. $\alpha_j \geq 2$), then there is a path $w,w_1,w_2,\dots,w_k,w'$ in $G$ such that the double-wedge $w$ is incident to $H$, the double-wedge $w'$ is incident to another colorful hyperplane $H'$, $w,w_1,w_2,\dots,w_k$ share the same \av and the \av for $H'$ differs only in the $j$-th component, where the value is $\alpha_j + 1$ (resp. $\alpha_j-1$).
\end{lemma}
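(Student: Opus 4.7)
The plan is to realize the path in $G$ explicitly as the sequence of double-wedges traced out by iterating Procedure \textbf{NextRotate} starting from $H$, with an initial anchor formed by removing $H$'s point of color $j$. I describe the $\alpha_j+1$ direction; the $\alpha_j-1$ direction is symmetric, obtained by reversing the rotation (swapping the roles of the top-ranked points $x$ and $y$ of $P_j$ on the two sides).

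First, I would let $x_1 \in P_1, \ldots, x_d \in P_d$ be the colorful points on $H$ and set $R_0 = \{x_i : i \in [d],\, i \neq j\}$, so that $H = H_0$ passes through $R_0 \cup \{x_j\}$. Since $\alpha_j \leq n_j - 1$, the set $P_j \cap H^-$ is nonempty, so \textbf{NextRotate} applies to $(R_0, x_j)$ without reporting out of range. Iterating \textbf{NextRotate} produces states $(R_i, p_i)$, hit points $q_i$, and hyperplanes $H_i$ through $R_i \cup \{p_i\}$, always rotating in the direction away from the top-ranked point of $P_j \cap H_i^+$. By Lemma~\ref{lemma:d-value-strictly} together with weak general position, the intermediate hyperplanes $H_1, \ldots, H_k$ are non-colorful with common missing color $j$, their \dv strictly increases along the sequence, and the process terminates at the first step $k+1$ where $q_k \in P_j$, producing the next colorful hyperplane $H_{k+1} = H'$. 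Theorem~\ref{thm:ghs} guarantees that such an $H'$ with the desired incremented $\alpha_j$ exists and is unique.

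Second, I would define $w_i := (R_i, p_i, q_i)$ for $i = 0, 1, \ldots, k$, and verify each is a valid double-wedge: the empty-wedge condition holds because \textbf{NextRotate} rotates only up to the first hit; weak general position rules out simultaneous hits; orientations of each bounding hyperplane are well-defined and consistent across the rotation by Lemma~\ref{lemma:consistent-orientation-0}. Consecutive $w_i$ and $w_{i+1}$ share the bounding hyperplane $H_{i+1}$, which makes them adjacent in $G$ by construction of the neighborhood graph. Setting $w := w_0$ (incident to $H$) and $w' := w_k$ (incident to $H'$) then yields the required path $w, w_1, \ldots, w_{k-1}, w'$. To conclude, I would verify that $w, w_1, \ldots, w_{k-1}$ all have the same \av as $H$, and $w'$ has the \av of $H'$. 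Each representative hyperplane $H_{w_i}$ contains no point of $P$ outside its anchor $R_i$, so its \av is determined by strict-halfspace membership of the remaining points. For each color $c \neq j$, the count is preserved: if the rotation between $w_{i-1}$ and $w_i$ does not cross a point of $P_c$, side memberships are unchanged; if \textbf{NextRotate} swaps an anchor point $r$ of color $c$ for a hit point $q$ of color $c$, then $r$ and $q$ both lie on the shared hyperplane $H_i$, and the contribution of $q$ to $H_{w_i}^+$ compensates the loss of $r$ in the strict halfspace. For color $j$, no point of $P_j$ is crossed from $w_0$ through $w_{k-1}$, so $\alpha_j$ is constant on these double-wedges; at $w_k$ the representative hyperplane finally straddles the newly encountered $q_k \in P_j$, incrementing $\alpha_j$ by exactly $1$.

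The main obstacle will be the \av-preservation bookkeeping across anchor swaps: one must argue, using consistency of orientation (Lemma~\ref{lemma:consistent-orientation-0}), that the swapped-in hit point $q$ lands in the correct strict half-space relative to $H_{w_i}$ to cancel the contribution that $r$ previously made relative to $H_{w_{i-1}}$. This requires a careful analysis of the pencil of hyperplanes through the common $(d-2)$-flat and of the bisecting plane on either side of the shared hyperplane $H_i$. A subsidiary concern is verifying that \textbf{NextRotate} never reports a violation or out-of-range condition during the iteration; the former is precluded by weak general position, and the latter by $\alpha_j \leq n_j - 1$, since out of range only occurs when the current extra point has the missing color and all its copies already lie in $H^+$.
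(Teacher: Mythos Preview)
Your approach is the paper's: iterate \textbf{NextRotate} from the anchor $R_0$ missing color $j$, collect the resulting double-wedges as the path, and argue $\alpha$-vector preservation across anchor swaps via orientation consistency. The intermediate (non-colorful) steps are handled exactly as you describe, and the paper too cites Lemma~\ref{lemma:consistent-orientation-0} for those.

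The gap is at the colorful endpoints. Lemma~\ref{lemma:consistent-orientation-0} asserts consistency only when the relevant bounding hyperplane is \emph{non-colorful}; it says nothing about the rotation leaving the colorful $H$ or arriving at the colorful $H'$. For those two steps the paper invokes the contrapositive of Lemma~\ref{lem:inconsistent-orientation-0}: under well-separation, orientation remains consistent even across a colorful hyperplane, so the newly hit point $q_k \in P_j$ really passes from the negative to the positive side and $\alpha_j$ increments by exactly one at $H'$. Your proposal never invokes well-separation directly---citing Theorem~\ref{thm:ghs} gives existence of some $H'$ with the right vector, but not that your rotation sequence reaches it with the correct count---and Lemma~\ref{lemma:consistent-orientation-0} alone cannot close this. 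This is precisely the place where well-separation is indispensable; without it the increment can fail, as in Figure~\ref{figure:not-separated}. A minor related slip: the representative hyperplane of $w'=w_k$ still has $q_k$ strictly on its negative side, so $w'$ shares the $\alpha$-vector of the earlier wedges; only the lower hyperplane $H'$ carries the incremented $\alpha_j$, which is what the lemma statement actually asserts.
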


\begin{proof}
For the case that $\alpha_j \leq n_j-1$, we set an anchor $R$ in $H$ that excludes the $j$-th colored point, say $p_j$. Then, we apply {\bf{Procedure}} $NextRotate$ starting from $(R,p_j)$ until we get another colorful hyperplane $H'$. Let $w$ be the double-wedge created by the first rotation. Note that $p_j$ is on the upper hyperplane of $w$.
During this sequence of rotations, we also get a sequence of double-wedges. Before the rotating hyperplane $H''$ hits a point $p_i$ of repeated color $i$, assume that $p_i$ is in the negative half-space of $H''$. Once $p_i$ is on $H''$, we swap $p_i$ with another point $p'_i$ of the same color in the anchor and keep the rotation towards the opposite direction of the orientation so that $p_i$ is in the positive half-space of $H''$ and $p'_i$ from the positive half-space moves to the negative half-space. This is true because the orientation is consistent by Lemma~\ref{lemma:consistent-orientation-0}. If $p_i$ is in the positive half-space of $H''$ before $p_i$ is hit by $H''$, then $p_i$ remains in the positive half-space and $p'_i$ as well (see Figure~\ref{figure:why-double}).
Both cases maintain $\alpha_i$ during the rotation. Thus, all non-colorful double-wedges in this sequence of rotations have the same \av as $w$. In the last rotation, the rotating hyperplane hits the first point $p'_j$ of color $j$. By Lemma~\ref{lem:inconsistent-orientation-0}, well-separation guarantees the consistency of the orientation of the rotating hyperplane so that $p'_j$ moves from the negative half-space to the positive half-space and other points of color $j$ remain in the same sides of the hyperplane. Thus, $\alpha_j$ is increased by one. The same argument also works for the case that $\alpha_j \geq 2$ by using the inverse of {\bf{Procedure}} $NextRotate$. 

Since all the double-wedges created by the first rotation for each $j$ are incident to $H$, they are also connected in $G$ by definition.
\end{proof}

By making use of Lemma~\ref{lemma:grid-structure},
we show that:
\begin{lemma}
\label{lemma:connected-graph}
The neighborhood graph $G$ is connected.
\end{lemma}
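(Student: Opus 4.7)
The plan is to reduce connectivity to two subclaims: (a) every vertex of $G$ is in the component of some vertex whose double-wedge is incident to a colorful hyperplane, and (b) all such ``colorful'' vertices lie in a single component.

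For (a), I would start from an arbitrary non-colorful vertex $w=(R,p,q)$; if $w$ is colorful or very colorful there is nothing to do. Weak general position forces both $H_p$ and $H_q$ to be non-colorful, so by the edge rule $w$ has exactly one neighbor on each side, namely the unique neighboring double-wedge that shares the respective hyperplane and has the same \av as $w$. Following the $H_p$-neighbor corresponds to continuing the rotation around the anchor of $H_p$ by one step of {\bf{Procedure}} $NextRotate$. Iterating produces a chain of non-colorful double-wedges all with the same \av; the double-wedge version of Lemma~\ref{lemma:d-value-strictly} (noted in the remark) forces the \dv to be strictly monotonic along the chain, so no vertex can be revisited, and finiteness of the set of double-wedges then forces the chain to terminate at a colorful or very colorful double-wedge.

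For (b), I would first observe that the edge rule turns each colorful hyperplane $H$ into a clique in $G$: whenever a double-wedge has $H$ as one of its hyperplanes and $H$ is colorful, every neighboring double-wedge sharing $H$ is joined to it by an edge. Hence it suffices to connect distinct colorful hyperplanes via paths in $G$. Lemma~\ref{lemma:grid-structure} provides exactly this: for any colorful hyperplane $H$ with \av $\alpha$ and any coordinate $j$ with $\alpha_j < n_j$ (resp. $\alpha_j > 1$), there is a path in $G$ from a double-wedge incident to $H$ to a double-wedge incident to a colorful hyperplane whose \av is obtained from $\alpha$ by $+1$ (resp. $-1$) in the $j$-th coordinate. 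Since Theorem~\ref{thm:ghs} furnishes a colorful hyperplane for every \av in the integer box $\prod_{i=1}^{d}[n_i]$, and this box is connected under unit-coordinate moves, chaining the paths from Lemma~\ref{lemma:grid-structure} places all colorful double-wedges in one component.

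Combining (a) and (b) yields connectivity of $G$. The main subtlety, and the step I would check most carefully, is in (a): one must verify that the $H_p$-side neighbor prescribed by the edge rule really matches the next hyperplane produced by {\bf{Procedure}} $NextRotate$, so that strict monotonicity of \dv genuinely rules out closed loops inside a maximal non-colorful run and forces the walk to exit at a colorful endpoint.
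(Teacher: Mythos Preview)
Your proposal is correct and follows essentially the same two-part strategy as the paper: use Lemma~\ref{lemma:grid-structure} to connect all (very) colorful double-wedges, and use the rotation procedure to walk any non-colorful double-wedge into a colorful one. The paper's proof is terser---it simply says to apply $NextRotate$ on the lower hyperplane until a point of the missing color is hit---whereas you spell out the termination argument via strict monotonicity of the \dv{} and explicitly invoke the clique structure at colorful hyperplanes together with connectedness of the integer box $\prod_i [n_i]$; but the underlying ideas coincide.
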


\begin{proof}
By Lemma~\ref{lemma:grid-structure}, we know that all (very) colorful double-wedges are connected. For non-colorful double-wedges, we apply {\bf{Procedure}} $NextRotate$ on its lower hyperplane until the rotating hyperplane hits some point of the missing color, which implies that non-colorful double-wedges also connect to some (very) colorful double-wedge.
\end{proof}

The neighborhood graph $G$ imitates a grid in a coarse sense.
There is a "vertex" for every colorful $d$-tuple of $P$, and there are paths connecting these grid vertices.
We showed in Lemma~\ref{lemma:connected-graph} that $G$ is connected.
Therefore, given a target \av $(\alpha_1,\dots,\alpha_d)$, the correct $d$-tuple can be found by starting from some vertex and walking towards the solution.
See Figure~\ref{figure:canonical-path} for an illustration.

\begin{figure}[ht]
\centering
\hspace{0.3cm}
\includegraphics[width=0.85\textwidth]{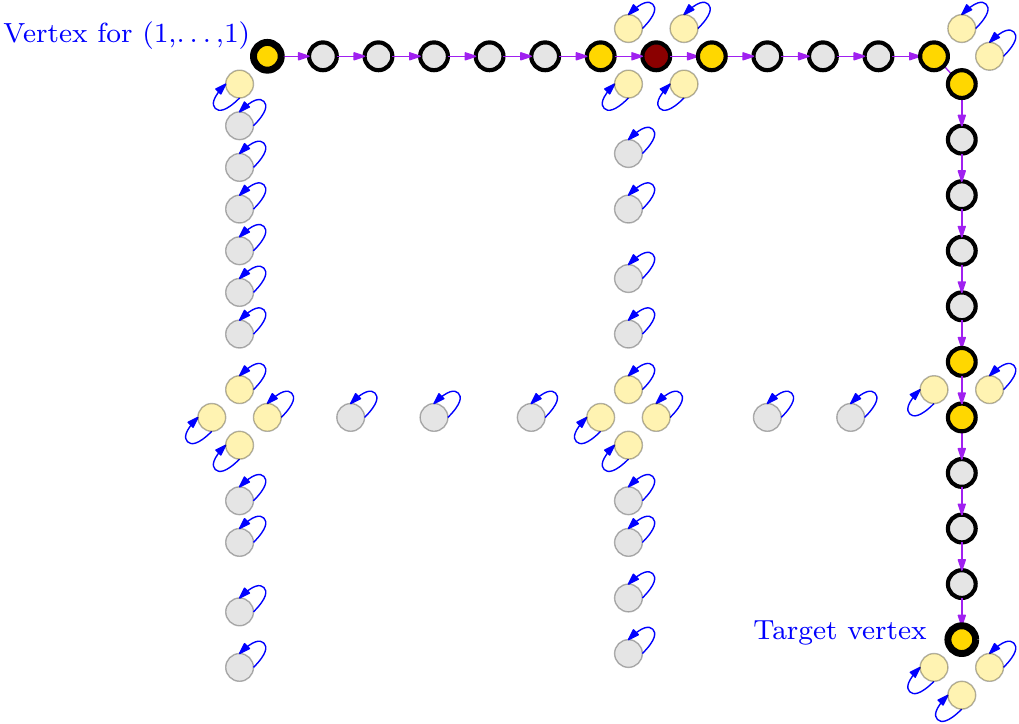}
\caption{The canonical path is marked with arrows. All other vertices in the graph have self-loops.}
\label{figure:canonical-path}
\end{figure}

If $P$ violates well-separation or weak general position, then many nice properties of the neighborhood graph are destroyed.
Double-wedges may fail to have consistent orientations by Lemma~\ref{lem:inconsistent-orientation-0}.
There may be multiple solutions for the same $\alpha$-cut, and no solutions for other cuts.
The former case will manifest as multiple vertices with the same \av, but they may lie in different connected components, 
so Lemma~\ref{lemma:connected-graph} will fail, making the graph disconnected.
For the latter case, there will be no vertex in $G$ that corresponds to the $\alpha$-cut.
The grid-like structure exhibited in Lemma~\ref{lemma:grid-structure} is also not applicable anymore,
meaning that the canonical path may not exist.
See Figure~\ref{figure:violations} for a graph that contains violations.
\begin{figure}[!ht]
\centering
\hspace{0.5cm}
\includegraphics[width=0.85\textwidth]{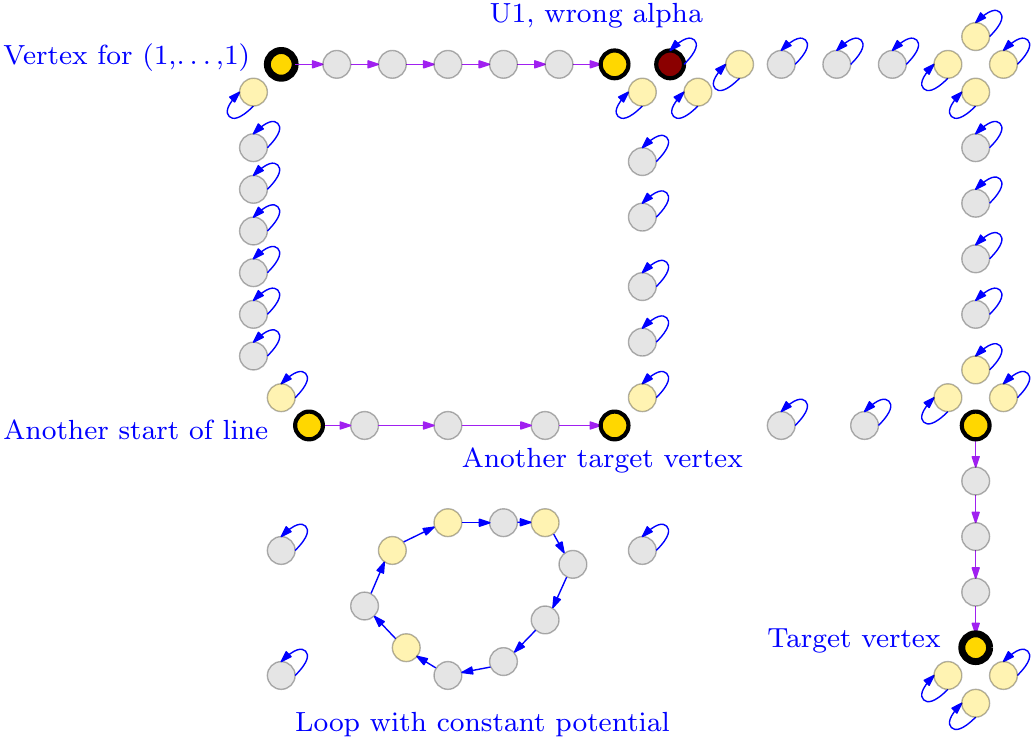}
\caption{A subgraph with multiple violations.}
\label{figure:violations}
\end{figure}

\section{The formal reduction}
\label{section:technical}

In this section, we give a formal reduction from an instance $\cal{I}$ of \GHS to an instance $\cal{I'}$ of \UEOPL in polynomial time. 
An instance $\cal{I}$ of \GHS is defined by $d$ finite sets of points $P = P_1 \cup \ldots \cup P_d$ in $\R^d$ 
and a vector $(\alpha_1,\ldots,\alpha_d)$ of positive integers that satisfy $a_i \leq |P_i|$ for $i=1,\dots,d$. 
Let $m_0$ be the number of bits needed to represent each coordinate of points in $P$ and let $n_0$ denote $\max_{i\in[d]} |P_i|$.
Let $k$ denote the number of coordinates of $\alpha$ that are not equal to one.
Without loss of generality, we assume that $\{ \alpha_1,\dots,\alpha_k \}$ are the non-unit entries in $\alpha$.
Suppose that $(\alpha_1,\ldots,\alpha_d) \neq (1,\ldots,1)$ and 
we are also given a transversal hyperplane $H_0$ passing through some 
$p'_1 \in  P_1,\ldots,p'_d \in P_d$ with $\alpha$-vector $= (1, \ldots, 1)$. 
Later, we show that we do not need to know $H_0$ in advance.
Then, we construct an \UEOPL instance $\cal I'$ on vertex set $\{0, 1\}^\kappa$, where $\kappa=3\cdot \ceil{\log d} + (d+1)\cdot \ceil{\log n_0}$ 
and with procedures $\S$, $\P$ and $\V$ , where $\V : \{0, 1\}^k \rightarrow \{0,\ldots, 2^\gamma-1\}$ and $\gamma = O(\poly(d,n_0,m_0))$.

As shown in Section~\ref{section:ghs-main}, a vertex $v$ in $\cal I'$ corresponds to 
$(R,p,q)$ in $\cal I$, where $R$ is a colorful point set of size $(d-1)$ from $P$, and $p,q \in P$.
We are only interested in the case when $(R,p,q)$ is a double-wedge, as per the
definition in Section~\ref{subsection:double-wedge}.
Otherwise, we create a self loop on $v$ in $\cal{I'}$. 
Furthermore, if there are no violations in $\cal I$, we can define a canonical path from the vertex $v_0$ with \av $= (1,\ldots,1)$ to the unique vertex 
$v_{\alpha}$ with \av $= (\alpha_1, \dots, \alpha_d)$ (shown in Section~\ref{subsection:reduction-overview}), which is the unique path in $\cal I'$. 
For other vertices $v$ not on the path, we also create a self loop on $v$. 
For instance, when $\cal I$ fails weak general position assumption, then $R \cup \{p_1, p_2, \ldots, p_m\} $ lie on the same hyperplane, 
where $p_1 \prec \dots \prec p_m \in P$.
In this case we create a cycle on $v_1=(R,p_1,q), v_2=(R,p_2,q), \ldots, v_m = (R,p_m,q)$ with the same potential value on each $v_i$, 
so that this violation may be reported as the violation~\ref{sol:UV1} in $\cal I'$. 
When $\cal I$ fails the well-separated assumption, the graph we constructed may contain more than one path, which may be reported as the violations~\ref{sol:UV2} or \ref{sol:UV3} in $\cal I'$. In particular, if a hyperplane witnesses both the violations of weak general position and well-separation, then the cycle may become a path with the same potential value, so any violation could be possible.

First we describe how to represent a tuple $(R,p,q)$ in $\kappa$ bits. 
Each vertex is represented as a $(d+4)$-tuple $(t_1, \ldots, t_{d+4}) \in \Z^{d+4}$ such that 
\begin{itemize}
\item $t_1$ contains $\ceil{\log d}$ bits representing the index of the missing color in $R$,

\item $t_2, \ldots, t_d$ each contain $\ceil{\log n_0}$ bits for the indices of the points in $R$ ordered by $\prec$,

\item $(t_{d+1}, t_{d+2})$ contain $\ceil{\log d}$ and $\ceil{\log n_0}$ bits for the index of the color and the index of $p \in P_{t_{d+1}}$ respectively,
\end{itemize}
and we use the same idea to represent $q$ by $(t_{d+3},t_{d+4})$. 
Altogether, we need at most $\kappa=3\cdot \ceil{\log d} + (d+1)\cdot \ceil{\log n_0}$ bits in the encoding.
Let $f_v:\{P\}^{d+1} \rightarrow \{0,1\}^{\kappa}$ denote the function that 
encodes $(R,p,q)$ to $(t_1, \ldots, t_{d+4})$. If $(R,p,q)$ is a double-wedge, then $(R,q,p)$ also represents the same double-wedge. Since we do not want to create two valid vertices in $G$ corresponding to the same double-wedge, we only pick the one with $(t_{d+1},t_{d+2})$ on the upper hyperplane as a valid double-wedge. The following lemma details how we can verify whether a given encoding $(t_1, \ldots, t_{d+4})$ is a double-wedge:

\begin{lemma}
\label{lem:verify_encoding}
	Given a tuple $(t_1, \ldots, t_{d+4})$ and $P$, it can be verified whether it is a double-wedge in $\poly(n,d)$ time.
\end{lemma}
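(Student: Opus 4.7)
The plan is to decode the tuple into a candidate $(R,p,q)$ and then check each of the defining conditions of a double-wedge from Section~\ref{subsection:double-wedge} one by one, exploiting the fact that every condition reduces either to solving a constant-sized linear system or to scanning $P$.

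First, interpret $t_1$ as the index of the missing color, $(t_2,\dots,t_d)$ as the indices of the $d-1$ anchor points $R=\{p_1,\dots,p_{d-1}\}$ (one from each remaining color, in the order dictated by $\prec$), and $(t_{d+1},t_{d+2})$, $(t_{d+3},t_{d+4})$ as the encodings of $p$ and $q$. Reject in polynomial time if any index is out of range, if $R$ is not colorful with the claimed missing color, or if $\{p,q\}\cap R\neq\varnothing$. Next, check that $R$ spans a $(d-2)$-flat by computing the rank of the $(d-1)\times d$ matrix of differences $p_j-p_1$ via Gaussian elimination in $O(d^3)$ time. Compute the affine hulls $H_p$ and $H_q$ by solving two $d\times d$ linear systems, and verify $q\notin H_p$ and $p\notin H_q$ by plugging in the coordinates.

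Then verify the remaining structural conditions. Let $P_i$ be the missing color. Partition $P_i$ by the signs of the linear functionals defining $H_p$ and $H_q$, and extract the highest-$\prec_i$ ranked points $x,y$ on the two sides of each hyperplane as required; verify $R\cup\{x,y\}$ does not lie on a common hyperplane by another rank computation. To verify that $H_p$ and $H_q$ both meet every $\conv(P_j)$, scan each $P_j$ and confirm the presence of points on both (closed) sides of each hyperplane, which is equivalent to $\conv(P_j)$ crossing the hyperplane. To verify emptiness of the two open wedges $H_p^+\cap H_q^-$ and $H_p^-\cap H_q^+$, loop once over $P$ and evaluate the two linear functionals on each point; reject if any point lies strictly inside either open wedge. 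Finally, determine the orientation of the representative hyperplane (using $R$ together with the intersection of $xy$ with the appropriate hyperplane, via the determinant formula from Section~\ref{section:ghs-main}), identify the upper hyperplane of the double-wedge as the one meeting the directed segment $xy$ first, and accept only if the pair $(t_{d+1},t_{d+2})$ sits on that upper hyperplane; this picks out the canonical encoding among the two possibilities for $(R,p,q)$ vs.\ $(R,q,p)$.

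Each ingredient above is either a comparison, a scan through $P$, or a linear-algebra computation on a matrix of size at most $d\times d$; with coordinates of bit-length $m_0$ the determinant and rank computations stay polynomial in $n$, $d$, and $m_0$, so the whole test runs in $\poly(n,d)$ arithmetic operations as claimed. The only mildly subtle step is the orientation test for a non-colorful hyperplane, since this is defined via an auxiliary point in the convex hull of the missing color; however, the construction in Section~\ref{section:ghs-main} uses precisely the intersection of segment $xy$ with the hyperplane, which we already computed while checking condition~(2), so no additional machinery is needed.
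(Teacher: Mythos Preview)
Your proposal is correct and follows essentially the same approach as the paper: decode the tuple, check index validity, verify that $R$ spans a $(d-2)$-flat, test each clause of the double-wedge definition from Section~\ref{subsection:double-wedge} by linear-algebra computations and scans over $P$, and finally confirm that $p$ sits on the upper hyperplane. The paper's own proof is terser (it merely asserts that ``each step can be done in $\poly(n,d)$ time''), whereas you spell out the rank, determinant, and sign tests explicitly; but the underlying verification procedure is the same.
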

\begin{proof}
We first check whether each $t_i$ is a valid index of the corresponding color or point set. Then, we can set $(R,p,q) = (t_1, \ldots, t_{d+4})$. Next, we check whether $R$ spans a $(d-2)$-flat and the definition of a double-wedge in Section~
\ref{subsection:double-wedge}. Let $x,y$ be the two points from the missing color used to define the orientations of $H_p$ and $H_q$. If the segment $xy$ intersects the affine hull of $R$, then the orientations of $H_p,H_q$ are undefined (if non-colorful) and $R \cup \{x,y\}$ violates well-separation as well as weak general position. In this case, we consider $(R,p,q)$ not to be a double-wedge. If $(R,p,q)$ is supposed to be on the canonical path, then the path is cut into two, which can be detected as a violation in Lemmas~\ref{lem:U1_violation} or \ref{lem:UV2}.
The last step is to confirm that $p$ lies on the upper hyperplane. If any of the above steps fails, then $(t_1, \ldots, t_{d+4})$ is not a double-wedge. It is not hard to see that each step can be done in $\poly(n,d)$ time.
\end{proof}

If there are no violations in $\cal I$, it is straightforward to represent the canonical path using the procedures $\S$ and $\P$. 
The main challenge of the reduction is to handle the violations and to obtain the violation certificate from the output of the \UEOPL instance. 
We first describe two key functions that show how to find the two neighbors of a given vertex $w$ on the canonical path. When $w$ is non-colorful and we want to move forward (resp. backward) along the path, then there is a repeated color point $q$ (resp. $p$) on the lower (resp. upper) hyperplane, so that we can swap that point with the one in $R$ with the corresponding color, and rotate the hyperplane in such a way that the \av is preserved.
When $w$ is (very) colorful, we can choose to swap colored points in $R$ such that the \av is increased/decreased by the definition of the canonical path.

\begin{figure}
\centering
\includegraphics[width=0.7\textwidth,page=3]{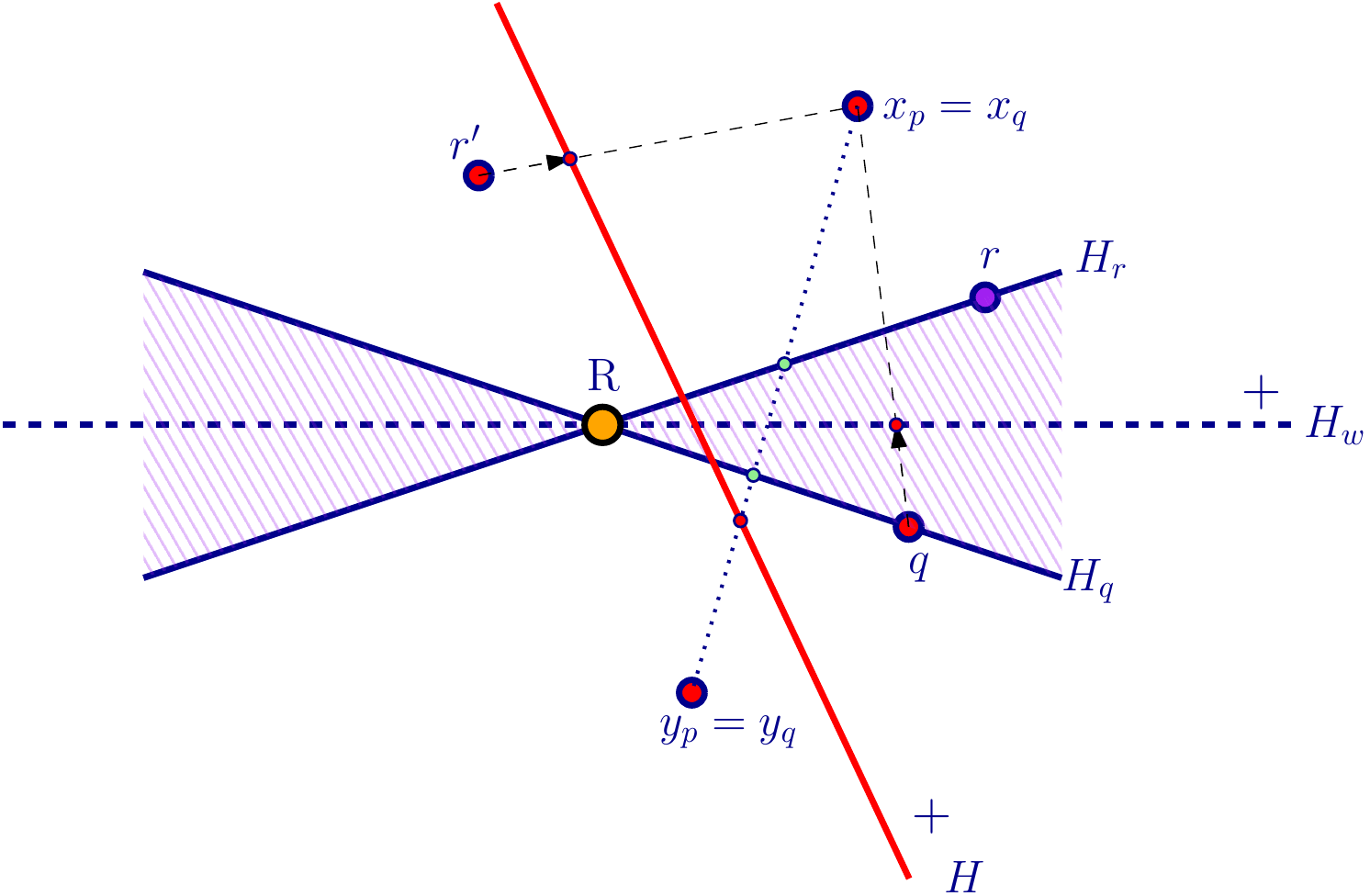}
\caption{From $(R,p,q)$, we rotate $H_q$ anchored at $q=R'$ and get the next $(R',p',q')$, but the orientation of $(R',p',q')$ is not defined because the segment $xy$ passes through $R'$.}
\label{figure:nextN}
\end{figure}

The rotation process is handled by procedures $NextNeighbor$ and $PrevNeighbor$, which we describe next.
Let $w=(R,p,q)$ be the current double-wedge.
Some abnormal cases may happen in the output $w'=(R',p',q')$ of $NextNeighbor$ or $PrevNeighbor$ when the segment $xy$ that defines the orientation of $H_{w'}$ passes through $R'$ (see Figure~\ref{figure:nextN}),  $H_{p'}$ or $H_{q'}$ contains more than $d$ points, or the orientations of $H_{p'},H_{q'},H_{w'}$ are not consistent. For these cases, the path will end or start at $w$. When \UEOPL outputs $w$, we can compute $w'$ and find the certificate of a violation as follows:
for the first or second case, it is easy to see that it violates well-separation and/or weak general position. For the third case, we show that it violates well-separation and that we can obtain a certificate for the violation in Lemma~\ref{lem:inconsistent-orientation-0}.

\vspace{0.15in}
\noindent{\bf{Procedure}} $(R',p',q') = NextNeighbor(R,p,q)$
\begin{enumerate}
\item Let $w=(R,p,q)$ and let $t_1$ be the missing color in $R$.
\item Let $P^+_{t_1}$ be the subset of $P_{t_1}$ that lies in $H^+_w$ and $P^-_{t_1}$ be the subset of $P_{t_1}$ that lies in $H^-_w$. Then, let $x \in P^+_{t_1}$ be the highest ranked point according to the order $\prec_{t_1}$ and $y \in P^-_{t_1}$ be the highest ranked point according to the order $\prec_{t_1}$. As we describe previously, $q$ lies on the lower hyperplane.
\item Since $w$ is supposed to be on the canonical path and is not the end point, the \av of $H_q$ is in the form of $(\alpha_1,\ldots,\alpha_{t_1-1}, b_{t_1},1\ldots,1)$ with $b_{t_1} \leq \alpha_{t_1}$.
\begin{itemize}
\item \label{Next:notColor} If $q$ shares the same color of a point $r$ in $R$, then set $R' := R \cup \{q\} \setminus \{r\}$ and $p' := r$. 
\item If $q$ is in color $t_1$ and $b_{t_1} < \alpha_{t_1}$, then set $R' := R$ and $p' := q$.
\item If $q$ is in color $t_1$ and $b_{t_1} = \alpha_{t_1}$, then let $r$ be the point in $R$ with color $t_1+1$ 
and set $R' := R \cup \{q\}\setminus \{r\}$ and $p'=r$.
\end{itemize}
\item We rotate $H_q$ around the anchor $R'$ in a direction such that the hyperplane is moving away from $x$ along the segment $xy$ until it hits a point $q' \in P$.
\item \textsf{Return} $(R',p',q')$.
\end{enumerate}

\noindent{\bf{Procedure}} $(R',p',q') = PrevNeighbor(R,p,q)$
\begin{enumerate}
	\item Let $w=(R,p,q)$ and let $t_1$ be the missing color in $R$.
	\item Let $P^+_{t_1}$ be the subset of $P_{t_1}$ that lies in $H^+_w$ and $P^-_{t_1}$ be the subset of $P_{t_1}$ that lies in $H^-_w$. Then, let $x \in P^+_{t_1}$ be the highest ranked point according to the order $\prec_{t_1}$ and $y \in P^-_{t_1}$ be the highest ranked point according to the order $\prec_{t_1}$. As we describe previously, $p$ lies on the upper hyperplane.
	\item Since $w$ is supposed to be on the canonical path and is not the starting point, the \av of $H_p$ is in the form of $(\alpha_1,\ldots,\alpha_{t_1-1}, b_{t_1},1\ldots,1)$ with $b_{t_1} \leq \alpha_{t_1}$. When $t_1 = 1$, $b_{1} > 1$.
	\begin{itemize}
	\item If $p$ shares the same color of a point $r$ in $R$, then set $R' := R \cup \{p\} \setminus \{r\}$ and $q' := r$. 
	\item If $p$ is in color $t_1$ and $b_{t_1} > 1$, then set $R' := R$ and $q' := p$.
	\item If $p$ is in color $t_1$ and $b_{t_1} = 1$, then let $r$ be the point in $R$ with color $t_1-1$ and set $R' := R \cup \{p\}\setminus \{r\}$ and $q' = r$.
	\end{itemize}
	\item We rotate $H_p$ around the anchor $R'$ in a direction such that the hyperplane is moving closer to $x$ along the segment $xy$ until it hits a point $p' \in P$.
	\item \textsf{Return} $(R',p',q')$.
\end{enumerate}

\begin{lemma}
The procedures $NextNeighbor(R,p,q)$ and $PrevNeighbor(R,p,q)$ can be completed in $\poly(n,d)$ time.
\end{lemma}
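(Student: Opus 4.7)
The plan is to go through the five steps of \textbf{Procedure} $NextNeighbor$ (the analysis for $PrevNeighbor$ is symmetric) and argue that each reduces to a constant number of standard linear-algebraic primitives applied to $O(n)$ candidates, each primitive costing $\poly(d)$ arithmetic operations on the input coordinates.

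Steps~1 and 3 are trivial bookkeeping on the indices of $R$ and take $O(d)$ work each. For Step~2, I would first compute the defining normal and offset of each of $H_p$ and $H_q$ by solving a $d \times d$ linear system through $R \cup \{p\}$ and $R \cup \{q\}$ respectively, in $O(d^3)$ time, and then compute $H_w$ as the angular bisector of $H_p,H_q$ from these two defining equations in $O(d)$ additional time. Splitting $P_{t_1}$ into $P_{t_1}^+$ and $P_{t_1}^-$ then amounts to evaluating a single linear form at each of the at most $n$ points of $P_{t_1}$ and reading off its sign, after which a linear scan on each side picks out the $\prec_{t_1}$-maximal points $x$ and $y$.

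The main step is Step~4, the rotation. The key observation is that the $(d-2)$-flat $F$ spanned by $R'$ is fixed throughout the rotation, and the family of hyperplanes containing $F$ is a one-parameter family. For each candidate $r \in P \setminus R'$, there is a unique hyperplane $H(r)$ through $F \cup \{r\}$, whose defining equation can be computed in $O(d^3)$ time. I would parameterize the rotation by the signed intersection along the segment $xy$: each $H(r)$ meets the line through $x,y$ at a unique point whose scalar coordinate along $xy$ serves as the parameter $\theta(r)$. The point $q'$ returned by the procedure is then the $r$ minimizing $\theta(r)$ among those strictly on the correct side of the current parameter of $H_q$. Since each $\theta(r)$ is computable by a constant number of determinant evaluations, and there are only $O(n)$ candidates, Step~4 runs in $\poly(n,d)$ time.

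The one subtlety, and the place where one must be a little careful, is the handling of the abnormal configurations already flagged before the procedure description: when $xy$ meets $F$, when two candidates attain the same $\theta$-minimum, or when the orientation flips after the rotation. Detecting each of these reduces to a $\poly(n,d)$ test (a determinant sign check or a scan for ties), so the procedure still terminates in the claimed time, and the offending configuration is then handed off as a violation witness via Lemma~\ref{lem:verify_encoding} or Lemma~\ref{lem:inconsistent-orientation-0}. Summing the per-step bounds yields the lemma.
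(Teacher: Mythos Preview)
Your proposal is correct and follows essentially the same approach as the paper. The paper's own proof is just two sentences (find $x,y$ in linear time; find the first point hit during rotation in polynomial time via prune-and-search), and your write-up simply fleshes out each step more explicitly; the only cosmetic difference is that you use direct enumeration over the $O(n)$ candidates in Step~4 whereas the paper invokes prune-and-search, but both yield the required $\poly(n,d)$ bound.
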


\begin{proof}
The points $x$ and $y$ from the missing color can be found in linear time. We can also check which point will hit the hyperplane first during the rotation by a prune-and-search technique in polynomial time.
\end{proof}

Now we discuss the implementation of $\S$ and $\P$ in $\cal I'$. In {\bf{Procedure}} $\S$, we first point the standard source $0^\kappa$ to the start of the canonical path in Step~\ref{S:w0}. For those vertices $(t_1, \ldots, t_{d+4})$ not on the canonical path, we form self-loops in Steps~\ref{S:invalid}, \ref{S:inconsistent} and \ref{S:notPath}. After these steps, we check whether we reached the target end point of the canonical path in Step~\ref{S:endpt}. To handle the violation of weak general position, when a hyperplane $H$ contains more than $d$ points and at least $d-1$ colors, then it happens that several tuples $(R,p,q)$ represents the same double-wedge. We connect all these tuples to form a cycle in Step~\ref{S:notGeneral_1} when $H$ is the upper hyperplane.
If instead $H$ is the lower hyperplane, we handle this in Step~\ref{S:notGeneral_2}. In the end, we use $NextNeighbor$ to advance to the next vertex along the canonical path in Step~\ref{S:next_1}. As we mentioned previously, if something goes wrong in the next vertex, the path will end in Step~\ref{S:next_2}. The {\bf{Procedure}} $\P$ is basically the same as {\bf{Procedure}} $\S$, so we only mention the differences. In Steps~\ref{P:head} and~\ref{P:2nd_head} of $\P$, we ensure a consistent relation between the standard source $0^\kappa$ and the start of the canonical path. If there exists another (very) colorful double-wedge with \av $(1,\ldots,1)$, then it will not connect to $0^\kappa$ via $\P$, instead it will be the start of another path.

\vspace{0.15in}
\noindent{\bf{Procedure}} $\S(v=(t_1, \ldots, t_{d+4}))$
\renewcommand{\theenumi}{$\bf\S\hspace{0.01cm}{\arabic{enumi}}$}
\begin{enumerate}
	\setlength{\itemindent}{0.4cm}
	\item \label[Step]{S:w0} If $v=0^{\kappa}$, then \textsf{Return} $f_v(\{p'_2,\ldots,p'_d\},p'_1,q')$, where
	$\{p'_1,\ldots, p'_d\}$ is a colorful point set on a hyperplane $H_0$ that has \av $=(1,\ldots,1)$, and
	 $q'$ is the point that creates a double-wedge with the anchor $\{p'_2,\ldots,p'_d\}$ and $p'_1$.
	
	\item \label[Step]{S:invalid} If $(t_1, \ldots, t_{d+4})$ is not the encoding of a double-wedge, then \textsf{Return} $(t_1, \ldots, t_{d+4})$.
	
	\item \label[Step]{S:inconsistent} Let $(R,p,q)$ denote the double-wedge for which $f_v(R,p,q)=(t_1, \ldots, t_{d+4})$.
	If the orientations of $H_w$, $H_p$ and $H_q$ are not consistent, then \textsf{Return} $f_v(R,p,q)$.
	
	\item \label[Step]{S:notPath} Recall that $t_1$ is the missing color in $R$. If the \av of $(R,p,q)$ is not in the form of $(\alpha_1,\ldots,\alpha_{t_1-1},b_{t_1},1,\ldots,1)$ with $b_{t_1} < \alpha_{t_1}$,
	then \textsf{Return} $f_v(R,p,q)$. 
	\item \label[Step]{S:endpt} If the \av of the lower hyperplane $H_q$ is $(\alpha_1,\ldots,\alpha_d)$, then \textsf{Return} $f_v(R,p,q)$.
	
	\item \label[Step]{S:notGeneral_1} If $H_p$ contains some points of $P$ other than $R \cup \{p\}$, then let $p_1, p_2, \ldots, p_m$ be those extra points ordered by $\prec$, and let $p_i$ be the point just after $p$ in the order $\prec$ (if $p$ is after $p_m$, then $p_i=p_1$). We \textsf{Return} $f_v(R,p_i,q)$.

	\item \label[Step]{S:notGeneral_2} Similarly, \textsf{Return} $f_v(R,p,q_i)$ if $H_q$ contains some points other than $R\cup\{q\}$, where $q_i$ is the point just after $q$ according to the order $\prec$ among those extra points.

	\item \label[Step]{S:next_1} Let $w'=(R',p',q')$ be the output of $NextNeighbor(R,p,q)$. If the orientations of $H_{w'}$, $H_{p'}$ and $H_{q'}$ are well-defined and consistent, then \textsf{Return} $f_v(R',p',q')$.
	\item \label[Step]{S:next_2}  Otherwise \textsf{Return} $f_v(R,p,q)$.
\end{enumerate}

\noindent{\bf Procedure} $\P(v=(t_1, \ldots, t_{d+4}))$
\renewcommand{\theenumi}{$\bf\P\hspace{0.01cm}\arabic{enumi}$}
\begin{enumerate}
	\setlength{\itemindent}{0.4cm}
	\item \label[Step]{P:w_0} If $v=0^{\kappa}$, then \textsf{Return} $0^{\kappa}$.
	\item \label[Step]{P:invalid} If $(t_1, \ldots, t_{d+4})$ is not a double-wedge, then \textsf{Return} $(t_1, \ldots, t_{d+4})$.
	\item \label[Step]{P:inconsistent} Let $(R,p,q)$ be a double-wedge such that $f_v(R,p,q)=(t_1, \ldots, t_{d+4})$. If the orientations of $H_w$, $H_p$ and $H_q$ are not consistent, then \textsf{Return} $f_v(R,p,q)$.

	\item \label[Step]{P:head} If $R= \{p'_2,\ldots,p'_d\}$ and $p = p'_1$, then \textsf{Return} $0^\kappa$.
	
	\item \label[Step]{P:2nd_head} If $H_p$ is colorful, $t_1=1$ and the \av of $H_p$ is $(1,\ldots,1)$, then \textsf{Return} $f_v(R,p,q)$. 
	\item \label[Step]{P:notPath} Recall that $t_1$ is the missing color in $R$. If the \av of $(R,p,q)$ is not in the form of $(\alpha_1,\ldots,\alpha_{t_1-1},b_{t_1},1,\ldots,1)$ with $b_{t_1} < \alpha_{t_1}$,
	then \textsf{Return} $f_v(R,p,q)$.

	\item \label[Step]{P:notGeneral_1} If $H_p$ contains some points of $P$ other than $R \cup \{p\}$, then let $p_1, p_2, \ldots, p_m$ be those extra points ordered by $\prec$, and let $p_i$ be the point just before $p$ in $\prec$ (if $p$ is before $p_1$, then $p_i=p_m$), and \textsf{Return} $f_v(R,p_i,q)$.
	
	\item \label[Step]{P:notGeneral_2} Similarly, \textsf{Return} $f_v(R,p,q_i)$ if $H_q$ contains some points other than $R\cup\{q\}$, where $q_i$ is the point just before $q$ in $\prec$ among those extra points.

	\item \label[Step]{P:prev_1} Let $w'=(R',p',q')$ be the output of $PrevNeighbor(R,p,q)$. If the orientations of $H_{w'}$, $H_{p'}$ and $H_{q'}$ are well-defined and consistent, then \textsf{Return} $f_v(R',p',q')$.

	\item \label[Step]{P:prev_2} Otherwise \textsf{Return} $f_v(R,p,q)$.
\end{enumerate}

Given a double-wedge $(R,p,q)$, let $x'$ be any point of $P$ in $H^+_p \cap H^+_q$ and let $y'$ be any point of $P$ in $H^-_p \cap H^-_q$. 
Suppose that the segment $x'y'$ does not pass through the affine hull of $R$ so that the intersections of $x'y'$ with $H_p$ and $H_q$ are two distinct points.
Let $d_{\min}$ be the Euclidean distance between the two intersection points of $x'y'$ with $H_p$ and $H_q$.
Let $d_{\max}$ be the Euclidean distance between $x$ and the intersection point of $x'y'$ with $H_q$.

\begin{lemma}
\label{lem:dist_precision}
Let $m_0$ denote the number of bits needed to represent each coordinate of any point of $P$. 
Then, $d_{\min}$ is at less $1/N^2$ and $d_{\max}$ is at most $M$, where $N = d!2^{dm_0}$ and $M = \sqrt{d}2^{m_0}$.
\end{lemma}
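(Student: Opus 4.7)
The plan is to prove the two bounds separately, as they rely on different ideas: the upper bound on $d_{\max}$ is essentially a diameter estimate, while the lower bound on $d_{\min}$ requires a careful algebraic argument about the precision of the intersection of a hyperplane with a segment whose endpoints lie in $P$.

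For the upper bound, I would observe that the intersection point $c := x'y' \cap H_q$ lies on the segment $x'y'$, so $\|x' - c\| \leq \|x' - y'\|$. Since $x', y' \in P$ and each coordinate of a point of $P$ is representable in $m_0$ bits (hence bounded in absolute value by $2^{m_0}$), this immediately yields $\|x' - y'\| \leq \sqrt{d}\cdot 2^{m_0} = M$.

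For the lower bound, I would let $a := x'y' \cap H_p$ and $b := x'y' \cap H_q$, and first observe that $H_p \cap H_q = \mathrm{aff}(R)$ is a common $(d-2)$-flat that the segment $x'y'$ does not meet by assumption, so $a \neq b$. The key step is to show that $a$ and $b$ are rational points with small common denominators. Applying Cramer's rule to the $d$ integer points of $P$ on each hyperplane, I would obtain an equation $\langle n, \cdot\rangle = c$ for $H_p$ (and similarly $H_q$) whose integer coefficients are $d \times d$ minors, bounded by Leibniz's formula by $d!\cdot 2^{dm_0} = N$. Parametrizing the line through $x'$ and $y'$ and substituting into the hyperplane equation then presents the intersection parameter as a ratio of integers, giving $a$ rational coordinates with common denominator at most $N$ (up to polynomial factors in $d$ and $2^{m_0}$ absorbed into the generous definition of $N$), and the same for $b$. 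Consequently $a - b$ has rational coordinates with common denominator at most $N^2$, so since $a \neq b$ at least one coordinate is a nonzero rational with denominator at most $N^2$, hence of absolute value $\geq 1/N^2$. This yields $\|a - b\| \geq 1/N^2$.

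The main technical care lies in tracking constants when clearing denominators during the Cramer's rule derivation of the intersection parameter: the extra polynomial factors in $d$ and $2^{m_0}$ that naturally arise are dominated by the generous choice $N = d!\cdot 2^{dm_0}$, so the crude bound of $N^2$ on the denominator of $a - b$ suffices. Beyond this bookkeeping, the argument is a standard separation-of-rationals estimate and requires no deeper insight.
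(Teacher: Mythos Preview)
Your proposal is correct and essentially matches the paper's proof: both bound the segment--hyperplane intersection parameters as rationals with denominator at most $N$ via Cramer's rule and the Leibniz determinant bound, then conclude $d_{\min}\geq 1/N^2$ by separation of nonzero rationals, while the upper bound $d_{\max}\leq M$ is in both cases just the diameter estimate $\|x'-y'\|\leq\sqrt{d}\,2^{m_0}$. The only cosmetic difference is that the paper sets up a single $d\times d$ linear system for the parameter directly (writing $z_p$ simultaneously as a convex combination of $x',y'$ and an affine combination of $R\cup\{p\}$), whereas you first extract the hyperplane equation and then substitute the line parametrization; the constant bookkeeping is equally loose in both treatments.
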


\begin{proof}
Without loss of generality, we assume that the missing color in $R$ is color 1, i.e.,
$R$ is a set of points $\{ p_2 \in P_2, \ldots, p_d \in P_d \}$.
Let $z_p$ (resp. $z_q$)	be the intersection of $x'y'$ and $H_p$ (resp. $H_q$). 
Since $z_p$ lies on $x'y'$ and the affine hull of $H_p$, we can represent $z_p$ as the convex combination of $x'$ and $y'$ and the linear combination of $R\cup\{p\}$ as follows:
\begin{equation*}
z_p = \lambda_1 x' + (1-\lambda_1) y' = p_1 + \sum_{i=2}^d \lambda_i (p_i-p_1), 0 \leq \lambda_1 \leq 1, \lambda_i\in \R~ \forall i\in[2\dots d].
\end{equation*}
Then, we can formulate it as a linear system $A\lambda=b$, where $A \in \Z^{d\times d}$ and $b \in \Z^d$:
\begin{equation*}
\Bigg[y'-x' ~|~ p_2-p_1 ~|~ \ldots ~|~ p_d - p_1\Bigg]\cdot\Bigg[\lambda_1 ~|~ \lambda_2 ~|~ \ldots ~|~ \lambda_d\Bigg]^t = y'-p_1,
\end{equation*}
where $y'-x',p_2-p_1,\dots$ are column vectors.
Since we assume that $z_p$ exists, we have $\det A \neq 0$. According to Cramer's rule, we have $\lambda_i = \frac{\det A_i}{\det A}$, 
where $A_i$ is the matrix obtained by replacing the $i$-th column of $A$ with $b$. Using Leibniz formula for determinants, we can bound the denominator:
\begin{equation*}
	|\det A| = \bigg|\sum_{\sigma \in S_d} \mathrm{sgn}(\sigma)\prod_{i=1}^{d} A_{i,\sigma(i)}\bigg| \leq d!2^{dm_0} = N,
\end{equation*}
where $S_d$ is the set of all permutations of $[d]$ and $\mathrm{sgn}(\cdot)$ is the sign function of permutations.
The same bound is also applied to $|\det A_i|$. Then, $\lambda_1 = \frac{i}{|\det A|}$ for some $0 \leq i \leq |\det A| \leq N$. Similarly, we apply the same argument for $z_q$ with another linear system $A'\lambda' = b'$ such that $z_q = \lambda'_1 x' + (1-\lambda'_1)y'$ and $\lambda'_1 = \frac{i'}{|\det A'|}$. Since $z_p$ and $z_q$ are two distinct points, at least one of their coordinates, namely $j$, have different values. Therefore, $d_{\min} \geq ||z_p - z_q|| \geq \bigg|\frac{i}{|\det A|} - \frac{i'}{|\det A'|}\bigg|\cdot(x'_j-y'_j) \geq \bigg|\frac{|\det A'|i - |\det A| i'}{|\det A|\cdot|\det A'|}\bigg|\cdot(x'_j-y'_j) \geq \frac{1}{N^2}$. The numerator is at least one because all values are integers.

For $d_{\max}$, it is less than the longest possible line segment, so $d_{\max} \leq \sqrt{d}2^{m_0}$.
\end{proof}

Using Lemma~\ref{lem:dist_precision}, we can use $\ceil{\log_2 MN^2}$ bits to represent the \dv of any double-wedge so that no two double-wedges on a path of non-colorful vertices of $G$ have the same \dv. 
We now define the circuit $\V$ in $\cal I'$.
We define a potential function $\delta : (R,p,q) \rightarrow \Z$ that measures the distance of $(R,p,q)$ with \av $(b_1,\ldots,b_d)$ and \dv $D$ to 
the starting vertex that has \av $(1,\ldots,1)$. More precisely, 
\[\delta(R,p,q) = MN^2\cdot \left(\sum_{i=1}^{d} n_0^{d-i}(b_i-1)\right) + \floor{DN^2}.
\]

\vspace{0.15in}

\noindent{\bf Procedure} $\V(v=(t_1, \ldots, t_{d+4}))$
\renewcommand{\theenumi}{$\bf\V\hspace{0.01cm}\arabic{enumi}$}
\begin{enumerate}
	\setlength{\itemindent}{0.4cm}	
	\item If $v = 0^k$, then \textsf{Return} $0$.
	\item If $(t_1, \ldots, t_{d+4})$ is not a double-wedge, then \textsf{Return} $0$. Otherwise, let $(R,p,q)$ be a double-wedge such that $f_v(R,p,q)=(t_1, \ldots, t_{d+4})$.
	\item If $\S(v) = v$ and $\P(v) = v$, then \textsf{Return} $0$.
	\item If $\S(v) \neq v$ or $\P(v) \neq v$, then \textsf{Return} $\delta(R,p,q)$.
\end{enumerate}

The following lemma shows that if there are no violations in \GHS, there are also no violations in the constructed \UEOPL instance. 
This makes the reduction promise-preserving. In particular, we can find the $(\alpha_1,\ldots,\alpha_d)$-cut from the unique solution of the \UEOPL instance.

\begin{lemma}
\label{lem:no_violations}
	If there are no violations in $\cal I$, then the constructed \UEOPL instance $\cal I'$ only contains a type \ref{sol:U1} solution whose lower hyperplane is colorful and has \av $=(\alpha_1, \ldots, \alpha_d)$, which is a type \ref{sol:G1} solution of $\cal I$.
\end{lemma}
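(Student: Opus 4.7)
The plan is to show that under the no-violation promise on $\cal I$, the directed structure encoded by $(\S,\P,\V)$ in $\cal I'$ is a single line beginning at $0^\kappa$, traversing the canonical path, and terminating at the unique vertex $v_\alpha$ whose lower hyperplane is the $(\alpha_1,\ldots,\alpha_d)$-cut; every other vertex of $\{0,1\}^\kappa$ is a self-loop with $\S(v)=\P(v)=v$ and $\V(v)=0$. Once this ``line plus isolated self-loops'' picture is established, the only configuration reportable by \UEOPL is the \ref{sol:U1} witness at $v_\alpha$, which translates directly to a \ref{sol:G1} solution of $\cal I$.

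\textbf{Tracing the canonical path.} First I would verify that starting from $w_0:=\S(0^\kappa)$, iterating $NextNeighbor$ produces a valid sequence of double-wedges whose {\av}s follow the lexicographic pattern $(1,\ldots,1)\to(2,1,\ldots,1)\to\cdots\to(\alpha_1,\ldots,\alpha_d)$. Weak general position prevents the rotating hyperplane from hitting two new points simultaneously or containing extra colorful points, so Steps~\ref{S:notGeneral_1} and~\ref{S:notGeneral_2} never fire on path vertices. Well-separation together with Lemma~\ref{lem:inconsistent-orientation-0} keeps the orientation consistent throughout each rotation, so Steps~\ref{S:inconsistent} and~\ref{S:next_2} do not fire either, and Lemma~\ref{lemma:grid-structure} then guarantees that $NextNeighbor$ actually realizes the prescribed \av-increment. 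A symmetric argument via $PrevNeighbor$ shows that $\P$ traces the path backwards, and Step~\ref{P:head} closes the path back to $0^\kappa$.

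\textbf{Strict monotonicity and ruling out violations.} The scaling in $\delta(R,p,q)=MN^2\cdot\bigl(\sum_i n_0^{d-i}(b_i-1)\bigr)+\lfloor DN^2\rfloor$ makes the \av term dominate, since Lemma~\ref{lem:dist_precision} gives $\lfloor DN^2\rfloor<MN^2$: any single-coordinate increase in the \av strictly increases $\V$. Within a maximal run of non-colorful vertices sharing an \av, Lemma~\ref{lemma:d-value-strictly} gives strict growth of $D$, and the $1/N^2$ lower bound on gaps between realizable {\dv}s (also from Lemma~\ref{lem:dist_precision}) ensures that $\lfloor DN^2\rfloor$ strictly increases as well. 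With strict monotonicity of $\V$ along the path, \ref{sol:UV1} cannot occur. \ref{sol:UV3} is impossible because every off-path vertex has $\S(v)=v$ (hence ineligible for that condition), while every pair of on-path vertices has distinct potentials separated by at least one. \ref{sol:UV2} would require a non-standard start of line, but the uniqueness part of Theorem~\ref{thm:ghs} rules out a second colorful hyperplane with \av $(1,\ldots,1)$, so Step~\ref{P:head} is the only rule that escapes a self-loop by pointing to $0^\kappa$, and Step~\ref{P:2nd_head} only self-loops double-wedges that share the unique $(1,\ldots,1)$-hyperplane with $w_0$ but are not $w_0$ itself. Consequently the only reportable configuration is the \ref{sol:U1} witness at $v_\alpha$, whose lower hyperplane is the desired $\alpha$-cut by Step~\ref{S:endpt}.

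\textbf{Main obstacle.} The most delicate bookkeeping, I expect, is the interaction between the many self-loop rules of $\S$ and $\P$ and the canonical path itself. One must verify that the shape constraint $(\alpha_1,\ldots,\alpha_{t_1-1},b_{t_1},1,\ldots,1)$ enforced by Steps~\ref{S:notPath} and~\ref{P:notPath} holds exactly along the canonical path and nowhere else, so that on-path vertices are not accidentally collapsed into self-loops while off-path vertices cannot escape into spurious parallel lines. Combined with the orientation-consistency guarantee from Lemma~\ref{lem:inconsistent-orientation-0} and the uniqueness statement of Theorem~\ref{thm:ghs}, this case analysis is what ultimately makes the reduction promise-preserving.
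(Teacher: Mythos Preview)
Your proposal is correct and follows essentially the same approach as the paper's own proof: both argue that under the promise the canonical path is well-defined and terminates at the target double-wedge, while every other vertex is trapped in a self-loop by the various early-return steps of $\S$ and $\P$. Your version is in fact more explicit than the paper's in two respects: you spell out why $\V$ is strictly increasing along the path (invoking Lemmas~\ref{lemma:d-value-strictly} and~\ref{lem:dist_precision}) and you rule out \ref{sol:UV1}, \ref{sol:UV2}, \ref{sol:UV3} one by one, whereas the paper's proof simply asserts that the self-loop steps absorb all off-path vertices and that the relevant violation-triggering steps never fire. The ``main obstacle'' you flag---checking that the \av shape constraint in Steps~\ref{S:notPath}/\ref{P:notPath} characterizes precisely the canonical-path vertices---is handled in the paper only implicitly via the uniqueness part of Theorem~\ref{thm:ghs} together with the determinism of $NextNeighbor$/$PrevNeighbor$, which is exactly the ingredient you point to.
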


\begin{proof}
	First we show that there must exist a type \ref{sol:U1} solution whose lower hyperplane has \av $=(\alpha_1, \ldots, \alpha_d)$.
	If there are no violations in $\cal I$, then by Theorem~\ref{thm:ghs} there exists a unique colorful hyperplane $H$ passing through some $p_1 \in  P_1,\ldots,p_d \in P_d$ with \av $\alpha=(\alpha_1,\ldots,\alpha_d)$. Let $i$ be the largest index of the coordinates in $\alpha$ such that $\alpha_i \neq 1$. Define $R = (p_1, \ldots, p_{i-1},p_{i+1},\ldots,p_d)$. Similar to {\bf Procedure} $PrevNeighbor$, we rotate $H$ around the anchor $R$ in a direction such that $p_i$ is in the open half-space $H^-$ until the hyperplane hits a point $q \in P$. Define $w=(R,q,p_i)$. Since $P$ is well-separated and in weak general position, the orientations of $H_p,H_q,H_w$ are well-defined and consistent. Furthermore, $p_i$ is on the lower hyperplane of $w$. That means, $w$ is a double-wedge. In particular, the \av of $w$ is $(\alpha_1,\ldots,\alpha_i-1,1,\ldots,1)$.
	Similarly, we can also define a double-wedge $w_0$ as shown in Step~\ref{S:w0} of {\bf Procedure} $\S$, which has \av $=(1,\ldots,1)$ and has $H_0$ as the upper hyperplane.
	
	Following the definition of the canonical path in Section~\ref{subsection:reduction-overview}, we can define the canonical path between $w_0$ and $w$, in which every vertex on the path is a double-wedge with \av in the form of $(\alpha_1,\ldots,\alpha_{t_1-1},b_{t_1},1,\ldots,1)$ for some $b_{t_1} < \alpha_{t_1}$, where $t_1 \leq i$ is the missing color of the anchor of the double-wedge, and every two consecutive double-wedges share a hyperplane. This canonical path is realized by Step~\ref{S:next_1} of {\bf Procedure} $\S$ and Step~\ref{P:prev_1} of {\bf Procedure} $\P$. Since the lower hyperplane $H_{p_i}$ has \av $=(\alpha_1,\ldots,\alpha_d)$, {\bf Procedure} $\S(f_v(w))$ will return at Step~\ref{S:endpt} so that $\P(\S(f_v(w))) = \P(f_v(w)) \neq f_v(w)$ and $f_v(w)$ is a type \ref{sol:U1} solution.
	
	Next we will show that there do not exist other solutions. As we mentioned above, we only construct a single path from $0^\kappa$ to $f_v(w_0)$ and then to $f_v(w)$. For other double-wedges not on the path, they will form a self-loop by Step~\ref{S:notPath} of {\bf Procedure} $\S$ and Step~\ref{P:notPath} of {\bf Procedure} $\P$. For non-double-wedges, they will also form a self-loop by Step~\ref{S:invalid} of {\bf Procedure} $\S$ and Step~\ref{P:invalid} of {\bf Procedure} $\P$.
	When $P$ is well-separated and in weak general position, the orientations of all double-wedges are well-defined and consistent as shown in Lemma~\ref{lem:inconsistent-orientation-0}. 
	Therefore, {\bf Procedure} $\S$ will not return at Steps~\ref{S:inconsistent}, \ref{S:notGeneral_1}, \ref{S:notGeneral_2} and \ref{S:next_2}, and {\bf Procedure} $\P$ will also not return at Steps~\ref{P:inconsistent}, \ref{P:notGeneral_1}, \ref{P:notGeneral_2} and \ref{P:prev_2}. Since there is only one colorful hyperplane with \av $=(1,\ldots,1)$, {\bf Procedure} $\P$ will not return at Step~\ref{P:2nd_head}.
\end{proof}

There are different types of violations that may return from a \UEOPL instance. To obtain the certificate of violation~\ref{sol:GV1} in \GHS, we need the following lemmas to convert the violation solutions of \UEOPL to the \ref{sol:GV1} certificate.

\begin{lemma}
	\label{lem:orientation}
	Given $\{ x_1\in \conv(P_1),\dots, x_d\in  \conv(P_d)\}$ that spans a hyperplane $H$.
	For any point $y$ in the open halfspace $H^+$, 
	\[\mathrm{det}
	\begin{vmatrix}
	x_1 & x_2 & \dots & x_d & y \\
	1 & 1 & \dots & 1 & 1
	\end{vmatrix}
	>0 .\]
	Similarly, for any point $y$ in the open halfspace $H^-$, 
	\[\mathrm{det}
	\begin{vmatrix}
	x_1 & x_2 & \dots & x_d & y \\
	1 & 1 & \dots & 1 & 1
	\end{vmatrix}
	<0. \]
\end{lemma}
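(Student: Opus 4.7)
The plan is to show that the determinant $D(y) := \mathrm{det}\begin{vmatrix} x_1 & \dots & x_d & y \\ 1 & \dots & 1 & 1 \end{vmatrix}$ is an affine function of $y$ whose sign on each open half-space is constant, and then identify its sign on $H^+$ by relating it to the defining determinant of the normal $\hat n$.

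First, I would observe that $D(y)$ is linear in the coordinates of $y$, so $D$ is an affine function on $\R^d$. Moreover, $D(y)=0$ iff the column vectors $\binom{x_1}{1},\dots,\binom{x_d}{1},\binom{y}{1}$ are linearly dependent, which happens precisely when $y$ lies in the affine hull of $x_1,\dots,x_d$, i.e., on $H$. Since $x_1,\dots,x_d$ span $H$ (they form a colorful $d$-tuple, hence by well-definedness of $H$ they are affinely independent), $D(y)$ vanishes exactly on $H$ and therefore keeps a constant nonzero sign on each of the open half-spaces $H^+$ and $H^-$.

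The key step is to pin down which sign occurs on $H^+$. I would pick the canonical witness $y = x_1 + \hat n$, which lies in $H^+$ because $\langle y - x_1,\hat n\rangle = \|\hat n\|^2 > 0$. Performing the column operation that subtracts the first column from the last column (which leaves the determinant unchanged) turns the entry $y$ into $\hat n$ and the corresponding bottom entry $1$ into $0$, yielding
\[
D(y) \;=\; \mathrm{det}\begin{vmatrix} x_1 & x_2 & \dots & x_d & \hat n \\ 1 & 1 & \dots & 1 & 0 \end{vmatrix},
\]
which is exactly the determinant required to be strictly positive in the definition of the orientation of $H$ via $\hat n$ given in Section~\ref{subsection:ghs-formal}. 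Hence $D(y)>0$ on $H^+$, and by the sign-constancy observation, $D(y)>0$ for every point in the open half-space $H^+$; the same reasoning (using $y = x_1 - \hat n \in H^-$) yields $D(y)<0$ on $H^-$.

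There is no real obstacle here: the only subtlety is making sure the sign comparison is done with the same convention as the one used to define the orientation in the preliminaries, which is why I would use the explicit witness $y=x_1+\hat n$ and reduce by a single column operation rather than expanding the determinant abstractly.
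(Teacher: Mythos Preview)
Your proof is correct and follows essentially the same approach as the paper: both reduce the determinant to the defining determinant $\det\begin{vmatrix} x_1 & \dots & x_d & \hat n \\ 1 & \dots & 1 & 0 \end{vmatrix}$ via a column subtraction. The only organizational difference is that the paper treats an arbitrary $y\in H^+$ directly by writing $y=\bar y+\gamma\hat n$ with $\bar y$ the orthogonal projection onto $H$ and $\gamma>0$ (so the $\bar y - x_1$ part vanishes against the earlier columns and one is left with $\gamma$ times the defining determinant), whereas you first argue sign-constancy of the affine function $D$ on each open half-space and then evaluate at the single witness $y=x_1+\hat n$.
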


\begin{proof}
	The normal $\hat{n}$ that defines the orientation of $H$ satisfies $\mathrm{det}
	\begin{vmatrix}
	x_1 & x_2 & \dots & x_d & \hat{n} \\
	1 & 1 & \dots & 1 & 0
	\end{vmatrix}
	>0.$ For any point $y$ in the open halfspace $H^+$, let $\bar{y}$ be the orthogonal projection of $y$ onto $H$, i.e., $\bar{y} =\sum_{i=2}^d \lambda_i (x_i-x_1) + x_1$ and $y = \bar{y} + \gamma \hat{n}$ for some positive number $\gamma$. Then, we have 
\begin{eqnarray*}
& & 
\mathrm{det}
\begin{vmatrix}
 x_1 & x_2 & \dots & x_d & y \\
	1 & 1 & \dots & 1 & 1
\end{vmatrix}\\
& = & 
\mathrm{det}
\begin{vmatrix}
	x_1 & x_2 & \dots & x_d & \bar{y} + \gamma \hat{n} \\
	1 & 1 & \dots & 1 & 1
\end{vmatrix}\\
& = &
\mathrm{det}
\begin{vmatrix}
	x_1 & x_2 & \dots & x_d & \hat{y}-x_1 + \gamma \hat{n} \\
	1 & 1 & \dots & 1 & 0
\end{vmatrix}\\
& = &
\mathrm{det}
\begin{vmatrix}
	x_1 & x_2 & \dots & x_d & \gamma \hat{n} \\
	1 & 1 & \dots & 1 & 0
\end{vmatrix}>0.\\  
\end{eqnarray*}
The proof for points in the open halfspace $H^-$ is similar.
\end{proof}

By Theorem~\ref{thm:ghs}, we already know that if there are multiple $\alpha$-cuts, then $P$ is not well-separated. 
The following two lemmas give the same result, but we provide a constructive proof so that we can find the certificate of the violation.
	
\begin{lemma}
\label{lem:not_separated_1}
Given two colorful hyperplanes $H_p$, $H_q$ that have the same \av such that $H_p \cap H_q$ does not intersect $\conv(P_i)$ for all $i \in [d]$, we can find a colorful set of points $\{x_1 \in \conv(P_1), \ldots, x_d\in \conv(P_d)\}$ that lies on a $(d-2)$-flat in $\poly(n,d,m_0)$ time, which shows that $P$ is not well-separated.	
\end{lemma}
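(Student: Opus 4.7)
My plan is to use Lemma~\ref{lemma:ws-formats} to reformulate the goal: instead of producing a colorful set on a $(d-2)$-flat directly, I will produce a disjoint partition $(I,J)$ of $[d]$ together with a point of $\conv(\bigcup_{i\in I} P_i)\cap\conv(\bigcup_{j\in J} P_j)$, since that lemma converts between the two certificates in $\poly(n,d)$ time. The construction will be built from the pencil of hyperplanes through the common $(d-2)$-flat $H_p\cap H_q$, exploiting the fact that this flat is, by hypothesis, not itself a valid witness.

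Let $g_p,g_q$ be the oriented linear forms with $H_p=\{g_p=0\}$ and $H_q=\{g_q=0\}$, and define $H_\theta=\{x:\cos\theta\cdot g_p(x)+\sin\theta\cdot g_q(x)=0\}$ for $\theta\in[0,\pi/2]$, so that $H_0=H_p$ and $H_{\pi/2}=H_q$, and each $H_\theta$ contains $H_p\cap H_q$. For each color $i$, the line through $p_i$ and $q_i$ meets $H_\theta$ at a single point $x_i(\theta)=(1-s_i(\theta))\,p_i+s_i(\theta)\,q_i$; writing $a_i=g_p(q_i)\neq 0$ and $b_i=g_q(p_i)\neq 0$, a direct computation gives $s_i(\theta)=\sin\theta\cdot b_i/(\sin\theta\cdot b_i-\cos\theta\cdot a_i)$. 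The transversality hypothesis that $H_p\cap H_q$ misses $\conv(P_i)$ (in particular the segment $[p_i,q_i]$) forces $s_i(\theta)\in[0,1]$ for every $\theta$, so $x_i(\theta)\in[p_i,q_i]\subset\conv(P_i)$, and $\{x_1(\theta),\ldots,x_d(\theta)\}$ is a colorful selection lying on the hyperplane $H_\theta$ for every $\theta\in[0,\pi/2]$.

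The remaining step -- promoting such a colorful selection on the $(d-1)$-flat $H_\theta$ to one on a $(d-2)$-flat, equivalently to an explicit intersection between convex hulls over some partition $(I,J)$ -- is where the same-$\alpha$-vector hypothesis enters, and it is the main obstacle. The pair $\{H_p,H_q\}$ partitions each $P_i$ into four quadrants $A_i,B_i,C_i,D_i$, and equal $\alpha$-vectors force the balance $|B_i|=|C_i|$ together with a consistency of the orientations of $H_p,H_q$. I plan to take $I=\{i:g_q(p_i)>0\}$ and $J=\{i:g_q(p_i)<0\}$ and track the signed volume of the colorful simplex $\{x_1(\theta),\ldots,x_d(\theta)\}$ as $\theta$ sweeps $[0,\pi/2]$; an intermediate-value argument should produce either a $\theta^{\ast}$ where the simplex degenerates (yielding the required colorful set on a $(d-2)$-flat) or a witness $\conv(\{x_i(\theta^{\ast}):i\in I\})\cap\conv(\{x_j(\theta^{\ast}):j\in J\})\neq\emptyset$ via explicit convex combinations. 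Without the equal-$\alpha$-vector assumption the statement would be false, since the uniqueness clause of Theorem~\ref{thm:ghs} would allow $H_p$ and $H_q$ to coexist in a well-separated configuration, so the argument must genuinely use this hypothesis -- presumably through the balance $|B_i|=|C_i|$ together with the orientation-consistency sign analysis of the $a_i,b_i$. All quantities then reduce to solving $O(d)$-sized linear systems in the input coordinates, giving the claimed $\poly(n,d,m_0)$ running time.
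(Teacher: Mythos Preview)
Your overall architecture---parametrize the pencil of hyperplanes through $H_p\cap H_q$, track the colorful intersection points $x_i(\theta)$ on the segments $[p_i,q_i]$, and run an intermediate-value argument---is the same as the paper's. Two steps, however, are not yet correct.

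First, the assertion that transversality alone forces $s_i(\theta)\in[0,1]$ is false. The pencil has two arcs from $H_p$ to $H_q$; your parametrization $\cos\theta\cdot g_p+\sin\theta\cdot g_q=0$ sweeps the double-wedge $(H_p^+\cap H_q^-)\cup(H_p^-\cap H_q^+)$, and if the open segment $p_iq_i$ lies in the \emph{other} double-wedge then $a_i,b_i$ have the same sign, your denominator $\sin\theta\, b_i-\cos\theta\, a_i$ vanishes on $(0,\pi/2)$, and $s_i$ blows up. The paper first proves---using the equal-$\alpha$-vector hypothesis, not merely transversality---that every open segment $p_iq_i$ lies in $H_p^+\cap H_q^+$ or $H_p^-\cap H_q^-$: otherwise $H_p,H_q$ would slice $\conv(P_i)$ into three slabs with the middle one in, say, $H_p^-\cap H_q^+$, giving $P_i\cap H_p^+\subsetneq P_i\cap H_q^+$ and contradicting equality of the $\alpha$-vectors. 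So the $\alpha$-hypothesis is needed already here (not only in your ``remaining step''), and it dictates rotating through the opposite double-wedge from the one you chose.

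Second, the ``signed volume of the colorful simplex $\{x_1(\theta),\dots,x_d(\theta)\}$'' is ill-defined: $d$ points in $\R^d$ carry no intrinsic sign, and their $(d-1)$-volume is nonnegative, so there is nothing to feed into an intermediate-value argument. The paper's device is to adjoin a fixed reference point $r\in H_p^+\cap H_q^-$, which is never on any $H(\theta)$ since the rotation stays in the other double-wedge, and to track
\[
Det(\theta)=\det\begin{vmatrix} z_1(\theta) & \cdots & z_d(\theta) & r\\ 1 & \cdots & 1 & 1\end{vmatrix}.
\]
By the orientation convention (Lemma~\ref{lem:orientation}) this is positive at the $H_p$-end and negative at the $H_q$-end, so the intermediate value theorem yields a zero $\theta'$; because $r\notin H(\theta')$, the vector $r-z_1(\theta')$ is independent of the $z_i(\theta')-z_1(\theta')$, and hence the vanishing determinant forces $\{z_i(\theta')\}$ onto a $(d-2)$-flat directly. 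The detour through a partition $(I,J)$ and Lemma~\ref{lemma:ws-formats} is then unnecessary; binary search on $\theta$ gives the running time.
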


\begin{proof}
The proof is based on the idea of finding a hyperplane that "interpolates" between $H_p$ and $H_q$, for which no consistent orientation can be defined.
This hyperplane gives the certificate of violation.

Let $p_1 \in P_1, \ldots, p_d \in P_d$ denote the colorful points on $H_p$ and $q_1 \in P_1, \ldots, q_d \in P_d$ denote the colorful points on $H_q$.
\begin{claim}
\label{claim:continuous}
	The open segment $p_iq_i$ is contained in either $H_p^+ \cap H_q^+$ or $H_p^- \cap H_q^-$ for each $i\in [d]$.
\end{claim}
\begin{proof}
Since $H_p \cap H_q$ does not intersect $\conv(P_i)$ and both $H_p,H_q$ intersect $\conv(P_i)$, $H_p$ and $H_q$ cut $\conv(P_i)$ into three cells for each $i\in[d]$. Suppose that the middle cell is in $H_p^- \cap H_q^+$. Then, $P_i \cap H_p^+ \subseteq P_i \cap H_q^+$. Since $q_i \notin H_p^+$, then we have $|P_i \cap H_q^+| > |P_i \cap H_p^+|$, which contradicts the assumption that $H_p$ and $H_q$ share their \av{}s. The same argument applies to $H_p^+ \cap H_q^-$. It is straightforward to see that $p_iq_i$ is contained in the middle cell, which is either $H_p^+ \cap H_q^+$ or $H_p^- \cap H_q^-$.
\end{proof}

Let $R$ denote the intersection of $H_p \cap H_q$ and $\theta_0$ denote the angle between $H_p$ and $H_q$ with respect to $H_p^+ \cap H_q^+$.
Let $H(\theta)$ denote the hyperplane that rotates anchored at $R$ from $H_p$ to $H_q$ inside the region $(H_p^+ \cap H_q^+) \cup (H_p^- \cap H_q^-)$ with $\theta$ denoting the angle to $H_p$. By Claim~\ref{claim:continuous}, $H(\theta)$ intersects $p_iq_i$ for all $i\in[d]$. Let $z_i(\theta)$ denote that intersection. We can see that $z_i(\theta)$ changes continuously with $\theta$. Let $r$ be any point of $\R^d$ in the open cell $H_p^+ \cap H_q^-$ that $r$ does not lie on $H(\theta)$ for $0 \leq \theta \leq \theta_0$. 
Notice that $r-z_1(\theta)$ is linearly independent of the vectors $\{z_i(\theta)-z_1(\theta),\forall i\in[2..d]\}$.
Let $Det(\theta) = \mathrm{det}\begin{vmatrix}
z_1(\theta) & z_2(\theta) & \dots & z_d(\theta) & r \\
1 & 1 & \dots & 1 & 1
\end{vmatrix}$, which is a continuous function for $0 \leq \theta \leq \theta_0$. 
Since $r \in H_p^+ \cap H_q^-$, by the definition of the orientation and Lemma~\ref{lem:orientation}, $Det(0) >0$ and $Det(\theta_0) < 0$. From the intermediate value theorem, there exists $0 \leq \theta' \leq \theta_0$ for which $Det(\theta') = 0$. 
Thus, we have
\begin{eqnarray*}
	0 &= & \mathrm{det}\begin{vmatrix}
		z_1(\theta') & z_2(\theta') & \dots & z_d(\theta') & r \\
		1 & 1 & \dots & 1 & 1
	\end{vmatrix}\\
	& = & \mathrm{det}\begin{vmatrix}
		z_1(\theta') & z_2(\theta')-z_1(\theta') & \dots & z_d(\theta')-z_1(\theta') & r-z_1(\theta') \\
		1 & 0 & \dots & 0 & 0
	\end{vmatrix}\\
	& = & (-1)^{d+2}\cdot \mathrm{det}\begin{vmatrix}
		z_2(\theta')-z_1(\theta') & \dots & z_d(\theta')-z_1(\theta') & r-z_1(\theta') \\
	\end{vmatrix}
\end{eqnarray*}
Since $r-z_1(\theta')$ is linearly independent of the vectors $\{z_i(\theta')-z_1(\theta'),\forall i\in[2\dots d]\}$, we see that  $\mathrm{span}(\{z_i(\theta')-z_1(\theta'),\forall i\in[2\dots d]\})$ has dimension at most $d-2$. Hence, the colorful set $\{z_i(\theta')\mid i\in[d]\}$ lies on a $(d-2)$-flat. To compute $\{z_i(\theta') \mid i\in[d]\}$, we use binary search on the interval $[0,\theta_0]$ to find $\theta'$.
\end{proof}

\begin{figure}[!ht]
\centering
\begin{subfigure}[t]{\textwidth}
\centering
\includegraphics[width=0.5\textwidth,page=9]{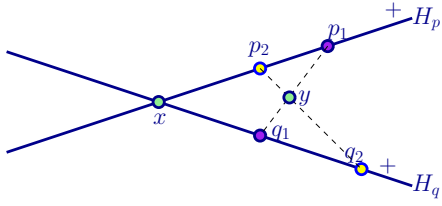}
\caption{The point $y$ is in the convex hulls of colors 1 (purple) and 2 (yellow).}
\label{figure:ns2-1}
\end{subfigure} 
\begin{subfigure}[t]{\textwidth}
\centering
\includegraphics[width=0.5\textwidth,page=7]{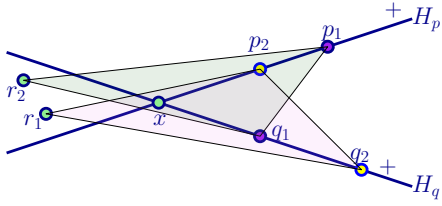}
\caption{The point $x$ lies in both triangles $\triangle p_1q_1r_1$ and $\triangle p_2q_2r_2$.}
\label{figure:ns2-2}
\end{subfigure} 
\begin{subfigure}[t]{\textwidth}
\centering
\includegraphics[width=0.5\textwidth,page=8]{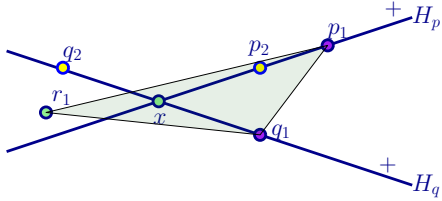}
\caption{The point $p_2$ lies in the triangle $\triangle p_1q_1r_1$.}
\label{figure:ns2-3}
\end{subfigure}
\caption{Some figures for Lemma~\ref{lem:not_separated_2}.}
\label{figure:ns2}
\end{figure}

We now extend Lemma~\ref{lem:not_separated_2} to the case where the two hyperplanes are non-colorful but have the same \av and \dv.
On any path of non-colorful vertices of $G$ that starts and ends at colorful vertices, the \dv{}s of the vertices along the path is bounded by the \dv{}s of the two end points. Hence, we have an interval of the \dv{}s along the path with respect to an \av $(\alpha_1, \ldots, \alpha_{i-1},b_i,1,\ldots,1)$, where $i$ is the missing color for those non-colorful vertices and $b_i < \alpha_i$. If we find another non-colorful vertex with the same missing color and \av but its \dv is outside the interval, we show that it implies a violation of well-separation in the following lemma.

\begin{lemma}
	\label{lem:not_separated_4}
	Given a double-wedge $w=(R,r,q)$ with a colorful lower hyperplane $H_q$ and a non-colorful hyperplane $H$ such that $H_w$ and $H$ have the same missing color and \av, but the \dv of $H$ is larger than that of $H_q$, we can find a colorful set $\{x_1 \in \conv(P_1), \ldots, x_d\in \conv(P_d)\}$ that lies on a $(d-2)$-flat in $\poly(n,d,m_0)$ time.	
\end{lemma}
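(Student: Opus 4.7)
Let $j$ be the color missing from the anchor $R$, so $q\in P_j$ and $H$ also misses color $j$. The plan is to build a new point set $P'$ with $P'_i=P_i$ for every $i\neq j$ and $P'_j\subseteq\conv(P_j)$, such that in $P'$ both $H$ and $H_q$ are colorful and share the same $\alpha$-vector. Lemma~\ref{lem:not_separated_2} applied in $P'$ then yields a colorful set $\{x_1,\ldots,x_d\}$ lying on a $(d-2)$-flat with $x_i\in\conv(P'_i)\subseteq\conv(P_i)$, which is the desired witness that $P$ is not well-separated.

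The first ingredient is a coordinate-wise comparison of the $\alpha$-vectors of $H$ and $H_q$ in $P$. Because the double-wedge $w$ has empty open interior, the rotation from the upper hyperplane $H_r$ through the bisector $H_w$ down to $H_q$ crosses no point of $P$ other than $r$ and $q$; combining this with Lemma~\ref{lemma:consistent-orientation-0} gives $|P_i\cap H_w^+|=|P_i\cap H_q^+|$ for every $i\neq j$ together with $|P_j\cap H_q^+|=|P_j\cap H_w^+|+1$, the extra point being $q$ (which only enters the closed positive halfspace at the very end of the rotation). Since by hypothesis $H_w$ and $H$ have the same $\alpha$-vector, $H$ and $H_q$ agree in every coordinate except the $j$-th, in which $H_q$'s count exceeds $H$'s by exactly one.

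The construction of $P'_j$ then proceeds as follows. Let $x_H,y_H$ be the highest-ranked points of $P_j$ in $H^+$ (closed) and $H^-$ (open) used to orient $H$, and set $z_H=H\cap\overline{x_H y_H}\in H\cap\conv(P_j)$. When $z_H$ lies strictly in $H_q^-$, I simply take $P'_j=P_j\cup\{z_H\}$: then $H$ acquires $z_H\in P'_j$ and becomes colorful, $H_q$ keeps $q$ and stays colorful, $|P'_j\cap H^+|$ grows by one via $z_H$ while $|P'_j\cap H_q^+|$ is unchanged, and both numbers become $|P_j\cap H_q^+|$, so the $j$-th coordinates match. When instead $z_H\in H_q^+$, the coordinate comparison above shows that the cell $P_j\cap H_q^+\cap H^-$ is nonempty, so I pick some $p^*$ from it and slide it across $H_q$ into a shrunk copy in $\conv(P_j)\cap H_q^-\cap H^-$; if the only available $p^*$ is $q$, I simultaneously replace $q$ by another point of $H_q\cap\conv(P_j)$ so that $H_q$ stays colorful. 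All sliding is performed inside $\conv(P_j)$, so $\conv(P'_j)\subseteq\conv(P_j)$ in every branch.

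Applying Lemma~\ref{lem:not_separated_2} to $H$ and $H_q$ in $P'$ now produces the required colorful $d$-tuple on a $(d-2)$-flat inside $\prod_i\conv(P_i)$. The coordinate comparison, locating $x_H,y_H,z_H$, the bounded case analysis, and the call to Lemma~\ref{lem:not_separated_2} each run in $\poly(n,d,m_0)$ time. The main obstacle I anticipate is the second case, where $z_H$ lands on the ``wrong'' side of $H_q$: translating the numerical hypothesis that the \dv of $H$ exceeds that of $H_q$ into the combinatorial statement needed to pick the shrinking segment inside $\conv(P_j)$ requires a careful comparison of the highest-ranked points $x_H,y_H,x_q,y_q$ across $H$ and $H_q$, in the same spirit as the long case analysis of Lemma~\ref{lem:not_separated_3}.
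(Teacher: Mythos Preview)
Your plan departs from the paper's in one key place: you pair $H$ with the colorful hyperplane $H_q$, whereas the paper pairs $H$ with the bisector $H_w$. Since $H$ and $H_w$ already share the same missing color \emph{and} the same $\alpha$-vector by hypothesis, the paper can simply rerun the case analysis of Lemma~\ref{lem:not_separated_3} on the pair $(H,H_w)$. Every case of that lemma goes through verbatim except the single one where $x_H=x_q$ and $y_H=y_q$ (so $H$ and $H_w$ induce the same partition of $P_j$); only there does the \dv hypothesis enter. It forces the intersection order $H_w,H_q,H$ along the directed segment $x_q y_q$, and then one slides two points of $P_j$ (namely $q$, and either $y_q$ or an auxiliary $r'\in P_j\cap H^-\cap H_w^+$, depending on whether $q$ lies in $H^-$ or $H^+$) along segments to $x_q$ until they hit $H_w$ and $H$, exactly in the style of Lemma~\ref{lem:not_separated_3}.

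Your route via $H_q$ could in principle be made to work, but as written it has a real gap, and it is precisely the gap you flag. In Case~2 you never say how $H$ becomes colorful, you do not verify that the target region $\conv(P_j)\cap H_q^-\cap H^-$ for sliding $p^*$ is nonempty, and the fallback ``replace $q$ by another point of $H_q\cap\conv(P_j)$'' can fail when $H_q$ is tangent to $\conv(P_j)$ at $q$. More seriously, you never invoke the \dv hypothesis at all. That hypothesis is not cosmetic: drop it and the statement becomes false, since in a well-separated instance you may take $H$ to be the bisector of any earlier double-wedge on the same canonical segment ending at $H_q$---it has the same missing color and $\alpha$-vector as $H_w$, yet no $(d-2)$-flat witness exists. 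Hence any completion of your Case~2 must feed the \dv comparison in somewhere; the cleanest fix is to switch the comparison partner from $H_q$ to $H_w$ and inherit Lemma~\ref{lem:not_separated_3}, which is what the paper does.
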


\begin{proof}
Without loss of generality, we assume that the missing color of $H_w$ and $H$ is color $1$. 
Let $p_2 \in P_2, \ldots, p_d \in P_d$ be the colorful points on $H$ and $q_2 \in P_2, \ldots, q_d \in P_d$ be the colorful points on $H_w$. 
Let $x_p,y_p$ (resp. $x_q,y_q$) be the highest ranked points of $P_1$ under $\prec_1$ on either side of $H$ (resp. $H_w$)
As we saw in the proof of Lemma~\ref{lem:not_separated_3}, the condition that the hyperplanes share the \dv is required only in the first case when $x_p,x_q \in H^+ \cap H_w^+$ and $y_p,y_q \in H^- \cap H_w^-$. We can apply the same analysis from Lemma~\ref{lem:not_separated_3} for $H$ and $H_w$ in other cases.

When $x_p,x_q \in H^+ \cap H_w^+$ and $y_p,y_q \in H^- \cap H_w^-$, we have $x_p = x_q$ and $y_p = y_q$. Since $H_w$ is the representative hyperplane, $H_q$ is the lower hyperplane and the \dv of $H$ is larger than that of $H_q$, 
the directed segment $x_q y_q$ intersects them in the following order: $H_w$, $H_q$ then $H$.
Since $w$ is a double-wedge, the orientations of $H_w$ and $H_q$ are consistent. That means, $q \in H_q \cap H_w^-$. 
There are two possibilities:

\begin{itemize}
\item  $q \in H^+ \cap H_w^-$: using a similar argument as in the proof of Lemma~\ref{lem:not_separated_3}, there must exist another point $r' \in P_1 \cap H^- \cap H_w^+$ because $q \in P_1$ and $H,H_w$ have the same \av. Then, we move $q$ towards $x_p$ along segment $x_pq$ until it hits $H_w$ and move $r'$ towards $x_p$ along segment $x_pr'$ until it hits $H$ (Figure~\ref{figure:ns4-1}). We apply Lemma~\ref{lem:not_separated_2} to the resulting point set and that completes the proof.	

\item $q \in H^{-} \cap H_w^{-}$: in this case both $y_q$ and $q$ lie in $q \in H^{-} \cap H_w^{-}$, but $y_q\neq q$ since that would make the intersection
order along $x_qy_q$ as $H_w,H,H_q$, which is a contradiction.
Along the directed segment $x_qq$, the order of intersection is $H,H_w$ (Figure~\ref{figure:ns4-2}) while on the directed segment $x_qy_q$ it is $H_w,H$.
So we move $y_q$ towards $x_q$ along $x_qy_q$ until it hits $H$, and move $q$ towards $x_q$ along $x_qq$ until it hits $H_w$.
Then we re-use the proof from Lemma~\ref{lem:not_separated_2}.
\end{itemize}
Thus, the claim is proven.
\end{proof}

\begin{figure}[!ht]
\centering
\begin{subfigure}[t]{\textwidth}
	\centering
	\includegraphics[width=0.6\textwidth,page=1]{violation-double-wedge.pdf}
	\caption{When $q \in H^+ \cap H_w^-$, there exists a point $r' \in P_1 \cap H^- \cap H_w^+$. We move $r'$ and $q$ onto $H$ and $H_w$ respectively.}
	\label{figure:ns4-1}
\end{subfigure}
\begin{subfigure}[t]{\textwidth}
	\centering
	\includegraphics[width=0.6\textwidth,page=2]{violation-double-wedge.pdf}
	\caption{When $q \in H^{-} \cap H_w^{-}$, we move $y_q$ and $q$ onto $H$ and $H_w$ respectively.}
	\label{figure:ns4-2}
\end{subfigure}
\caption{Some figures for Lemma~\ref{lem:not_separated_4}. The small red points represent the new positions after moving those points.}
\label{figure:ns4}
\end{figure}

In fact, the conditions in Lemmas~\ref{lem:not_separated_2}, \ref{lem:not_separated_3} and \ref{lem:not_separated_4} can also serve as certificates of violation of well-separation. Because we want to simplify the statement in \ref{sol:GV2}, from the results of Lemmas~\ref{lem:not_separated_2}, \ref{lem:not_separated_3} and \ref{lem:not_separated_4} we find a colorful set that lies on a $(d-2)$-flat, and then we apply Lemma~\ref{lemma:ws-formats} to compute the index sets $I,J$ as a type \ref{sol:GV2} solution. The reason for not choosing a colorful set as a certificate is that the representation of $I,J$ needs fewer bits than that for a colorful set. Moreover, the bit representation of the violating colorful set may not guarantee that they lie on a $(d-2)$-flat because of the rounding error. Since most of the computations are done by solving linear systems, we believe that a violating colorful set can be represented by $\poly(n,d,m_0)$ bits. If this fails, we could find a set of feasible solutions that lie very close to a $(d-2)$-flat. Then, we can project the set onto the $(d-2)$-flat and one could argue that Lemma~\ref{lemma:ws-formats} would give a correct partition $(I,J)$.

In the following lemmas, we show how to reduce any violation solution of $\cal I'$ to a violation solution of $\cal I$ to complete the reduction.

\begin{lemma}
	\label{lem:U1_violation}
	Let $v$ be a type \ref{sol:U1} solution of the constructed \UEOPL instance $\cal I'$ for which the lower hyperplane has $\alpha$-vector different from the target vector $(\alpha_1,\ldots,\alpha_d)$. Then, we can compute a type \ref{sol:GV1} or \ref{sol:GV2} solution of \GHS instance $\cal I$ in $\poly(n,d,m_0)$ time.
\end{lemma}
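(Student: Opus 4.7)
The plan is to determine which step of procedure $\S$ forced $\S(v)$ to behave in a way that makes $v$ a type~\ref{sol:U1} solution, and then to read off the geometric certificate directly from the failing double-wedge or from the hyperplane pair involved.

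Write $v = f_v(R,p,q)$. Because the $\alpha$-vector of $H_q$ differs from the target, Step~\ref{S:endpt} was not triggered. The self-loops at Steps~\ref{S:invalid}, \ref{S:inconsistent}, and \ref{S:notPath} are matched by the identical conditions in Steps~\ref{P:invalid}, \ref{P:inconsistent}, and \ref{P:notPath}, so in each of those cases $\P(v) = v$ and hence $\P(\S(v)) = v$, contradicting the assumption that $v$ is type~\ref{sol:U1}. The cycles introduced at Steps~\ref{S:notGeneral_1} and~\ref{S:notGeneral_2} are symmetrically traversed by Steps~\ref{P:notGeneral_1} and~\ref{P:notGeneral_2}, which again give $\P(\S(v)) = v$. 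The only remaining way for $\S$ to produce an asymmetric self-loop is Step~\ref{S:next_2}, which fires when $NextNeighbor(R,p,q) = (R',p',q')$ but the orientations of $H_{w'}$, $H_{p'}$, $H_{q'}$ are not well-defined or not consistent.

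If the orientations are not well-defined, then, following the check in the proof of Lemma~\ref{lem:verify_encoding}, the segment $xy$ of the missing color $j$ passes through the affine hull of $R'$. Since $R'$ has size $d-1$ and contains $d-1$ distinct colors different from $j$, while $x,y \in P_j$, the set $R' \cup \{x,y\}$ has $d+1$ points and $d$ colors, all lying on a $(d-2)$-flat and therefore on a hyperplane. This is a~\ref{sol:GV1} solution. If instead the orientations are inconsistent, I would apply Lemma~\ref{lem:inconsistent-orientation-0}, extended from hyperplanes to double-wedge rotations as noted in the Remark of Section~\ref{subsection:double-wedge}, to the hyperplane pair underlying the rotation that computed $(R',p',q')$. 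This produces a colorful set $R' \cup \{x'\}$ on a $(d-2)$-flat with $x' \in \conv(P_j)$. Feeding this colorful set into Lemma~\ref{lemma:ws-formats} produces a disjoint pair of index sets $I,J \subset [d]$ with intersecting convex hulls, yielding a~\ref{sol:GV2} certificate.

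The remaining possibility is $\S(v) = v' \neq v$ with $\P(v') \neq v$. Here $v'$ must have passed Step~\ref{S:next_1}, so its orientations are well-defined and consistent; the mismatch then forces $v'$ either to coincide with the canonical starting encoding of Step~\ref{P:head} (so that $v$ and the canonical start are two distinct double-wedges sharing the $\alpha$-vector $(1,\ldots,1)$, to which Lemma~\ref{lem:not_separated_2} or Lemma~\ref{lem:not_separated_3} applies), or to trigger Step~\ref{P:prev_2} of $\P$ because $PrevNeighbor(v')$ fails; the latter reduces verbatim to the two sub-cases above. The main obstacle is the bookkeeping, since the procedures $\S$ and $\P$ have many mirrored branches and one must check that every mismatch that can make $v$ a type~\ref{sol:U1} solution with an incorrect $\alpha$-vector reduces to either a weak general position certificate (a $(d-2)$-flat through $d+1$ points of at least $d-1$ colors) or a well-separation certificate (a colorful $(d-2)$-flat, converted to the partition via Lemma~\ref{lemma:ws-formats}). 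Each certificate extraction is a constant number of $O(d)$-sized linear systems or linear programs, so the overall running time is $\poly(n,d,m_0)$.
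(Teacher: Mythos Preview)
Your treatment of the self-loop at Step~\ref{S:next_2} is correct and matches the paper. The gap is in your analysis of the case $\S(v)=v'\neq v$ with $\P(v')\neq v$.

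You claim that $\P(v')$ must trigger Step~\ref{P:head} or Step~\ref{P:prev_2}. The latter cannot happen: since $v'$ was produced at Step~\ref{S:next_1}, the orientations of $v'$ are consistent, and because the rotational direction in both $NextNeighbor$ and $PrevNeighbor$ is determined by that same orientation, $PrevNeighbor(v')$ recovers exactly $v$ with consistent orientations; hence $\P(v')$ would return $v$ at Step~\ref{P:prev_1}, contradicting $\P(v')\neq v$. What you miss entirely are Steps~\ref{P:notGeneral_1} and~\ref{P:notGeneral_2}. The new lower hyperplane $H_{q'}$ of $v'$ was never tested for extra incidences during the computation of $\S(v)$; if it contains more than $d$ points of $P$, then $\P(v')$ returns $f_v(R',p',q'_j)\neq v$ at Step~\ref{P:notGeneral_2}. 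In the paper's proof these two steps are shown to be the \emph{only} remaining possibilities in this case, and the certificate is the type~\ref{sol:GV1} witness read off from the over-populated hyperplane of $v'=\S(v)$.

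A secondary issue: you assume Steps~\ref{S:notGeneral_1}/\ref{S:notGeneral_2} always close into a cycle matched by Steps~\ref{P:notGeneral_1}/\ref{P:notGeneral_2}. The paper singles out a special case in which the degenerate hyperplane simultaneously witnesses a well-separation failure; then the orientation can flip as one cycles through $p_1,\ldots,p_m$, the output of $\S$ may fail the double-wedge test, and $\P(\S(v))$ becomes a self-loop at Step~\ref{P:invalid} rather than returning $v$. This still yields a~\ref{sol:GV1} certificate from the extra points on the hyperplane, but the case must be accounted for rather than dismissed.
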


\begin{proof}

	By the definition of \ref{sol:U1}, we have $\P(\S(v)) \neq v$, so $v$ is not on a self-loop or a cycle. Also, $v$ cannot be $0^\kappa$. 
	Therefore, $\S(v)$ cannot be returned at \Cref{S:w0,S:invalid,S:inconsistent,S:notPath}.
	Since the lower hyperplane has $\alpha$-vector $\neq (\alpha_1,\ldots,\alpha_d)$, $\S(v)$ also cannot be returned at Step~\ref{S:endpt}.
		
	There is a special case that Step~\ref{S:notGeneral_1} (\ref{S:notGeneral_2}) does not form a cycle, which happens when the hyperplane that contains more than $d$ points and $d$ colors also witnesses a violation of well-separation. The orientation could flip when moving along the path that contains the vertices corresponding to the same double-wedge, so the output $\S(v)$ from Step~\ref{S:notGeneral_1} (\ref{S:notGeneral_2}) is not a valid double-wedge. We can verify this case by checking whether the upper or lower hyperplane of $\S(v)$ contains more than $d$ points, which is a type \ref{sol:GV1} solution.
	
	Now we consider Steps~\ref{S:next_1} and \ref{S:next_2}. There are two possibilities to get $\P(\S(v)) \neq v$: 
\begin{itemize}
\item Case~1: $\S(v) = u \neq v$ and $\P(u) \neq v$, and

\item Case~2: $\S(v) = v$ and $\P(v) = u \neq v$.
\end{itemize}
For Case~1, $u$ is returned at Step~\ref{S:next_1}, which implies that the orientations of $v$ and $u$ are consistent.
Next we consider where $\P(u)$ is returned. From the definition of $NextNeighbor$ and the if-condition of Step~\ref{S:next_1}, $u$ is a double-wedge with consistent orientations and on the canonical path. Hence, $\P(u)$ cannot be returned at \Cref{P:w_0,P:invalid,P:inconsistent,P:head,P:2nd_head,P:notPath}. 
If $\P(u)$ is returned at Step~\ref{P:prev_1} or \ref{P:prev_2}, we claim that it is not possible.
According to {\bf{Procedure}} $NextNeighbor$ and {\bf{Procedure}} $PrevNeighbor$, since the rotational direction is determined by the orientation of the representative hyperplane, $PrevNeighbor(u)$ should return $v$. Then, we have $\P(u)=v$, which contradicts with the assumption of Case~1.
Therefore, the only possibilities of returning are Steps~\ref{P:notGeneral_1} and \ref{P:notGeneral_2}. Then, we check whether the upper or lower hyperplane of $u=\S(v)$ contains more than $d$ points, which is a type \ref{sol:GV1} solution.

For Case~2, let $w'=(R',p',q')$ be the output of $NextNeighbor(v)$. Since $\S(v) = v$, which is returned at Step~\ref{S:next_2}, the orientations of $H_{p'}$, $H_{q'}$ and $H_{w'}$ are either not well-defined or inconsistent. For the case that the orientations are not well-defined, 
the segment $xy$ from the missing color of $R'$ intersects the flat of $R'$, so that the intersection point and $R'$ are a type \ref{sol:GV1} solution as well as the index set of the missing color in $R'$ and the index set of other colors are a type \ref{sol:GV2} solution. For the other case of inconsistent orientations, by Lemma~\ref{lem:inconsistent-orientation-0}, we can find a type \ref{sol:GV2} solution.
\end{proof}

The next lemma is to handle type \ref{sol:UV1} solutions that capture a vertex at which the potential value is not increasing. From the way we construct the graph, it only happens when the weak general position fails.

\begin{lemma}
\label{lem:UV1}
	Let $v$ be a type \ref{sol:UV1} solution of the constructed \UEOPL instance $\cal I'$. Then, a type \ref{sol:GV1} solution of $\cal I$ can be computed in $\mathrm{poly}(n,d)$ time.
\end{lemma}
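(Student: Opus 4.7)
The plan is to trace through {\bf Procedure} $\S$ to determine at which step $\S(v)$ is returned, and then show that the three \ref{sol:UV1} conditions force this return to occur at \Cref{S:notGeneral_1} or \Cref{S:notGeneral_2}. The triggering condition of these branches exhibits a point of $P$ lying on $H_p$ (respectively $H_q$) outside of $R\cup\{p\}$ (respectively $R\cup\{q\}$), which, together with $R$, directly yields a set of $d+1$ points on a hyperplane containing at least $d-1$ colors, i.e.\ a \ref{sol:GV1} certificate.

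First, since $\S(v)\neq v$, all self-loop branches of $\S$ can be eliminated: if $v$ is not a valid double-wedge encoding, \Cref{S:invalid} forces $\S(v)=v$, and the same happens at \Cref{S:inconsistent,S:notPath,S:endpt,S:next_2}. Hence $v=(R,p,q)$ is a valid double-wedge on the canonical-path portion of the graph and $\S(v)$ is returned at one of \Cref{S:w0,S:notGeneral_1,S:notGeneral_2,S:next_1}; the branch \Cref{S:w0} only fires for $v=0^\kappa$, a boundary case that I would dispatch separately by inspecting $\delta(w_0)$ and the orientation data of $w_0$ directly.

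The crux, which I expect to be the main obstacle, is to rule out \Cref{S:next_1}. Here $\S(v)=f_v(w')$ with $w'=NextNeighbor(R,p,q)$, and I would show $\V(\S(v))>\V(v)$ in two subcases. If the rotation preserves the \av, then Lemma~\ref{lemma:d-value-strictly} (applied to double-wedges via the remark in Section~\ref{section:double-wedge}) gives a strictly larger \dv $D'>D$, and the $1/N^2$ quantisation coming from Lemma~\ref{lem:dist_precision} promotes this to $\lfloor D'N^2\rfloor>\lfloor DN^2\rfloor$. If instead the \av advances by one in coordinate $t_1$, the first term of $\delta$ gains at least $MN^2$, which strictly dominates any decrease of at most $\lfloor MN^2\rfloor$ in the \dv term (using the upper bound $D\le M$ from Lemma~\ref{lem:dist_precision}). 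Either subcase contradicts $\V(\S(v))\le\V(v)$, so \Cref{S:next_1} is impossible.

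Once \Cref{S:next_1} is excluded, only \Cref{S:notGeneral_1} and \Cref{S:notGeneral_2} remain. In the former, $H_p$ contains some $p_1\in P\setminus(R\cup\{p\})$, so $R\cup\{p,p_1\}$ is a set of $d+1$ points on the hyperplane $H_p$ containing at least the $d-1$ distinct colors of $R$, which is exactly a \ref{sol:GV1} solution; the latter case is symmetric via $H_q$. Reading off this certificate from the encoding of $v$ clearly takes $\poly(n,d)$ time, completing the reduction.
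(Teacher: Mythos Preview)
Your proposal is correct and follows essentially the same case analysis as the paper: eliminate the self-loop branches using $\S(v)\neq v$, rule out \Cref{S:w0} via $\V(\S(0^\kappa))>0$, rule out \Cref{S:next_1} by arguing $\V(\S(v))>\V(v)$, and conclude that \Cref{S:notGeneral_1} or \Cref{S:notGeneral_2} fires, yielding the \ref{sol:GV1} certificate. Your treatment of the \Cref{S:next_1} exclusion is actually more explicit than the paper's, which simply asserts ``from the way we rotate the hyperplane\ldots we can see that $\V(\S(v))>\V(v)$''; your two-subcase split (same \av with strictly larger \dv, versus \av advance dominating the \dv term) is the natural unpacking of that claim, though you should be slightly careful that the strict inequality $D<M$ (not just $D\le M$) is what makes the domination strict in the second subcase.
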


\begin{proof}
	By the definition of \ref{sol:UV1}, we have $\S(v) \neq v$, $\P(\S(v)) = v$ and $\V(\S(v))-\V(v)\le 0$ so $v$ is neither on a self-loop nor an end point of a path. Hence, $\S(v)$ cannot be returned at \cref{S:invalid,S:inconsistent,S:notPath,S:endpt,S:next_2}. The remaining possibilities are \cref{S:w0,S:notGeneral_1,S:notGeneral_2,S:next_1}. Since $\S(0^\kappa)$ has \av $=(1,\ldots,1)$ and \dv $>0$, $\V(\S(0^\kappa)) > \V(0^\kappa)$, so \cref{S:w0} is also not possible. 
	From the way we rotate the hyperplane in {\bf{Procedure}} $NextNeighbor$ and the consistent orientations of $v$ and $\S(v)$ that are confirmed by a return from \cref{S:next_1}, we can see that $\V(\S(v)) > \V(v)$, that is, Step~\ref{S:next_1} is not possible.
	Overall, $\V(\S(v))-\V(v) \leq 0$ can only happen when $\S(v)$ returns at Step~\ref{S:notGeneral_1} or \ref{S:notGeneral_2}, which implies that $v$ and $\S(v)$ represent the same double-wedge geometrically so that they have the same potential value. 
	Let $f_v(R,p,q)=v$. By checking $H_p$ and $H_q$, at least one of them would contain another point in $P$, which is a type \ref{sol:GV1} solution.
\end{proof}

A solution of type \ref{sol:UV2} means that there is another starting point of some other path. The proof of the following lemma is basically the same as Lemma~\ref{lem:U1_violation}. 

\begin{lemma}
	\label{lem:UV2}
	Let $v$ be a type \ref{sol:UV2} solution of the constructed \UEOPL instance $\cal I'$. Then, a type \ref{sol:GV1} or \ref{sol:GV2} solution of $\cal I$ can be computed in $\mathrm{poly}(n,d,m_0)$ time.
\end{lemma}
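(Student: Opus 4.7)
My plan is to mirror the case analysis of Lemma~\ref{lem:U1_violation}, but now centered on the possible return points of {\bf Procedure} $\P$ rather than {\bf Procedure} $\S$. The preliminary observation is that if $v$ is not a valid double-wedge, or if its orientations are inconsistent, then both $\P(v) = v$ (via Step~\ref{P:invalid} or \ref{P:inconsistent}) and $\S(v) = v$ (via the matching Step~\ref{S:invalid} or \ref{S:inconsistent}), which forces $\S(\P(v)) = v$ and contradicts the \ref{sol:UV2} hypothesis. The same symmetry rules out Step~\ref{P:notPath}, since Step~\ref{S:notPath} performs the identical check on $v$. I may therefore assume that $v$ is a valid double-wedge with consistent orientations and with an \av of the canonical-path form.

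For the remaining exits of {\bf Procedure} $\P$ I proceed case by case. Step~\ref{P:head} cannot yield a \ref{sol:UV2} instance: once the anchor $R = \{p'_2,\ldots,p'_d\}$ and the upper point $p = p'_1$ are fixed, the partner $q$ is uniquely determined by the double-wedge condition, so $\S(0^\kappa) = v$. Steps~\ref{P:notGeneral_1} and \ref{P:notGeneral_2} only yield a \ref{sol:UV2} when the cycle they describe is broken; in that case the upper or the lower hyperplane of $v$ already contains more than $d$ points of $P$, immediately providing a type \ref{sol:GV1} certificate.

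The central case is Step~\ref{P:2nd_head}. Here $v$'s upper hyperplane $H_p$ is colorful with $t_1 = 1$ and \av $(1,\ldots,1)$, but $v \neq \S(0^\kappa)$ because Step~\ref{P:head} did not fire; consequently $H_p$ and the reference hyperplane $H_0$ are two distinct colorful hyperplanes with the same \av. I would feed $(H_0, H_p)$ into Lemma~\ref{lem:not_separated_2} to obtain a colorful set on a $(d-2)$-flat, and then apply Lemma~\ref{lemma:ws-formats} to convert it into a disjoint partition $(I, J)$ of $[d]$ with intersecting convex hulls, i.e.\ a type \ref{sol:GV2} solution. Both lemmas run in $\poly(n,d,m_0)$ time, keeping the overall cost within the claimed budget.

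Finally, Steps~\ref{P:prev_1} and \ref{P:prev_2} handle the rotation. If $\P(v) = u$ at Step~\ref{P:prev_1} but $\S(u) \neq v$, then $\S(u)$ must exit at Step~\ref{S:notGeneral_1}, \ref{S:notGeneral_2}, or \ref{S:next_2}; the first two expose a \ref{sol:GV1} certificate, while the last means $NextNeighbor(u)$ produced a hyperplane with undefined or inconsistent orientation, which via Lemma~\ref{lem:inconsistent-orientation-0} (followed by Lemma~\ref{lemma:ws-formats}) yields a \ref{sol:GV2} solution. If instead $\P(v) = v$ at Step~\ref{P:prev_2}, the backward rotation from $v$ itself already witnesses either an ill-defined orientation (giving $d+1$ coplanar points with at least $d-1$ colors, hence \ref{sol:GV1}) or an inconsistent orientation (again handled by Lemma~\ref{lem:inconsistent-orientation-0}). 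The main obstacle I anticipate is verifying in the Step~\ref{P:2nd_head} case that $H_0 \neq H_p$ genuinely holds, so that Lemma~\ref{lem:not_separated_2} applies; the rotation-based cases are essentially parallel to those already treated in Lemma~\ref{lem:U1_violation}.
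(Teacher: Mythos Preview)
Your proposal is correct and follows essentially the same case analysis as the paper's own proof, mirroring the structure of Lemma~\ref{lem:U1_violation} on the $\P$ side. One small point to tighten: in your treatment of Step~\ref{P:head} you assert that the partner $q$ is \emph{uniquely} determined once $R=\{p'_2,\ldots,p'_d\}$ and $p=p'_1$ are fixed, but this uniqueness relies on weak general position. If it fails, two distinct $q,q'$ lie on the same lower hyperplane $H_q$, and $R\cup\{q,q'\}$ is a set of $d+1$ points with at least $d-1$ colors on a common hyperplane, i.e.\ a \ref{sol:GV1} certificate. The paper glosses over this as well, so your argument is no less complete than theirs; just add this one-line patch.
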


\begin{proof}
	By the definition of \ref{sol:UV2}, we have $\S(\P(v)) \neq v \neq 0^n$, which implies that $v$ is the starting point of another path. 
	Following the same reasons as in Lemma~\ref{lem:U1_violation}, we can show that $\P(v)$ cannot be returned at \cref{P:w_0,P:invalid,P:inconsistent,P:head,P:notPath}.
	The same special case can also happen at Step~\ref{P:notGeneral_1} or \ref{P:notGeneral_2}. Hence, we check whether the upper or lower hyperplane of $\P(v)$ contains more than $d$ points, which is a type \ref{sol:GV1} solution.
	
	Let $f_v(R,p,q)=v$. When $\P(v)$ is returned at Step~\ref{P:2nd_head}, $H_p$ is colorful with \av $(1,\ldots,1)$ and the missing color of $R$ is color $1$. Since $\P(v)$ did not return at Step~\ref{P:head}, the standard starting hyperplane $H_0$ and $H_p$ are two different colorful hyperplanes with the same \av. Applying Lemma~\ref{lem:not_separated_2} and then Lemma~\ref{lemma:ws-formats}, we can find the index sets $I,J$ as a type \ref{sol:GV2} solution.
	
	For the case of Steps~\ref{P:prev_1} and \ref{P:prev_2}, it is similar to the proof of Lemma~\ref{lem:U1_violation}.
	There are two possible cases in which $\S(\P(v)) \neq v \neq 0^n$: 
\begin{itemize}
	\item Case~1: $\P(v) = u \neq v$ and $\S(u) \neq v$, and 
	
	\item Case~2: $\P(v) = v$ and $\S(v) = u \neq v$.
\end{itemize}	
	For Case~1, we apply the same argument in the proof of Lemma~\ref{lem:U1_violation} to show that the case is impossible.
	For Case~2, let $w'=(R',p',q')$ be the output of $PrevNeighbor(v)$. The orientations of $w'$ are not well-defined or inconsistent, then we can find a type \ref{sol:GV2} solution.
\end{proof}

In \ref{sol:UV3}, either we have two representative hyperplanes with the same potential value, or we find a hyperplane whose potential value is between the potential values of two consecutive representative hyperplanes in $G$. We use Lemmas~\ref{lem:not_separated_3} and \ref{lem:not_separated_4} to find a violation solution of \GHS.

\begin{lemma}
\label{lem:UV3}
	Let $v$ and $u$ be a type \ref{sol:UV3} solution of the constructed \UEOPL instance $\cal I'$. Then, a type \ref{sol:GV1} or \ref{sol:GV2} solution of $\cal I$ can be computed in $\mathrm{poly}(n,d,m_0)$ time.
\end{lemma}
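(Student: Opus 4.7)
The plan is to case-analyse the two sub-types of a \ref{sol:UV3} witness, reduce each to one of the already-established violation lemmas (\ref{lem:not_separated_2}, \ref{lem:not_separated_3}, or \ref{lem:not_separated_4}), and then convert the resulting colorful $(d-2)$-flat into a partition certificate via Lemma~\ref{lemma:ws-formats}. Since $\S(v)\neq v$ and $\S(u)\neq u$, neither vertex forms a self-loop, so $v$ and $u$ decode to legitimate double-wedges $w_v=(R_v,p_v,q_v)$ and $w_u=(R_u,p_u,q_u)$ lying on a canonical path of the neighborhood graph $G$. Before anything else I would inspect $H_{p_v},H_{q_v},H_{p_u},H_{q_u}$ and the affine hulls of $R_v,R_u$ for incidences of more than $d$ points of $P$; any such incidence is immediately a type \ref{sol:GV1} solution and we are done.

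For the first sub-type $\V(v)=\V(u)$, the definition of the potential $\delta$ together with the precision estimate of Lemma~\ref{lem:dist_precision} guarantees that $w_v$ and $w_u$ share both their \av and their \dv exactly (the \av occupies the high-order block multiplied by $MN^2$ and the \dv occupies the low-order block $\lfloor D N^2\rfloor$). If both representative hyperplanes $H_{w_v}$ and $H_{w_u}$ are colorful, they are two distinct colorful hyperplanes with a common \av and I invoke Lemma~\ref{lem:not_separated_2}. Otherwise at least one is non-colorful; by the canonical-path structure the shared \av determines a unique missing color for non-colorful vertices, so both representative hyperplanes have the same missing color, \av and \dv, and Lemma~\ref{lem:not_separated_3} applies. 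In the mixed branch where only one of the two is colorful, I first shrink the colorful hyperplane into an adjacent non-colorful double-wedge of the same combinatorial type (preserving \av and \dv) and then reduce to Lemma~\ref{lem:not_separated_3} as above.

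For the second sub-type $\V(v)<\V(u)<\V(\S(v))$, I split further on whether the \av of $\S(v)$ equals that of $v$ or increases in exactly one coordinate. In the equal case $v,\S(v)$ are consecutive non-colorful vertices with a shared missing color, and $u$ has the same \av with \dv strictly between those of $v$ and $\S(v)$. The rotation carried out by {\bf Procedure} $NextNeighbor$ from $w_v$ to $w_{\S(v)}$ sweeps the intersection with the segment $xy$ of the missing color monotonically, so by the intermediate value theorem there is a non-colorful hyperplane $H^{\ast}$ along this rotation whose \dv equals that of $H_{w_u}$; since $u$ does not lie on this rotation, $H^{\ast}\neq H_{w_u}$, and both are non-colorful with identical missing color, \av and \dv, so Lemma~\ref{lem:not_separated_3} yields the desired $(d-2)$-flat. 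In the increasing case $w_{\S(v)}$ has a colorful lower hyperplane while $u$ is non-colorful with the same \av as $w_v$; Lemma~\ref{lem:not_separated_4}, applied to $w_{\S(v)}$ together with the hyperplane $H_{w_u}$, supplies the colorful set.

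In every branch a colorful set $\{x_1\in\conv(P_1),\dots,x_d\in\conv(P_d)\}$ lying on a $(d-2)$-flat is produced in $\poly(n,d,m_0)$ time, and Lemma~\ref{lemma:ws-formats} converts it to the required disjoint partition $(I,J)$ giving a type \ref{sol:GV2} solution within the same time bound. The main obstacle I anticipate is the mixed colorful/non-colorful branch of the first sub-type: one must carefully argue that a common potential really pins down the missing-color data and that the auxiliary shrinkage preserves both the \av and the \dv before Lemma~\ref{lem:not_separated_3} can be invoked; the remaining branches are essentially a matter of selecting the matching lemma tailored exactly to their geometric configuration.
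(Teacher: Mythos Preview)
Your overall case structure matches the paper's, and the reductions to Lemmas~\ref{lem:not_separated_2}--\ref{lem:not_separated_4} followed by Lemma~\ref{lemma:ws-formats} are the right tools. There are, however, two concrete problems.

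First, a minor conceptual slip in the $\V(v)=\V(u)$ branch: the representative hyperplane $H_w$ of a double-wedge is the angular bisector and contains only the $d-1$ anchor points (the interior of a double-wedge is empty of $P$), so $H_{w_v}$ and $H_{w_u}$ are \emph{never} colorful. Your colorful/mixed sub-branches and the ``shrinkage'' step cannot arise; the argument collapses directly to Lemma~\ref{lem:not_separated_3}, exactly as the paper does.

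Second, and more seriously, your treatment of the ``increasing'' sub-case of $\V(v)<\V(u)<\V(\S(v))$ is incorrect and incomplete. The colorful hyperplane shared by $v$ and $\S(v)$ is the \emph{lower} hyperplane of $v$ (and the upper hyperplane of $\S(v)$), so it is $v$, not $w_{\S(v)}$, that satisfies the hypothesis of Lemma~\ref{lem:not_separated_4}. Moreover, since the \av\ block dominates the \dv\ block in $\delta$, the \av\ of $u$ may equal that of either $v$ or $\S(v)$; you only treat the former. Even in that former case you must further split on whether the \dv\ of $H_{w_u}$ exceeds the \dv\ of the colorful lower hyperplane $H_{q_v}$: if it does not, one interpolates inside the double-wedge $v$ to manufacture a hyperplane with matching \dv\ and invokes Lemma~\ref{lem:not_separated_3}; only when it does exceed can Lemma~\ref{lem:not_separated_4} be applied (to $v$ and $H_{w_u}$, with the required \dv\ inequality now verified). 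The paper handles all three sub-sub-cases explicitly, as well as the boundary case $v=0^\kappa$; your single invocation of Lemma~\ref{lem:not_separated_4} on $w_{\S(v)}$ does not meet that lemma's hypotheses on either the colorful-lower-hyperplane side or the matching-\av\ side.
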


\begin{proof}
By the definition of \ref{sol:UV3}, we have $v\neq u$, $\S(v) \neq v$, $\S(u) \neq u$, and either $\V(v) = \V(u)$ or $\V(v) < \V(u) < \V(\S(v))$, so $v,u$ are not on self-loops, that is, $\S(v)$ and $\S(u)$ are not returned at \Cref{S:invalid,S:inconsistent,S:notPath}. Hence, $v$ may be $0^\kappa$ or the \av{}s of $v,u$ are in the form of $(\alpha_1,\ldots, \alpha_{t_1-1},b_{t_1},1,\ldots,1)$, where $t_1$ is the missing color of the anchor $R$ and $b_{t_1} < \alpha_{t_1}$.
For the first case $\V(v) = \V(u)$, $v$ cannot be $0^\kappa$.
$v$ and $u$ may represent the same double-wedge because of the violation of weak general position. We check whether the upper or lower hyperplanes of $v,u$ contain more than $d$ points in $P$. If yes, we can return $d+1$ points on that hyperplane as a type~\ref{sol:GV1} solution. Otherwise, from the way we define $\V$ we know that $H_v$ and $H_u$ have the same missing color, \av and \dv. By Lemma~\ref{lem:not_separated_3} and then Lemma~\ref{lemma:ws-formats}, we can find a type~\ref{sol:GV2} solution.

For the second case $\V(v) < \V(u) < \V(\S(v))$, $\S(v)$ can only be returned by Step~\ref{S:w0} or \ref{S:next_1} to increase the potential value $\V(v) < \V(\S(v))$. 
The idea is to find a hyperplane $H$ inside the double-wedge $v$ or $\S(v) $ such that the \dv of $H$ is the same as that of $H_u$. 
When $v=0^\kappa$, both $\S(v)$ and $u$ have \av $(1,\ldots,1)$ and the \dv of $u$ is between $0$ and the \dv of $H_{S(v)}$. Let $f_v(R',p',q') = \S(v)$. During the rotation of a hyperplane from $H_{p'}$ to $H_{S(v)}$ anchored at $R'$, we can find an intermediate hyperplane $H$ that has the same missing color, \av and \dv as $H_u$. 
If $v$ and $\S(v)$ share a non-colorful hyperplane,
then $H_v$ and $H_{\S(v)}$ have the same missing color and \av. By the definition of $\V$, $H_u$ also has the same missing color and \av. 
Therefore, the \dv of $H_u$ is between the \dv{}s of $H_v$ and $H_{\S(v)}$. 
We apply the same idea as above to rotate a hyperplane within the double-wedges $v$ and $\S(v)$ to find a hyperplane $H$ that has the same missing color, \av and \dv as $H_u$. 
For the other case when both $v$ and $\S(v)$ share a colorful hyperplane, the $t_1$-th coordinate of the \av of $\S(v)$ is increased by 1 comparing to that of $v$, where $t_1$ is the missing color of the anchor $R$ of $v$. If $u$ has the same \av as $\S(v)$, then the \dv of $H_u$ is between $0$ and the \dv of $H_{\S(v)}$. We can find a hyperplane $H$ inside the double-wedge $\S(v)$ that has the same missing color, \av and \dv as $H_u$. If $u$ has the same \av as $v$ and the \dv of $u$ is at most the \dv of the lower hyperplane of $v$, we can still find $H$ in the same way as above.
In all these cases, we can apply Lemma~\ref{lem:not_separated_3} on $H$ and $H_u$ and then apply Lemma~\ref{lemma:ws-formats} to find a type~\ref{sol:GV2} solution.
Otherwise, we have that the \dv of $u$ is larger than the \dv of the colorful lower hyperplane of $v$. Now $v$ and $H_u$ satisfy the condition of Lemma~\ref{lem:not_separated_4} and we again apply Lemma~\ref{lemma:ws-formats} to find a type~\ref{sol:GV2} solution.
\end{proof}

Now we discuss how to find $H_0$. We pick any transversal hyperplane $H''$ passing through some colorful set $p''_1 \in  P_1,\ldots,p''_d \in P_d$.
Let  $b=(b_1,\ldots,b_d)$ denote the \av for $H''$. Without loss of generality, we rearrange the order of the colors such that all $b_i \neq 1$ occupy the first few coordinates of the \av. Let $k$ denote the number of coordinates of $b$ that not equal to one. Let $R = (p''_1,\ldots,p''_{k-1},p''_{k+1},\ldots,p''_d)$. We search for a point $q\in P$ such that $(R,p''_{k},q)$ is a double-wedge with \av $b=(b_1,\ldots,b_d)$. Basically, we reverse the roles of $H_0$ and $H''$ as in the above reduction so that the standard vertex $0^\kappa$ connects to $v_b=f_v(R,p''_{k},q)$ and we look for the end vertex $v_0$ corresponding to $H_0$.
If there are no violations in $\cal I$, there exists a canonical path between $v_b$ and $v_0$. Similarly, we reverse the roles of $\S$ and $\P$, and $\V$ is now the function that measures the distance of the current vertex $(R,p,q)$ with \av and \dv to the starting vertex $v_b$ that has \av $(b_1,\ldots,b_d)$.
If $H_0$ cannot be found from the reduction, we can apply the same argument in \Cref{lem:U1_violation,lem:UV1,lem:UV2,lem:UV3} to output a type \ref{sol:GV1} or \ref{sol:GV2} solution of $\cal I$.

\section{Conclusion and future work}
\label{section:conclusion}

We gave a complexity-theoretic upper bound for \GHS.
No hardness results are known for this search problem, and the next question is determining if this is hard for \cUEOPL.
One challenge is that \UEOPL is formulated as Boolean circuits, whereas \GHS is purely geometric.
Emulating circuits using purely geometric arguments is highly non-trivial.
Filos-Ratsikas and Goldberg showed a reduction of this form in~\cite{fg-hamsandwich}.
They reduced the \PPA-complete Tucker circuit to \HS, going via the \emph{Consensus-Halving}~\cite{ss-consensus}, 
and the \emph{Necklace-splitting problems}~\cite{aw-necklace}.
It could be a worthwhile exercise to investigate if their techniques can provide insights for hardness of \GHS.

Some related problems are determining the complexity of answering whether a point set is well-separated,
whether it is in weak general position, or whether a given $\alpha$-cut exists for the point set.
A given $\alpha$-cut may exist even when both assumptions are violated.
On a related note, deciding whether the Linear Complementarity problem has a solution is \NP-complete~\cite{chung-lcp}.
The solution is unique if the problem involves a $P$-matrix, but checking this condition is \coNP-complete~\cite{coxson-matrix}.
However, using witnesses to verify whether a matrix is P-matrix or not, a total search version
is shown to be in \cUEOPL.
Our result for \GHS would go in a similar vein, if the complexities of the above problems were better determined.

Another line to work could be to determine the computational complexities of other extensions of the Ham-Sandwich theorem.
For other geometric problems that are total and admit unique solutions, it could be worthwhile to explore their place in the class \cUEOPL.
Faster algorithms for computing the $\alpha$-cut can also be explored.


\begin{thebibliography}{LGMM19}

\bibitem[ABB20]{abb-tucker}
James Aisenberg, Maria~Luisa Bonet, and Sam Buss.
\newblock 2-{D} {T}ucker is {PPA}-complete.
\newblock {\em J. Comput. System Sci.}, 108:92--103, 2020.

\bibitem[AW86]{aw-necklace}
Noga Alon and Douglas~B. West.
\newblock The {B}orsuk-{U}lam theorem and bisection of necklaces.
\newblock {\em Proc. American Mathematical Society}, 98(4):623--628, 1986.

\bibitem[B{\'{a}}r82]{barany-cc}
Imre B{\'{a}}r{\'{a}}ny.
\newblock A generalization of {C}arath{\'{e}}odory's theorem.
\newblock {\em Discrete Mathematics}, 40(2-3):141--152, 1982.

\bibitem[Ber05]{bereg-fans}
Sergey Bereg.
\newblock Equipartitions of measures by 2-fans.
\newblock {\em Discrete Comput. Geom.}, 34(1):87--96, 2005.

\bibitem[Ber12]{bereg-algo}
Sergey Bereg.
\newblock Computing generalized ham-sandwich cuts.
\newblock {\em Inform. Process. Lett.}, 112(13):532--534, 2012.

\bibitem[BHJ08]{bhj-ghs}
Imre B{\'a}r{\'a}ny, Alfredo Hubard, and Jes{\'u}s Jer{\'o}nimo.
\newblock Slicing convex sets and measures by a hyperplane.
\newblock {\em Discrete Comput. Geom.}, 39(1):67--75, 2008.

\bibitem[BM01]{bm-fans}
Imre B{\'{a}}r{\'{a}}ny and Ji{\v{r}}{\'{i}} Matou{v{s}}ek.
\newblock Simultaneous partitions of measures by \emph{k}-fans.
\newblock {\em Discrete Comput. Geom.}, 25(3):317--334, 2001.

\bibitem[BPS19]{bps-pizza}
Luis Barba, Alexander Pilz, and Patrick Schnider.
\newblock Sharing a pizza: bisecting masses with two cuts.
\newblock {\em CoRR}, abs/1904.02502, 2019.

\bibitem[Car07]{caratheodory}
Constantin Carath{\'e}odory.
\newblock {\"U}ber den {V}ariabilit{\"a}tsbereich der {K}oeffizienten von
  {P}otenzreihen, die gegebene werte nicht annehmen.
\newblock {\em Mathematische Annalen}, 64(1):95--115, 1907.

\bibitem[CDT09]{cdt-nash}
Xi~Chen, Xiaotie Deng, and Shang-Hua Teng.
\newblock Settling the complexity of computing two-player {N}ash equilibria.
\newblock {\em J. ACM}, 56(3):14:1--14:57, 2009.

\bibitem[Chu89]{chung-lcp}
S.~J. Chung.
\newblock {NP}-completeness of the {L}inear {C}omplementarity {P}roblem.
\newblock {\em Journal of Optimization Theory and Applications},
  60(3):393--399, 1989.

\bibitem[CM84]{cm-social}
G.~W. Cox and Richard~D. McKelvey.
\newblock A {H}am sandwich theorem for general measures.
\newblock {\em Social Choice and Welfare}, 1(1):75--83, 1984.

\bibitem[CM20]{cm-tverberg}
Aruni Choudhary and Wolfgang Mulzer.
\newblock No-dimensional {T}verberg theorems and algorithms.
\newblock In {\em Proc. 36th Int. Sympos. Comput. Geom. (SoCG)}, page to
  appear, 2020.

\bibitem[Cox94]{coxson-matrix}
Gregory~E. Coxson.
\newblock The {P}-matrix problem is co-{NP}-complete.
\newblock {\em Mathematical Programming}, 64(1):173--178, 1994.

\bibitem[DP11]{dp-cls}
Constantinos Daskalakis and Christos~H. Papadimitriou.
\newblock Continuous {L}ocal {S}earch.
\newblock In {\em Proc. 22nd Annu. ACM-SIAM Sympos. Discrete Algorithms
  (SODA)}, pages 790--804, 2011.

\bibitem[FGMS19]{fgms-ueopl}
John Fearnley, Spencer Gordon, Ruta Mehta, and Rahul Savani.
\newblock Unique {E}nd of {P}otential {L}ine.
\newblock In {\em Proc. 46th Internat. Colloq. Automata Lang. Program.
  (ICALP)}, pages 56:1--56:15, 2019.

\bibitem[FKNN17]{fknn-generalpos}
Vincent Froese, Iyad~A. Kanj, Andr{\'{e}} Nichterlein, and Rolf Niedermeier.
\newblock Finding points in general position.
\newblock {\em Internat. J. Comput. Geom. Appl.}, 27(4):277--296, 2017.

\bibitem[FRG19]{fg-hamsandwich}
Aris Filos-Ratsikas and Paul~W. Goldberg.
\newblock The complexity of splitting {N}ecklaces and bisecting {H}am
  sandwiches.
\newblock In {\em Proc. 51st Annu. ACM Sympos. Theory Comput. (STOC)}, pages
  638--649, 2019.

\bibitem[JPY88]{jpy-pls}
David~S. Johnson, Christos~H. Papadimitriou, and Mihalis Yannakakis.
\newblock How easy is local search?
\newblock {\em J. Comput. System Sci.}, 37(1):79--100, 1988.

\bibitem[LGMM19]{dgmm-survey}
Jes\'us~De Loera, Xavier Goaoc, Fr{\'e}d{\'e}ric Meunier, and Nabil Mustafa.
\newblock The discrete yet ubiquitous theorems of {C}arath{\'{e}}odory,
  {H}elly, {S}perner, {T}ucker, and {T}verberg.
\newblock {\em Bulletin of the American Mathematical Society}, 56(3):415--511,
  2019.

\bibitem[LMS94]{lms-hs}
Chi-Yuan Lo, Ji{\v{r}}{\'{i}} Matou{\v{s}}ek, and William~L. Steiger.
\newblock Algorithms for {H}am-sandwich cuts.
\newblock {\em Discrete Comput. Geom.}, 11:433--452, 1994.

\bibitem[Mat02]{mt-lectures}
Ji{\v{r}}{\'{i}} Matou{\v{s}}ek.
\newblock {\em Lectures on {D}iscrete {G}eometry}, volume 212 of {\em Graduate
  texts in mathematics}.
\newblock Springer, 2002.

\bibitem[Mat03]{mt-bubook}
Ji\v{r}\'{i} Matou\v{s}ek.
\newblock {\em Using the {B}orsuk-Ulam theorem}.
\newblock Springer-Verlag Berlin Heidelberg, 2003.

\bibitem[MMSS17]{mmss-cc}
Fr\'ed\'eric Meunier, Wolfgang Mulzer, Pauline Sarrabezolles, and Yannik Stein.
\newblock The rainbow at the end of the line: A {PPAD} formulation of the
  {C}olorful {C}arath{\'e}odory theorem with applications.
\newblock In {\em Proc. 28th Annu. ACM-SIAM Sympos. Discrete Algorithms
  (SODA)}, pages 1342--1351, 2017.

\bibitem[MS18]{MulzerSt18}
Wolfgang Mulzer and Yannik Stein.
\newblock Computational aspects of the {C}olorful carath{\'{e}}odory theorem.
\newblock {\em Discrete Comput. Geom.}, 60(3):720--755, 2018.

\bibitem[Pap94]{papadi-main}
Christos~H. Papadimitriou.
\newblock On the complexity of the parity argument and other inefficient proofs
  of existence.
\newblock {\em J. Comput. System Sci.}, 48(3):498--532, 1994.

\bibitem[Rad21]{radon}
Johann Radon.
\newblock Mengen konvexer {K}{\"o}rper, die einen gemeinsamen {P}unkt
  enthalten.
\newblock {\em Mathematische Annalen}, 83:113--115, 1921.

\bibitem[Rad46]{rado-centerpoint}
Richard Rado.
\newblock A theorem on general measure.
\newblock {\em Journal of the London Mathematical Society}, s1-21(4):291--300,
  1946.

\bibitem[Sar92]{sarkaria}
Karanbir~S. Sarkaria.
\newblock Tverberg's theorem via number fields.
\newblock {\em Israel Journal of Mathematics}, 79(2-3):317--320, 1992.

\bibitem[Sch19a]{schnider-wedge}
Patrick Schnider.
\newblock Equipartitions with wedges and cones.
\newblock {\em CoRR}, abs/1910.13352, 2019.

\bibitem[Sch19b]{schnider-hsct}
Patrick Schnider.
\newblock Ham-sandwich cuts and center transversals in subspaces.
\newblock In {\em Proc. 35th Int. Sympos. Comput. Geom. (SoCG)}, pages
  56:1--56:15, 2019.

\bibitem[SS03]{ss-consensus}
Forest~W. Simmons and Francis~Edward Su.
\newblock Consensus-halving via theorems of {B}orsuk-{U}lam and {T}ucker.
\newblock {\em Mathematical Social Sciences}, 45(1):15--25, 2003.

\bibitem[ST42]{st-sandwich}
Arthur~H. Stone and John~W. Tukey.
\newblock Generalized “{S}andwich” theorems.
\newblock {\em Duke Mathematical Journal}, 9(2):356--359, 06 1942.

\bibitem[SZ10]{sz-ghs}
William Steiger and Jihui Zhao.
\newblock Generalized {H}am-sandwich cuts.
\newblock {\em Discrete Comput. Geom.}, 44(3):535--545, 2010.

\bibitem[Tuc46]{tucker}
Albert~W. Tucker.
\newblock Some topological properties of disk and sphere.
\newblock In {\em Proc. {F}irst {C}anadian {M}ath. {C}ongress, {M}ontreal,
  1945}, pages 285--309. University of Toronto Press, Toronto, 1946.

\bibitem[Tve66]{tverberg-original}
Helge Tverberg.
\newblock A generalization of {R}adon's theorem.
\newblock {\em Journal of the London Mathematical Society}, s1-41(1):123--128,
  1966.

\bibitem[ZV90]{zv-extension}
Rade~T. Zivaljevi\'{c} and Sini\v{s}a~T. Vre\'{c}ica.
\newblock An extension of the {H}am sandwich theorem.
\newblock {\em Bulletin of the London Mathematical Society}, 22(2):183--186,
  1990.

\end{thebibliography}
\newcommand{\SortNoop}[1]{}

\end{document}